\def\withcolors{0}
\def\withnotes{0}
\renewcommand{\epsilon}{\ve}
\def\ve{\varepsilon}
\newcommand{\pr}[2][]{\mathrm{Pr}\ifthenelse{\not\equal{}{#1}}{_{#1}}{}\!\left[#2\right]}
\providecommand{\poly}{\operatorname*{poly}}
\newcommand{\id}{\mathbb{I}}
\newcommand{\expectation}[1]{\mathbb{E}\left[#1\right]}
\newcommand{\absv}[1]{\left |#1\right |}
\newcommand{\probof}[1]{\Pr\Paren{#1}}
\newcommand{\variance}[1]{\Var\Paren{#1}}
\def \Bin{Bin}
\def \covariance {\Cov}
  \newcommand{\gcolor}[1]{{\color{red}#1}}
  \newcommand{\acolor}[1]{{\color{purple}#1}}
  \newcommand{\gcolor}[1]{{#1}}
  \newcommand{\gnote}[1]{\par\gcolor{\textbf{G: }\sf #1}} 
  \newcommand{\gfootnote}[1]{\footnote{{\bf \gcolor{Gautam}}: {#1}}}
  \newcommand{\anote}[1]{\par\acolor{\textbf{J: }\sf #1}} 
  \newcommand{\gnote}[1]{}
  \newcommand{\anote}[1]{}
  \newcommand{\gfootnote}[1]{}
\newcommand{\ignore}[1]{\leavevmode\unskip} 
\newcommand{\AdjM}{A}
\newcommand{\Bounda}{\kappa}
\newcommand{\Boundb}{\eta}
\newcommand{\regpar}{\alpha}
\newcommand{\tcr}[1]{#1}
\newcommand{\jd}[1]{\textcolor{purple}{#1}}
\newcommand{\newhz}[1]{#1}
\def \ns{n}
\newcommand{\gnp}[2]{G(#1,#2)}
\newcommand{\corr}{\gamma}
\newcommand{\erp}[1]{\mathcal{E}\Paren{#1}}
\newcommand{\adver}{\mathcal{A}}
\newcommand{\bad}{B}
\newcommand{\setsize}{C}
\newcommand{\goodnodes}{F}
\newcommand{\goodnodesub}{\goodnodes'}
\newcommand{\goodnodesubanother}{\goodnodes''}
\title{Robust Estimation for Random Graphs}
\author{
Jayadev Acharya\thanks{School of Electrical and Computer Engineering, Cornell University. {\tt acharya@cornell.edu}. Supported by NSF-CCF-1846300 (CAREER),
    NSF-CCF-1815893, and a Google Faculty Research Award.}
    \and
Ayush Jain\thanks{School of Electrical and Computer Engineering. {\tt  ayjain@eng.ucsd.edu}. Supported by NSF-CCF-1564355 and NSF-CCF-1619448.}
\and
Gautam Kamath\thanks{Cheriton School of Computer Science, University of Waterloo. {\tt g@csail.mit.edu}. Supported by an NSERC Discovery Grant and a University of Waterloo startup grant.}
\and
Ananda Theertha Suresh\thanks{Google Research, New York. {\tt theertha@google.com}}
\and
Huanyu Zhang\thanks{Meta. Most of this work was done while the author was a graduate student at Cornell University. {\tt hz388@cornell.edu}. Supported by NSF-CCF-1815893.}

}
\begin{document}

\maketitle
\thispagestyle{empty}

\begin{abstract}
We study the problem of robustly estimating the parameter $p$ of an Erd\H{o}s-R\'enyi random graph on $n$ nodes, where a $\gamma$ fraction of nodes may be adversarially corrupted.
  After showing the deficiencies of canonical estimators, we design a computationally-efficient spectral algorithm which estimates $p$ up to accuracy $\tilde O(\sqrt{p(1-p)}/n + \gamma\sqrt{p(1-p)} /\sqrt{n}+ \gamma/n)$ for $\corr < 1/60$.
  Furthermore, we give an inefficient algorithm with similar accuracy for all $\corr <1/2$, the information-theoretic limit.
  Finally, we prove a nearly-matching statistical lower bound, showing that the error of our algorithms is optimal up to logarithmic factors.
\end{abstract}

\newpage
\section{Introduction}
\setcounter{page}{1}
Finding underlying patterns and structure in data is a central task in machine learning and statistics. Typically, such structures are induced by modelling assumptions on the data generating procedure.
While they offer mathematical convenience, real data generally does not match with these idealized models, for reasons ranging from model misspecification to adversarial data poisoning. 
Thus for learning algorithms to be effective in the wild, we require methods that are \emph{robust} to deviations from the assumed model.

With this motivation, we initiate the study of robust estimation for random graph models.
Specifically, we will be concerned with the
Erd\H{o}s-R\'enyi (ER) random graph model~\cite{Gilbert59,ErdosR59}.\footnote{This model was introduced by Gilbert~\cite{Gilbert59}, simultaneously with the related $G(n,m)$ model of Erd\H{o}s and R\'enyi~\cite{ErdosR59}. Nevertheless, the community refers to both models Erd\H{o}s-R\'enyi graphs.}
\begin{definition}[Erd\H{o}s-R\'enyi graphs]
  \label{def:gnp}
The Erd\H{o}s-R\'enyi random graph model on $n$ nodes with parameter $p \in [0, 1]$, denoted as  $\gnp{n}{p}$, is the distribution over graphs on $n$ nodes where each edge is present with probability $p$, independently of the other edges.
\end{definition}

We consider graphs generated according to the Erd\H{o}s-R\'enyi random graph model, but which then have a constant fraction of their nodes \emph{corrupted} by an adversary. When a node is corrupted, the adversary can arbitrarily modify its neighborhood.  
This setting is naturally motivated by social networks, where random graphs are a common modelling assumption~\cite{NewmanWS02}.
Even if a fraction of individuals in the network are malicious actors, we still wish to perform inference with respect to the regular users.
Apart from adversarial settings, tools for robust analysis of graphs may also assist in addressing deficiencies of existing models, such as in model misspecification.
For example, certain random graph models have  been criticized for not capturing various statistics of real-world networks~\cite{NewmanWS02}, and some notion of robustness may facilitate better modelling.


%

\subsection{Problem Setup} Let $\corr\in[0,1]$ denote the fraction of corrupted nodes, and $G\sim\gnp{n}{p}$ be a random graph, where $p$ is unknown. Without loss of generality, we assume that the node set is $[n]:=\{1,\ldots,n\}$.
An adversary $\adver$ is then given $G$, and is allowed to arbitrarily `rewire' the edges adjacent to a set $B\subseteq[n]$ of nodes of size at most $\corr n$, resulting in a graph $\adver(G)$. In other words, the adversary can change the status of any edge with at least one end point in $B$. We call $B$ the set of \emph{corrupted nodes}.
We consider two kinds of adversaries. 
\begin{itemize}
	\item ~$\corr$-\emph{omniscient adversary}: The adversary knows the true value of the edge probability $p$ and observes the realization of the graph $G\sim G(n,p)$. They then choose $B$ and how to rewire its edges. 
\item ~$\corr$-\emph{oblivious adversary}: The adversary knows the true value of the edge probability $p$.
They must choose $B$ and the distribution of edges from $B$ without knowing the realization $G$. 
\end{itemize}
Note that the oblivious adversary is weaker than the omniscient adversary. Given a corrupted graph $\adver(G)$, our goal is to output $\hat p
\left(\adver(G)\right)$, an estimate of the true edge probability $p$.

\subsection{Results}
\label{sec:Results}
We first analyze standard estimators from the robust statistics toolkit, and show that they provide sub-optimal rates. We then propose a computationally-efficient spectral algorithm  to estimate $p$ with improved rates.
Finally, we prove a lower bound for this problem, showing that our algorithms are optimal up to logarithmic factors. We note that our upper bounds hold for omniscient adversaries, whereas the lower bounds are tight even against the weaker oblivious adversary.

\subsubsection{Standard Robust Estimators and  Natural Variants}
At first glance, the problem appears deceptively simple, as our goal is to estimate a single univariate parameter $p$. 
A standard technique is the maximum likelihood estimator, which in this case is the empirical edge density. We call the following the \emph{mean estimator}
 \begin{align}
 \hat{p}_{\rm mean}(\adver(G)) = \frac{\text{\# of edges present in $\adver{(G)}$}}{{n \choose 2}}.
 \label{eqn:empirical-mean-estimator}
 \end{align}
 
In robust statistics, the median often provides better guarantees than the mean. Let $\deg(i)$ denote the degree of node $i\in[n]$ in $\adver(G)$. The \emph{median estimator} is given by
\begin{align}
 \hat{p}_{\rm med}(\adver(G)) = \frac{{\rm Median}\{\deg(1),\ldots, \deg(n)\}}{{n-1}}.
 \label{eqn:empirical-median-estimator}
 \end{align}

Absent corruptions (i.e., $\corr = 0$), we have $\adver(G)=G$.
In this simple setting, the mean and median are both very accurate.
Specifically, it is not hard to show that $\left|\hat p_{\rm mean}(G) - p\right| \le  O\left(\sqrt{p(1-p)}/n\right)$ 
 and  $\left|\hat p_{\rm med}(G) - p\right| \le O \left(1/n\right)$ (Lemma~\ref{lem:without-corruptions}).
However, both estimators perform much worse under even mild corruption.
In Lemma~\ref{lem:with-corruptions} we describe and analyze a simple oblivious adversary $\adver$ such that both the mean and median estimator have $\left|\hat p(\adver(G)) - p\right| \ge \corr/2$. 
Note that if even a single node is corrupted (i.e., $\corr=1/n$), the ``price of robustness'' (informally, the additional error term(s) introduced in the corrupted setting)  dominates the baseline $O(1/n)$ error in the uncorrupted setting.

The adversary against the mean and median estimators is easy to describe: either add or remove all edges incident to the nodes in $B$.
This suggests the strategy of first pruning a set of $c\gamma n$ nodes with the largest and smallest degrees and then applying either the mean or median estimator to the resulting graph.
These \emph{prune-then-mean/median} algorithms are described in Algorithm~\ref{algo:prune}.
Despite this additional step, the pruned estimators are still deficient.
We design an oblivious adversary such that the prune-then-median estimate satisfies $\left|\hat p(\adver(G)) - p\right| \ge \Omega (\corr)$ and the prune-then-mean estimate satisfies  $\left|\hat p(\adver(G)) - p\right| \ge \Omega(\corr^2)$ (Theorem~\ref{thm:prune-median-lb}).
Interestingly, we show the tightness of both these bounds, showing that prune-then-mean improves the error to $O\Paren{\gamma^2}$ (Theorem~\ref{thm:prune-median-ub}). 
These results are summarized in the theorem below.  
\begin{theorem}[Informal]
The price of robustness of the prune-then-mean/median estimators are $\Theta(\corr^2)$ and $\Theta(\corr)$, respectively.
\end{theorem}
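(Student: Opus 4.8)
The theorem asserts that the prune-then-mean and prune-then-median estimators have price of robustness $\Theta(\corr^2)$ and $\Theta(\corr)$ respectively. This is a combination of matching lower and upper bounds (Theorems~\ref{thm:prune-median-lb}, \ref{thm:prune-median-ub}, and the uncorrupted baseline of Lemma~\ref{lem:without-corruptions}), so the plan is to prove each direction separately, for each of the two estimators.

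\emph{Upper bounds.} I would start with prune-then-mean. After pruning the $c\corr n$ highest- and $c\corr n$ lowest-degree nodes, the surviving vertex set $S$ still contains at most $\corr n$ corrupted nodes, and — crucially — at least $c\corr n - \corr n$ uncorrupted nodes were pruned from each tail. The key structural observation is that in $\gnp{n}{p}$, the degrees of uncorrupted nodes concentrate in a window of width $\tilde O(\sqrt{n p(1-p)})$ around $(n-1)p$; hence any corrupted node that \emph{survives} the pruning must have degree within roughly this window of the true mean, because otherwise it would have been pruned in favor of a genuine node from the appropriate tail (here $c$ needs to be chosen so there are enough genuine tail nodes to "crowd out" extreme corrupted ones). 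Thus each surviving corrupted node contributes a per-edge deviation of only $\tilde O(\sqrt{p(1-p)/n})$, and there are at most $\corr n$ of them out of $\binom{|S|}{2} = \Theta(n^2)$ pairs, but each corrupted node touches $\le n$ edges, so the total edge-count perturbation is $\tilde O(\corr n \cdot \sqrt{n p(1-p)})$ over $\Theta(n^2)$ pairs, i.e.\ an error of $\tilde O(\corr \sqrt{p(1-p)/n})$ from surviving corrupted nodes — wait, that is not yet $\corr^2$. The actual $\corr^2$ bound must come from a more careful accounting: the adversary's best move is to make corrupted degrees sit exactly at the pruning threshold, which is displaced from the mean by $\Theta(\corr n)$ (since removing the $c\corr n$-th order statistic from a sharply concentrated distribution only moves you $\tilde O(\sqrt{np\log(1/\corr)})$ — no — one must think in terms of how far the $\corr n$-th largest \emph{corrupted-or-not} degree can be). Let me restate: the surviving corrupted nodes each have degree deviation at most the gap between the $c\corr n$-th and the true quantile, which for the binomial is $\tilde O(\sqrt{np(1-p)})$; combined over $\corr n$ nodes this gives perturbation $\tilde O(\corr\sqrt{p(1-p)}/\sqrt n)$ in $\hat p$, not $\corr^2$. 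The genuine $\corr^2$ improvement over the unpruned $\Theta(\corr)$ bound comes from symmetry: surviving corrupted nodes can be pushed to the upper threshold or lower threshold but the adversary gains only on the imbalance, and the imbalance is at most $\corr$ of the $\corr n$ nodes, at distance $\tilde O(n)$ — I would carefully redo this counting, tracking that a corrupted node surviving on the high side has degree at most the $(c-1)\corr n$-th largest true degree, which can itself be as large as $p n + \Theta(\sqrt{n p (1-p)\log(1/\corr)})$ only, but that the \emph{net} effect on the sum of degrees is what matters and the adversary controls at most $\corr n$ such nodes. The cleanest route: bound $|\hat p_{\rm prune\text-mean} - \hat p_{\rm mean}(G)|$ by (edges removed) + (edges added) over $\binom n 2$, argue edges removed $\le$ sum of pruned genuine degrees which by concentration is $\tilde O(\corr n \cdot n p)$ — too big. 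I expect the correct argument localizes to: only corrupted nodes matter, each contributes $\le$ its degree-deviation-from-threshold $\times$ 1, summed and normalized, and the threshold itself is within $\tilde O(\sqrt{np(1-p)})$ of $np$ when $\corr n \gg \log n$; for $\corr$ a constant the $\corr^2$ reads as $\corr \cdot \corr$, one $\corr$ from number of nodes, one $\corr$ from the fraction of the degree range they can occupy. \textbf{This accounting is the main obstacle} and I would do it by explicitly matching against the adversary of Theorem~\ref{thm:prune-median-lb}.

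\emph{Lower bounds.} For prune-then-median at $\Omega(\corr)$: the oblivious adversary takes a set $B$ of $\corr n$ nodes and, say, adds all missing edges among $B$ and from $B$ to $[n]\setminus B$ (or a carefully chosen fraction thereof). Pruning removes $c\corr n$ nodes from each tail; by choosing $c$ appropriately and making the corrupted nodes' degrees land strategically — some inside the retained band, some in the pruned band — the adversary ensures that of the $n - 2c\corr n$ retained nodes, a $\Theta(\corr)$ fraction are corrupted with degrees shifted by $\Theta(n)$, which moves the \emph{median} degree by $\Theta(\corr n)$, hence $\hat p$ by $\Theta(\corr)$. For prune-then-mean at $\Omega(\corr^2)$: the same style of adversary, but now since the mean is an average the shift is (number of surviving corrupted nodes)$/n \times$(per-node degree shift)$/n = \Theta(\corr)\cdot\Theta(\corr) = \Theta(\corr^2)$; the adversary must also use the freedom in $c$ against the algorithm (the algorithm's constant $c$ is fixed, the adversary tailors $B$ and the rewiring to it). In both cases I would verify via Chernoff/anticoncentration that, with probability $\ge 0.9$, the genuine degrees are concentrated tightly enough that pruning behaves as intended and the claimed displacement actually materializes. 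Finally, combining with the $\corr = 0$ error of $O(1/n)$ from Lemma~\ref{lem:without-corruptions}, and noting $\corr \ge 1/n$ in the regime of interest so that $\corr^2$ and $\corr$ dominate $1/n^2$ and $1/n$ respectively, yields the stated $\Theta(\corr^2)$ and $\Theta(\corr)$ prices of robustness.
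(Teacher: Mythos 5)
Your plan (prove upper and lower bounds separately, then combine) matches the paper's structure, but there are several genuine gaps in the execution.

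\textbf{Upper bound for prune-then-mean.} Your accounting never converges on the right decomposition, and the intermediate claims are wrong. You assert the pruning threshold is ``within $\tilde O(\sqrt{np(1-p)})$ of $np$'' when $\gamma n\gg \log n$; that is not so. The degrees of \emph{good} nodes in $\adver(G)$ can be shifted by up to $\gamma n$ from their values in $G$ (a good node has up to $\gamma n$ corrupted neighbors), so the ``safe'' degree band has width $O(\gamma n)+\tilde O(\sqrt n)$, and the threshold degree after pruning sits anywhere in that band. The correct account is: with $c\ge 1$ all $\le\gamma n$ corrupted nodes with degree outside this band are pruned, so every surviving bad node's degree is within $O(c\gamma n)+\tilde O(\sqrt n)$ of $np$; with $\le\gamma n$ such nodes you get edge perturbation $O(\gamma n)\cdot O(c\gamma n)=O(c\gamma^2 n^2)$, i.e., error $O(c\gamma^2)$ after normalization. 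You gesture at ``one $\gamma$ from the number of nodes, one $\gamma$ from the fraction of the degree range'' at the end, which is the right idea, but you never state the key fact that the \emph{band width} is $\Theta(\gamma n)$ rather than $\tilde O(\sqrt n)$. You also omit entirely the other half of the argument: the edge density within the surviving \emph{good} nodes $F^p$ must itself be shown close to $p$, and since $F^p$ is chosen \emph{as a function of the graph}, a pointwise Chernoff bound does not suffice --- the paper invokes the uniform-over-all-subsets concentration bound (Theorem~\ref{th:mainconth}). Without this your argument only bounds the contribution of corrupted nodes, not the total error.

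\textbf{Lower bound for prune-then-median.} Your mechanism --- a $\Theta(\gamma)$ fraction of retained nodes corrupted and shifted by $\Theta(n)$ --- does not move the median: if the remaining $(1-\Theta(\gamma))$ fraction of good nodes stay tightly concentrated around $np$, the median is still a good node with degree near $np$. The actual mechanism the paper uses is that the adversary rewires $B$'s edges so that \emph{good} nodes' degrees are displaced: it partitions the good nodes into $S_0,S_1,S_2,S_3$, uses $S_0,S_1$ as sacrificial extremes that absorb the pruning, and gives $S_2$ (a $2/3$ share) degree $\approx n(1/2+\gamma/10)$ and $S_3$ (a $1/3$ share) degree $\approx n(1/2-\gamma/5)$ via carefully tuned connection probabilities $3/5$ and $3/10$ from $B$; this forces the median into $S_3$, displaced by $\Theta(\gamma)$. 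The essential idea you are missing is that the damage to the median comes from perturbing the good nodes' degrees through their edges to $B$, not from having corrupted nodes survive with large degree deviations. For the prune-then-mean $\Omega(\gamma^2)$ lower bound your heuristic is closer, but you do not specify a construction where the corrupted nodes provably survive pruning with degrees displaced by $\Theta(\gamma n)$; the paper's same five-set construction handles this because $B$'s degrees are sandwiched between the pruned sets $S_0,S_1$ and the retained good sets.
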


\subsubsection{A Spectral Algorithm for Robust Estimation}
 
Given the failings of the approaches described so far, it may appear that a $\poly(\corr)$ cost for robustness may be unavoidable. Our main result is a computationally-efficient algorithm that bypasses this barrier.

\begin{theorem} 
\label{thm:main}
Suppose $\corr<1/60$ and $p \in [0,1]$. Let $G\sim G(n,p)$ 
and $\adver(G)$ be a rewiring of $G$ by a $\corr$-omniscient adversary $\adver$. There exists a polynomial-time estimator $\hat p (\adver(G))$ such that with probability at least $1-{10}n^{-2}$,
 \begin{align}
|\hat p (\adver(G))-p| \le C\cdot\Paren{ \frac{\sqrt{p(1-p)\log n}}{n}+\frac{\corr\sqrt{p(1-p)\log(1/\gamma)}}{\sqrt{n}}+  \frac\corr n\log n },
 \end{align}
 for some constant $C$.
 This {estimate} can be computed in $\tilde{O}(\gamma n^3+n^{2})$ time. 
\end{theorem}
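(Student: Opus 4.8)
The estimator treats the corrupted adjacency matrix $\widehat A := A(\adver(G))\in\{0,1\}^{n\times n}$ as a dataset of $n$ ``samples'' --- its rows --- living in $\mathbb{R}^n$. Writing $F := [n]\setminus B$ for the uncorrupted nodes ($|F|\ge(1-\gamma)n$), the rows indexed by $F$, once restricted to the columns in $F$, behave like nearly i.i.d.\ $\mathrm{Ber}(p)$ vectors, while the $\le\gamma n$ remaining rows and columns are arbitrary. The algorithm is a symmetric spectral filter: it maintains a node set $S\subseteq[n]$, repeatedly hunts for a direction in which the rows of $\widehat A$ restricted to $S$ have anomalously large empirical variance and deletes the nodes most responsible for it, and finally reports a normalized linear functional of the surviving empirical mean along the all-ones direction (concretely, $\hat p := \langle\widehat\mu_S,\mathbf 1_S\rangle/|S| = 2e(S)/|S|^2$, where $e(S)$ is the number of surviving edges). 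I would organize the analysis into three blocks; throughout one may assume $p\le 1/2$ by complementing $G$, so $p(1-p)\asymp p$.

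\emph{(1) Deterministic regularity of the good subgraph.} Condition on an event of probability $\ge 1-O(n^{-2})$ on which $G[F]$ is well-behaved: (i) its edge density is within $O(\sqrt{p(1-p)\log n}/n)$ of $p$; (ii) a resilience bound --- for every $T\subseteq F$ with $|T|\le 2\gamma n$ and every fixed unit vector $u$, $\sum_{i\in T}\langle A_i-p\mathbf 1,u\rangle = O(|T|\sqrt{p(1-p)}(\sqrt{n\log(1/\gamma)}+\log n))$, together with the matching second-moment statement that every $(1-O(\gamma))$-fraction of the good rows has empirical covariance $\preceq p(1-p)(1+O(\gamma\log(1/\gamma)))\,I + o(1)$; (iii) $\|A_F-\mathbb{E}A_F\|_{\mathrm{op}} = O(\sqrt{np(1-p)}+\log n)$, using Feige--Ofek / Le--Levina--Vershynin-type regularization in the sparse regime; and (iv) $\deg_F(i)\le p|F|+O(\sqrt{np(1-p)\log n}+\log n)$ for all $i\in F$. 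Items (i), (iii), (iv) are standard Bernstein/Chernoff bounds plus known spectral concentration for Erd\H{o}s--R\'enyi matrices; (ii) is the crux, obtained from a Chernoff bound for a \emph{fixed} $T$ (a sum of $\sim|T||F|$ Bernoullis, the $\binom{|T|}{2}$ doubly-counted pairs being negligible) union-bounded over all $\binom{n}{|T|}$ choices of $T$ --- the entropy term $\log\binom{n}{|T|}\asymp |T|\log(1/\gamma)$ is exactly what produces the $\sqrt{\log(1/\gamma)}$ factor in the final rate.

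\emph{(2) The filter and its invariant.} Begin with a coarse pruning step: delete every node whose degree in $\adver(G)$ lies outside an interval of radius $O(\sqrt{np(1-p)\log n}+\log n)$ around the median degree. By (iv) this removes no node of $F$, and afterwards every surviving row --- good or bad --- has $\ell_2^2$-norm at most $p|S|+O(\sqrt{np(1-p)\log n}+\log n)$; the bound on what a surviving corrupted node can contribute here is where the residual $(\gamma/n)\log n$ term originates. Then iterate: on the current $S$, form $M_S = \tfrac1{|S|}\sum_{i\in S}(\widehat A_i-\widehat\mu_S)(\widehat A_i-\widehat\mu_S)^\top$ for the rows restricted to $S$; if its top eigenvalue exceeds the threshold $p(1-p)(1+\Theta(\gamma\log(1/\gamma)))$ (calibrated via (ii) and estimated from the data), let $v$ be the top eigenvector, score each $i\in S$ by $\langle\widehat A_i-\widehat\mu_S,v\rangle^2$, and remove nodes with a randomized/threshold filter; otherwise halt. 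By (ii)--(iii) the uncorrupted rows have empirical variance $\le p(1-p)(1+o(1))$ in every direction and only an $O(\gamma)$-fraction of them can exceed $p(1-p)\log(1/\gamma)$ in any fixed direction (sub-Gaussian tails of the Binomial), so whenever the threshold is violated the excess is witnessed mostly by corrupted survivors and the filter removes at least as much corrupted as uncorrupted weight. This preserves the invariant that $S$ is $(1-O(\gamma))$-close to $F$ with at most an $O(\gamma)$-fraction of $S$ corrupted, and forces termination within $O(\gamma n)$ rounds; each round is dominated by one top-eigenvector computation on the implicitly represented matrix $\tfrac1{|S|}\widehat A_S^\top\widehat A_S$ minus a rank-one term, costing $\tilde O(n^2)$ by power iteration, whence the $\tilde O(\gamma n^3+n^2)$ runtime. (The constant $1/60$ is whatever makes the ``removes more corrupted than uncorrupted'' bookkeeping go through with room to spare; it is not optimized.)

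\emph{(3) From the certificate to the estimate, and the main obstacle.} On termination, $M_S$ has top eigenvalue at most $p(1-p)(1+\Theta(\gamma\log(1/\gamma)))$; in particular, along $u=\mathbf 1_S/\sqrt{|S|}$, $\tfrac1{|S|}\sum_{i\in S}\langle\widehat A_i-\widehat\mu_S,u\rangle^2\le p(1-p)(1+\Theta(\gamma\log(1/\gamma)))$, while by the second-moment part of (ii) the $(1-O(\gamma))$-fraction of these samples drawn from $F$ already accounts for $p(1-p)(1-O(\gamma\log(1/\gamma)))$ of it. Subtracting, the $O(\gamma|S|)$ corrupted survivors contribute at most $O(\gamma\log(1/\gamma)\,p(1-p))\cdot|S|$ to $\sum_{i\in S}\langle\widehat A_i-\widehat\mu_S,u\rangle^2$, so by Cauchy--Schwarz their net effect on $\langle\widehat\mu_S,u\rangle$ is $O(\sqrt\gamma\cdot\sqrt{\gamma\log(1/\gamma)\,p(1-p)}) = O(\gamma\sqrt{p(1-p)\log(1/\gamma)})$ --- it is precisely this step that turns the naive $\sqrt\gamma$ dependence of generic bounded-covariance robust mean estimation into the claimed $\gamma$. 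The uncorrupted survivors' contribution to $\langle\widehat\mu_S,u\rangle$ equals $p\sqrt{|S|}$ up to $O(\sqrt{p(1-p)\log n}/\sqrt n)+O(\gamma\sqrt{p(1-p)\log(1/\gamma)})$ by (i) and the first-moment part of (ii); combining, and adding the $O((\gamma/n)\log n)$ slack from the coarse-pruning bound on corrupted survivors, gives $|\hat p-p|\le C(\sqrt{p(1-p)\log n}/n+\gamma\sqrt{p(1-p)\log(1/\gamma)}/\sqrt n+(\gamma/n)\log n)$ after recalling $p(1-p)\asymp\min(p,1-p)$. The hardest parts are all here and in (1)(ii): proving the resilience/stability estimates with exactly the right $\sqrt{\log(1/\gamma)}$ and $\log n$ dependence uniformly over all small subsets and all directions, and making sure the price of corruption enters the final scalar bound only additively --- with no stray multiplicative $\gamma p$ term --- which requires the coarse pruning, the carefully tuned threshold, and the good-sample variance lower bound from (ii) to interlock. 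The filtering invariant itself is by now routine, but one must absorb the mild dependence among the good rows (a shared edge contributes an $O(1)$ covariance, negligible against the $\Theta(np(1-p))$ scale) and handle the fact that the adversary controls edges from $F$ to $B$ and hence corrupts the degrees of genuinely good nodes --- which is exactly why a one-dimensional trimmed-mean-of-degrees argument provably fails and the spectral filter is necessary.
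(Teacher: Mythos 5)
Your proposal reaches the same error rate with the same final estimator (the edge density of a filtered node set), but the filtering machinery is genuinely different from the paper's, and so is the ordering of the two stages. The paper does not look at the empirical covariance of rows: it works directly with the $|S|\times|S|$ symmetric matrix $(A - p_S)_{S\times S}$, where $p_S$ is the scalar edge density of the current set, and iteratively removes one node at a time, sampled with probability proportional to the squared entry of the top eigenvector. The key technical claim (Lemma~\ref{lem:delprob}, via Lemma~\ref{lem:whatisinthename} and Theorem~\ref{th:lacon}) is that whenever $\|(A-p_S)_{S\times S}\|$ exceeds a multiple of $n\eta(p,n)$, the top eigenvector places mass $\ge 0.15$ on $S\cap F^c$. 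Then the \emph{coarse} spectral stage comes first, and a separate \emph{fine} degree-trimming stage (removing $3\alpha_1 n$ nodes with most anomalous normalized degree on $S^*$) comes second; you invert this, doing a coarse degree prune first and then a covariance-filter. Your route is closer to the generic bounded-covariance filtering literature, applied to the rows of $A$ as "samples"; the paper's route is bespoke to the graph, exploiting the algebraic structure of the rank-one-shifted adjacency matrix.

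Two concrete concerns with your sketch. First, your stopping criterion thresholds $\lambda_{\max}(M_S)$ against $p(1-p)(1+\Theta(\gamma\log(1/\gamma)))$, a quantity in terms of the unknown $p$, and you wave at estimating it "from the data" --- but a robust estimate of $p$ is exactly what the algorithm is trying to produce, so this is circular. The paper sidesteps this cleanly: Algorithm~\ref{algo:part} runs for a fixed $9\alpha_1 n$ rounds and outputs $\arg\min_{S\in\text{Candidates}}\|(A-p_S)_{S\times S}\|$, never needing to evaluate a $p$-dependent threshold, and Theorem~\ref{th:newth} shows that a small $\|(A-p_S)_{S\times S}\|$ already certifies $|p_S-p|$ small, with $p_S$ known. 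You could likely patch your version by the same "run a fixed number of rounds and take the best certificate" device, but you need to say so. Second, the filter invariant "removes at least as much corrupted as uncorrupted weight" is the load-bearing step and you assert it as routine; in the paper the analogous fact requires the fairly delicate Lemma~\ref{lem:whatisinthename}, which leverages the symmetry of the shifted adjacency matrix and the identity $\sum_{i,j\in S}(A_{i,j}-p_S)=0$. Your covariance matrix $M_S$ is a different object (quadratic rather than linear in $A$), and a proof of the invariant in your setting, accounting for the $A_{i,j}=A_{j,i}$ dependence and the fact that good rows' entries at columns in $S\cap B$ are themselves corrupted, would need its own argument. The high-level architecture is plausible and the Cauchy--Schwarz calculation in your step (3) correctly produces the $\gamma\sqrt{p(1-p)\log(1/\gamma)}/\sqrt{n}$ term, but the two gaps above are where the actual work lies.
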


The first term is the error without corruptions, while the other two terms capture the price of robustness. 
Except at extreme values of $p$, the last term will be dominated by one of the other two. In this case, note that 
the cost of robustness in the second term decreases as the number of nodes $n$ increases. 
This is in contrast to the previously described approaches, for which the price of robustness did not decrease with $n$.
Observe that the non-robust error will dominate for most regimes when $\corr \leq 1/\sqrt{n}$.

As our lower bounds will establish, our algorithm provides a nearly-tight solution to the problem. 
Note that while this algorithm requires knowledge of $\corr$, Jain, Orlitsky, and Ravindrakumar~\cite{JainOR22} recently proposed a simple argument which generically removes the need to know the corruption parameter for robust estimation tasks, leading to such an algorithm with the same rates.


Our upper bound requires $\gamma<1/60$.\footnote{We have not tried to optimize the value of $\gamma$ for computationally efficient algorithms, and  could likely be made larger than $1/60$ through a more careful analysis.} On the other hand, note that if $\gamma\ge 0.5$, an identifiability argument implies that no estimator can achieve error better than $0.5$.\footnote{Consider an empty graph $G(n,0)$. An adversary can corrupt half the graph into a clique, making it look like it came from $G(n,1)$. No algorithm can identify which half of the graph was the original.} 
This raises the question of whether the above rates are achievable for all $\corr<0.5$. 
We show that this is indeed the case, providing a computationally inefficient algorithm with the following guarantees.


\begin{theorem}
\label{thm:large-gamma}
Suppose $\gamma<1/2$. There exists an algorithm such that with probability at least $1-n^{-2}$,
 \begin{align}
|\hat p (\adver(G))-p| \le \frac{C}{1/2 - \gamma}\cdot\Paren{\frac{\sqrt{p(1-p)}}{\sqrt{n}}+  \frac {\sqrt{\log n}}{n} },
 \end{align}
 for some constant $C$. 
\end{theorem}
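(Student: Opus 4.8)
The plan is to use a robust version of the median-of-degrees idea, but applied through a ``tournament''/hypothesis-selection lens rather than by pruning. The key structural fact is that after the adversary corrupts a set $B$ with $|B| \le \gamma n$, there remain at least $(1-\gamma)n$ uncorrupted nodes, and the induced subgraph on these nodes is an untouched sample from $G(|F|, p)$ where $F = [n]\setminus B$. Since $\gamma < 1/2$, we have $|F| > n/2$, which is what lets us always find a ``clean'' witness set. Concretely, I would first establish a deterministic structural lemma: with probability at least $1 - n^{-2}$ over $G \sim G(n,p)$, \emph{every} subset $S \subseteq [n]$ with $|S| \ge (1-\gamma)n$ has edge density $e(S)/\binom{|S|}{2}$ within $O\!\big(\sqrt{p(1-p)/n} + \sqrt{\log n}/n\big)$ of $p$. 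This is a union bound over $\binom{n}{\le \gamma n} \le 2^n$ sets combined with a Chernoff/Bernstein bound on the number of edges in a fixed set of size $m = \Theta(n)$, whose fluctuations are $O(\sqrt{p(1-p)}\cdot m / \sqrt{\binom{m}{2}})$-ish after normalization; the $2^n$ union bound is absorbed because $\binom{m}{2} = \Omega(n^2)$ gives a $\exp(-\Omega(n))$ tail when the deviation is $\Omega(\sqrt{p(1-p)/n})$, and the additive $\sqrt{\log n}/n$ term handles the $p$ very small/large regime via the Poissonized/Bernstein form.

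Given this structural event, the algorithm is: over all subsets $S$ of size exactly $\lceil(1-\gamma)n\rceil$, consider the candidate estimates $\hat p_S = e(S)/\binom{|S|}{2}$, and output any $\hat p_S$ that is ``consistent'' — e.g., the Tukey-median-style choice, output the $\hat p_S$ such that at least one such set has all of its, say, $\binom{|S|}{2}$-normalized sub-densities close to it; more simply, output $\hat p_{S^\star}$ where $S^\star$ minimizes the spread of edge densities over its large subsets. The cleanest route: among all size-$\lceil(1-\gamma)n\rceil$ subsets, the true uncorrupted set $F'$ (any $(1-\gamma)n$-subset of $F$) is one candidate, and by the structural lemma $\hat p_{F'}$ is within the target accuracy of $p$. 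To \emph{select} it without knowing $B$, note that any set $S$ of this size shares at least $(1 - 2\gamma)n = \Omega((1/2-\gamma)n)$ nodes with $F'$, hence with any clean set; a median/intersection argument shows every ``candidate-consistent'' $S$ must have $\hat p_S$ within $O\big(\frac{1}{1/2-\gamma}\big)$ times the base accuracy of $p$ — the $\frac{1}{1/2-\gamma}$ factor entering precisely because the overlap guarantee degrades as $\gamma \to 1/2$ and one must rescale a density on an $\Omega((1/2-\gamma)n)$-node common part up to the whole set.

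I expect the main obstacle to be the selection step and tracking the exact dependence on $\frac{1}{1/2-\gamma}$: the structural lemma itself is a routine (if slightly delicate at extreme $p$) concentration-plus-union-bound argument, but arguing that the selected set's density is provably close to $p$ — rather than close to \emph{some} adversarially shifted value — requires using that the clean core $F$ is large enough that its density pins down $p$, and that any competing set $S$ passing the consistency test must overlap the clean core in $\Omega((1/2-\gamma)n)$ nodes on which both $S$ and $F$ agree with $p$. Propagating a density estimate from an $\Omega((1/2-\gamma)n)$-vertex common induced subgraph to the full $\lceil(1-\gamma)n\rceil$-vertex set costs a factor $\Theta(1/(1/2-\gamma))$ in the worst case (the edges outside the common part can be up to that fraction of all pairs), which is exactly the prefactor in the theorem. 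Finally I would note the algorithm runs in time $\exp(O(n))$ (iterating over all large subsets), consistent with the claimed computational inefficiency, and that since the construction never looks at which nodes were corrupted, it works against the omniscient adversary.
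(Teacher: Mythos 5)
Your plan is a genuinely different route from the paper's. The paper's inefficient estimator (Theorem~\ref{th:ineff}, via Theorems~\ref{th:newth} and~\ref{th:goodspecnorm}) searches over all $S$ with $|S|\ge n/2$ and returns $p_{\hat S}$ for the $\hat S$ minimizing the centered spectral norm $\|(A-p_S)_{S\times S}\|$: the clean core $F$ makes that norm $O(n\,\Boundb(p,n))$, and Theorem~\ref{th:newth} turns a small centered spectral norm into $|p_S-p|\le O\big(\Boundb(p,n)/(1/2-\gamma)\big)$, the $1/(1/2-\gamma)$ appearing because $|S\cap F|\ge (1/2-\gamma)n$ enters the denominator. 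Your selector is instead a density-spread criterion over sub-subsets, which mirrors the paper's entry-sum regularity condition~\ref{pt:second-point} rather than the spectral condition~\ref{pt:first-point}. Both are legitimate inefficient rules, but the spectral one yields a one-line accuracy argument and is also what the efficient algorithm of Section~\ref{sec:upper-spectral} approximately optimizes, so the paper's choice feeds forward.

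There is, however, a gap in how you account for the $1/(1/2-\gamma)$ factor. Your structural lemma is stated only for subsets of size $\ge (1-\gamma)n$, with deviation $O\big(\sqrt{p(1-p)/n}+\sqrt{\log n}/n\big)$ and no $\gamma$-dependence, but your selection argument must invoke density concentration on the overlap $T=S\cap F$, whose size can be as small as $(1-2\gamma)n$ --- outside the scope of your lemma. For a fixed $T\subseteq F$ of size $m$, a Chernoff bound at confidence level sufficient to union over all such $T$ gives $|p_T-p|\le C\sqrt{np(1-p)}/m$ plus a Bernstein term, which at $m=(1-2\gamma)n$ is $\Theta\big(\sqrt{p(1-p)/n}/(1/2-\gamma)\big)$. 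That degraded concentration on the shrinking overlap, not ``rescaling,'' is the true source of the factor: the fraction of ordered pairs of $S$ with an endpoint outside $S\cap F$ is $1-(1-2\gamma)^2/(1-\gamma)^2<1$, always $O(1)$, so a rescaling argument cannot produce a quantity that diverges as $\gamma\to 1/2$. Your sentence ``on which both $S$ and $F$ agree with $p$'' is exactly where you silently use concentration on a set smaller than $(1-\gamma)n$. The fix is to extend the concentration lemma to all $T\subseteq F$ with $|T|\ge(1-2\gamma)n$ (call the resulting deviation $\Delta$), take $S^\star$ minimizing $\max_{T\subseteq S:\,|T|\ge(1-2\gamma)n}|p_T-p_S|$, observe that a clean $(1-\gamma)n$-subset of $F$ achieves spread $\le 2\Delta$ so $S^\star$ does too, and conclude $|p_{S^\star}-p|\le |p_{S^\star}-p_{S^\star\cap F}|+|p_{S^\star\cap F}-p|\le 2\Delta+\Delta=3\Delta$.
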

Note that for $\corr>1/60,$ the error bound above matches that presented in Theorem~\ref{thm:main} up to a factor of $1/(0.5-\corr)$, and therefore extends the error rates of Theorem~\ref{thm:main} to the regime $\corr \in [1/60, 1/2)$ at the cost of computational efficiency.

\subsubsection{Information-theoretic Lower Bounds}

We provide a lower bound to establish  near-optimality of our algorithms. While our upper bounds are against an omniscient adversary, the lower bounds hold for the weaker oblivious adversary.
{
\begin{theorem}
\label{thm:lower-bound-main}
For every $\gamma<1/2$, $p\in[0,1]$, and $n\ge 0$ and a universal constant $C'$, let 
\[
\Delta = C'\cdot\Paren{ \frac{\sqrt{p(1-p)}}{n}+\frac{\corr\sqrt{p(1-p)}}{\sqrt{n}}+  \frac\corr n}.
\]
For any $p'\in [p+\Delta,p-\Delta]$ and $G\sim G(n,p)$ and $G'\sim G(n,p')$, there exists an oblivious adversary $\adver$ such that no algorithm can distinguish between $\adver(G)$ and $\adver(G')$ with probability more than 0.65.
\end{theorem}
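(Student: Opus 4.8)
Any test distinguishing two distributions with success probability greater than $0.65$ witnesses that their total variation distance exceeds $0.3$, so it suffices to produce, for all $p$ and all $p'$ with $|p-p'|\le\Delta$, oblivious adversaries $\adver_p$ and $\adver_{p'}$ --- since an oblivious adversary is parameterized by the edge probability it is told, one strategy induces both --- with $\dtv(\adver_p(G),\adver_{p'}(G'))<0.3$, where $G\sim G(n,p)$ and $G'\sim G(n,p')$ and the distance is between the induced distributions on graphs. Replacing $p$ by $1-p$ (which sends a graph to its complement, a bijection commuting with rewiring and leaving $\Delta$ unchanged) we may assume $p\le 1/2$, and by relabeling the two worlds we may assume $p\le p'$; then $p'\le p+O(C')\le 3/4$ for $C'$ small. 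We split into three cases, and in each it is enough to handle $|p-p'|$ at most a constant times the dominant term of $\Delta$.

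\textbf{Case 1: the term $\sqrt{p(1-p)}/n$ dominates.} Take both adversaries to be the identity. A sample from $G(n,p)$ is a product of $\binom{n}{2}$ independent $\mathrm{Ber}(p)$'s, so by tensorization of squared Hellinger distance $\mathrm H^2$ together with the elementary bound $\mathrm H^2(\mathrm{Ber}(p),\mathrm{Ber}(p'))\le 2(p-p')^2/(p(1-p))$ (valid since $p\le 1/2$, $p'\le 3/4$) we get $\mathrm H^2(G(n,p),G(n,p'))\le\binom{n}{2}\cdot 2(p-p')^2/(p(1-p))=O(C'^2)$, because $|p-p'|\le 3C'\sqrt{p(1-p)}/n$. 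Hence $\dtv(G(n,p),G(n,p'))\le\sqrt2\,\mathrm H=O(C')<0.3$ once $C'$ is a small enough absolute constant.

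\textbf{Cases 2 and 3: the terms $\gamma\sqrt{p(1-p)/n}$ or $\gamma/n$ dominate.} Write $p'=p+\gamma\delta$, so $\delta\le 3C'\max(\sqrt{p(1-p)/n},\,1/n)$. The construction is symmetric: in world $t$ the adversary picks a uniformly random set $B_t$ of $\gamma n$ nodes, leaves the edges inside $B_t^c$ untouched, and rewires all edges incident to $B_t$, the two rewiring rules being tuned so the resulting distributions agree on every natural observable. A naive edge-independent rewiring (resampling each $B_t$-incident edge from a common $\mathrm{Ber}(q)$) provably cannot match the degree law of a corrupted node with that of an uncorrupted node simultaneously, so the rewiring is correlated: for each $v\in B_t$ one first draws a target degree from the law a genuine node's degree would have in world $1-t$, lays down that many of $v$'s edges, and spreads the residual probability over the $B_t$-incident edges so that the non-$B_t$ degrees also match, with the overall rate on $B_t$-incident edges fixed so that the two total edge counts agree (this forces the two rewiring rates to differ by $\Theta(\delta)$, which is exactly what we can afford). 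Since $\delta\le O(\sqrt{p(1-p)/n})$, respectively $\delta\le O(1/n)$, every corrupted node's degree stays within $O(C')$ standard deviations, respectively within $O(1)$, of the bulk, which is what makes the perturbation undetectable; one then argues $\dtv(\text{world }0,\text{world }1)<0.3$ for $C'$ small. The separation realized this way is $|p-p'|=\gamma\delta=\Theta(\gamma\sqrt{p(1-p)/n})$ in Case 2 and $\Theta(\gamma/n)$ in Case 3, matching the remaining terms of $\Delta$; at the extreme point $p=0$ (or $p=1$) there is an even simpler argument, namely that the world with the larger parameter can delete (resp.\ add) all edges incident to $B_t$, which suffices whenever $G(n,p')$ has at most $\gamma n$ edges, an event of probability $\ge 1-O(C')$ by Markov.

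\textbf{Main obstacle.} The crux is the total variation bound in Cases 2 and 3. The difficulty is structural: the edges inside $B_t^c$ are untouched, hence distributed as $\mathrm{Ber}(p)$ in world $0$ and $\mathrm{Ber}(p')$ in world $1$, so exposing any fixed block of $\Theta(n^2)$ such edges would already separate the two worlds once $\gamma\sqrt n\gg 1$; the only escape is that $B_t$ is random, so no edge is identifiably uncorrupted, and one must show that the two mixtures over $B_t$ are close even though the conditional laws given $B_t$ are far apart. I would attack this either via a coupling that uses different random sets in the two worlds, so that the mismatch on internal edges is absorbed into the flexibility of the correlated rewiring, or by bounding directly the $\chi^2$-divergence between the two mixtures; a hybrid argument that swaps the two rewiring rules one corrupted star at a time, controlling each swap by $O(1/(\gamma n))$ in total variation so that $\gamma n$ swaps contribute $O(1)$ (made smaller than $0.3$ by shrinking $C'$), is a plausible way to organize the bookkeeping. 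Handling the overlap of the $\gamma n$ corrupted stars on the $\binom{\gamma n}{2}$ edges inside $B_t$ is the delicate part of that accounting.
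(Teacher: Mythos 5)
Your Case 1 (no-corruption term) is correct, and your initial reduction from distinguishing probability to total variation is fine. The real content, however, is in Cases 2 and 3, and there you have a genuine gap that your own ``Main obstacle'' paragraph correctly diagnoses but does not resolve. You observe that the $\Theta(n^2)$ edges strictly inside $B^c$ are honest $\mathrm{Ber}(p)$ versus $\mathrm{Ber}(p')$ samples, so any fixed such block already gives $\dtv \approx n|p-p'|/\sqrt{p(1-p)} \approx \gamma\sqrt{n}$, which blows up; the escape must come from the randomness of $B$. But none of the three remedies you sketch (coupling with different $B$'s in the two worlds, a $\chi^2$ bound, or a one-star-at-a-time hybrid) is carried out, and the hybrid in particular cannot work as stated: swapping rewiring rules node by node never changes the parameter on the $\Theta(n^2)$ untouched internal edges, which is where the bulk of the separation lives, so the per-step bound you posit has no mechanism to absorb it. Your correlated-rewiring construction is also asked to ``match every natural observable,'' which is not a bound one can take a union over; one needs a concrete sufficient statistic to match.

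The paper's key idea, which you are missing, is to pass to a \emph{directed} Erd\H{o}s--R\'enyi model $DG(n,p)$ and to show (Lemma~\ref{lem:undirected_to_directed}) that a lower bound there transfers to the undirected model. In the directed model the out-degrees $d_1,\dots,d_n$ are \emph{mutually independent} $\mathrm{Bin}(n-1,p)$ variables, and an adversary that rewires only the \emph{outgoing} edges of a random $B$ (each node included independently with probability $0.15\gamma$) and then redistributes the chosen out-edges uniformly at random makes the degree sequence a sufficient statistic. Hence the whole problem collapses to matching a single univariate mixture $(1-0.15\gamma)\,\mathrm{Bin}(n-1,p)+0.15\gamma\,P$ across the two worlds, which by the folklore mixture lemma is possible as soon as $\dtv\bigl(\mathrm{Bin}(n-1,p_1),\mathrm{Bin}(n-1,p_2)\bigr)\le 0.15\gamma$; a standard binomial TV bound then yields exactly the $\gamma\sqrt{p(1-p)/n}$ and $\gamma/n$ terms. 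This circumvents the $\Theta(n^2)$-edge dependency problem entirely, because in the directed model a node's corruption fully controls its out-degree, out-degrees do not share edges, and the conditional law given degrees is identical in both worlds. Your proof, as it stands, does not establish the theorem; the directed reduction (or some equally powerful way to decouple the degree statistics) is the missing idea.
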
}

\subsection{Techniques}
\label{sec:techniques}



\medskip
\noindent\textbf{{Upper bound techniques.}}
Broadly speaking, robust estimation is only possible when samples from the (uncorrupted) distribution enjoy some nice structure. 
Work in this area generally proceedings by imposing some regularity conditions on the uncorrupted data, which hold with high probability over samples from the distribution. 
The algorithm subsequently relies solely on these regularity conditions to make progress.
For example, for mean estimation problems, it is common to assume that the mean and covariance of the uncorrupted samples are close to the true mean and covariance.
However, the appropriate regularity conditions in our setting are far less obvious. 
We employ conditions which bound the empirical edge density and spectral norm for submatrices of the adjacency matrix, when appropriately centered around the true parameter $p$ (Definition~\ref{def:regular}), which can be proven using tools from random matrix theory.


With our regularity conditions established, the algorithmic procedure proceeds in two stages: a coarse estimator, followed by a fine estimator.

\medskip
\noindent\textbf{Stage 1: A coarse estimate.}
Our regularity conditions are suggestive of the following intuition about how one might estimate the value of $p$.
If one could locate a sufficiently large subgraph $S$ of the uncorrupted nodes, such that their adjacency matrix centered around $p$ has small spectral norm, then the empirical edge density of this subgraph would give a good estimate for the true parameter $p$.
More precisely, we let $A$ be the (corrupted) adjacency matrix of $\adver(G)$, let $A_{S\times S}$ be the submatrix of $A$ indexed by the set $S$, and $p_S$ be the empirical edge density of the subgraph $S$.
The goal is to obtain an $S$ where $\|A_{S \times S} -p_S \|$ is small,\footnote{For clarity: in the expression $A_{S \times S} -p$, $p$ is subtracted entry-wise.} at which point we can output $p_S$.

There are two clear challenges with this approach. 
First off, we can not center the adjacency matrix around the unknown parameter $p$, since estimating that parameter is our goal.
However, we demonstrate that it instead suffices to center around $p_S$ (Theorem~\ref{th:newth}).
The other issue is that it is not clear how to identify such a set $S$ of uncorrupted nodes.
One (inefficient) approach is to simply inspect all sufficiently large subgraphs.
This will be accurate (quantified in Theorem~\ref{th:ineff}), but not computationally tractable.


Instead, our main algorithmic contribution is an efficient algorithm which achieves this same goal. 
We give an iterative spectral approach, which starts with $S = [n]$.
In Lemma~\ref{lem:delprob} we show that if the spectral norm of $A_{S\times S}-p_S$ is large, then the top eigenvector assigns significant weight to the set of corrupted nodes.
Normalizing this eigenvector and sampling from the corresponding probability distribution identifies a corrupted node with constant probability.
We eliminate this node from $S$ and repeat the process. Finally, using this approach, we obtain a subset $S^*\subset[n]$ of nodes such that $p_{S^*}$ is a coarse estimate of $p$.

\medskip
\noindent\textbf{Stage 2: Pruning the coarse estimate.}
It turns out that the above coarse estimate gives a price of robustness which is roughly $O(1/\sqrt{n})$, rather than the $O(\corr/\sqrt{n})$ we are trying to achieve. 
However, a simple pruning step allows us to complete the argument.
Specifically, our coarse estimator gave us a set $S^*$ such that the spectral norm of $A_{S^*\times S^*}-p_{S^*}$ is small and $p_{S^*}$ is close to $p$.
We employ this to show that most nodes must have degree close to $p$ (Lemma~\ref{lem:trimnodes}).
Thus, we remove $\Theta(\corr n)$ nodes whose degree (restricted to the subgraph $S^*$) is furthest from $p_{S^*}$.
Our final estimate is the empirical density of the resulting pruned subgraph.




\medskip
\noindent\textbf{{Lower bound techniques.}} A strategy for proving lower bounds is the following: Suppose there exists an adversary that with $\gamma n$ corruptions can convert the distribution $G(n, p)$ and $G(n,p+\delta)$ into the same distribution of random graphs, then we cannot estimate $p$ to accuracy better than $\delta/2$. This is akin to couplings between $G(n, p)$ and $G(n,p+\delta)$ by corrupting only a $\corr n$ nodes. Designing these couplings over  Erd\H{o}s-R\'enyi graphs can be tricky due to the fact that degrees of nodes are not independent of each other. 

We instead consider directed Erd\H{o}s-R\'enyi graphs, where an edge from a node $i$ to $j$ is present independently of all others. Then, the (outgoing) degrees of all the nodes are independent Binomial distributions. Using total variation bounds between Binomial distributions we can design couplings between directed ER graphs with different parameters, thus showing a lower bound on the error of robustly estimating directed ER graphs. Our final argument is a reduction showing that estimating  the parameters of undirected of graphs is at least as hard as estimating the parameters of directed ER graphs. Combining these bounds we obtain the lower bounds.

\subsection{Related Work}
\label{sec:related}

Robust statistics is a classic and mature branch of statistics which focuses on precisely this type of setting since at least the 1960s~\cite{Tukey60,Huber64}. 
However, since the classic literature typically did not take into account computational considerations, proposed estimators were generally intractable for settings of even moderate dimensionality~\cite{Bernholt06}.
Recently, results by Diakonikolas, Kamath, Kane, Li, Moitra, and Stewart~\cite{DiakonikolasKKLMS16,DiakonikolasKKLMS19} and Lai, Rao, and Vempala~\cite{LaiRV16} overcame this barrier, producing the first algorithms which are both accurate and computationally efficient for robust estimation in multivariate settings.
While they focused primarily on parameter estimation of Gaussian data, a flurry of subsequent works have provided efficient and accurate robust algorithms for a vast array of settings, see, e.g.,~\cite{DiakonikolasKKLMS17,CharikarSV17,BalakrishnanDLS17,KlivansKM18,SteinhardtCV18,DiakonikolasKKLMS18,HopkinsL18,KothariSS18,DiakonikolasKKLSS19,ChengDG19,ChengDGW19,DongHL19,ZhuJS19,ChengDGS20,PrasadSBR20a,PensiaJL21,LiuSLC20}, or~\cite{DiakonikolasK19} for a survey.

A common tool in several of these on robust estimation results is to prune suspected outliers from the dataset so that a natural estimator over the remaining points has a small error. 
We also use this meta technique in this paper. We note that as in the previous works, the main challenge lies in designing efficient schemes to detect and remove corrupted data-points for the particular task at hand. In most prior works, the uncorrupted data-points are unaffected by corruptions. In our setting however, the edges from the good nodes are also affected by corruptions to the corrupted nodes. This presents an new challenge requiring new insights.

Specific outlier-removal techniques vary throughout the literature on robust estimation, but most are spectral in nature.
Prior to this line of work, similar approaches were employed for robust supervised learning tasks, namely learning halfspaces with malicious noise~\cite{KlivansLS09, AwasthiBL14}.

Some prior works have studied robust estimation for graphical models, including Ising models~\cite{LindgrenSSDK18,PrasadSBR20b} and Bayesian networks~\cite{ChengDKS18}.
Despite the common nomenclature, these works are rather different from our work on random graph models.
Graphical models are distributions over vectors, where correlations between coordinates exist based on some latent graph structure.
On the other hand, random graph models are distributions over graphs, sampled according to some underlying parameters.
While existing work on graphical models necessitates many samples from the same distribution (due to parameters outnumbering the samples), our setting requires a single sample from a random graph model.

Our setting is related to the untrusted batches setting of Qiao and Valiant~\cite{QiaoV18}, in which many batches of samples are drawn from a distribution, but a constant fraction of batches may be adversarially corrupted, see also followup works by Jain and Orlitsky~\cite{JainO20a,JainO20b,JainO21} and Chen, Li, and Moitra~\cite{ChenLM20a,ChenLM20b}.
This is somewhat similar to our setting, where each batch is the set of edges connected to a node.
However, the key difference is that in our setting, each edge belongs to both its two endpoint nodes, whereas in the untrusted batches setting, a sample is only associated with a single batch.

Estimation in random graph models has also been studied under the constraint of differential privacy~\cite{BorgsCS15,BorgsCSZ18a,SealfonU19}.
Despite superficial similarities between the two settings, we are unaware of deeper technical connections.

There has been significant work on robust community detection in the presence of adversaries~\cite{MoitraPW16, MakarychevMV16, SteinhardtCV18, BanksMR21}.
Most of this focuses on monotone adversaries (which make only ``helpful'' changes to the graph) or edge corruptions.
It is not clear how to define monotone adversaries for the Erd\H{o}s-R\'enyi setting, and for our estimation problem under edge corruptions, the empirical estimator is trivially optimal in the worst case.
The work of Cai and Li~\cite{CaiL15} also considers a node corruption model similar to ours.
However, all of the aforementioned work studies community detection in stochastic block models, which is different from our goal of parameter estimation.

Our corruption model may seem reminiscent of the classic planted clique problem~\cite{Karp75, Jerrum92, Kucera95}, in which an algorithm must distinguish between a) $G(n,1/2)$ and b) $G(n,1/2)$ with the addition of a planted clique of size $\corr n$. 
Our adversary is given much more power (i.e., they can make arbitrary changes to the neighbourhoods of their selected nodes), though the two goals are incomparable.
The planted clique problem is known to be information-theoretically solvable for any $\corr > \frac{2\log n}{n}$.
However, polynomial-time algorithms are only known for $\corr > 1/\sqrt{n}$~\cite{AlonKS98}, and there is strong evidence that efficient algorithms do not exist for smaller values of $\corr$~\cite{FeigeK03, FeldmanGRVX17, MekaPW15, DeshpandeM15, HopkinsKPRS18, BarakHKKMP19}.
We have not run into issues in our setting related to this intractability, though deeper connections between our model and the planted clique problem would be interesting. 
Note that our task of parameter estimation is not interesting for the cases of the planted clique problem when $\corr \leq 1/\sqrt{n}$.
Simply using the empirical estimator on the two instances would give error $\approx 1/n$ and $\approx 1/n + \gamma^2 = O(1/n)$, which are identical up to constant factors.

Our setting bears some conceptual similarity to a line of robustness work focused on decomposing a matrix as a sum of a low rank matrix and a sparse matrix~\cite{ChandrasekaranSPW11, CandesLMW11, HsuKZ11}.
Our true parameter matrix is the rank-1 matrix $pJ$, where $J$ is the all-ones matrix.
However, the uncorrupted adjacency matrix is a sample from the distribution where each entry is a Bernoulli with the corresponding parameter, which is in general not low rank.
Furthermore, our corruption model allows for a bounded number of rows/columns to be changed, whereas this line of work requires that the corruptions satisfy some further sparsity, such as a limited number of changed entries per row/column, or that the corruption positions are chosen randomly.

\section{Notation and Preliminaries}

\label{sec:notations}
\paragraph{Problem Formulation.} Let $G\sim G(n,p)$. An adversary observes $G$  and chooses a subset $\bad\subseteq[n]$ of nodes with $|\bad|\le\corr n$. It can then change the status (i.e., presence or non-presence) of any edge with at least one node in $B$ to get a graph $\adver(G)$. Let $\goodnodes=[n]\setminus \bad$. 
We call $\bad$ the \emph{corrupted} nodes, and $\goodnodes$ the \emph{uncorrupted} nodes.
Let $\tilde{A}$ and $A$ be the $n\times n$ adjacency matrix of the original graph $G$ and the modified graph $\adver(G)$ respectively. 
Then $A_{\goodnodes\times \goodnodes} = \tilde A_{\goodnodes\times \goodnodes}$ and the remaining entries of $A$ can be arbitrary. 
Given $A$, the goal is to estimate $p$, the parameter of the underlying random graph model. The algorithm does not know the set $\bad$, though we assume that it knows the value of $\corr$.

\paragraph{Notation.}
The $\ell_2$ norm of a vector $v = [v_1,\ldots,v_n]\in \mathbb R^n$ is $\|v\|:= \sqrt{\sum_{i=1}^n v_i^2}$.
Suppose $M$ is an $m\times n$ real matrix. The spectral norm of $M$ is 
\begin{align}
\norm{M} := \max_{u\in\mathbb R^m, v\in \mathbb R^n: \norm{u}=1, \norm{v}=1}|{u^T M v}|.
\label{eqn:spec-norm}
\end{align}
It is easy to check that $\norm{M}= \max_{v\in \mathbb R^n: \norm{v}=1}\norm{ M v}$.
For a matrix $M$ and real number $a\in \mathbb R$, let $M-a$ be the matrix obtained by subtracting $a$ from each entry of $M$.
For ${S}\subseteq[m]$, $S'\subseteq [n]$, let $M_{S\times S'}$ be the $m\times n$ matrix that agrees with $M$ on $S\times S'$ and is zero elsewhere.
Similarly for a vector $v\in \mathbb R^n$ and $S\subseteq[n]$, let vector $v_S$ be the vector that agrees with with $v$ on $S$ and has zero entries elsewhere.


\paragraph{Matrix Properties.} Our algorithm will use several standard properties of the matrix spectral norm, which we state for completeness.

\begin{lemma}\label{lem:triangleineq}
Let $M, M' \in  \mathbb R^{m\times n} $, then
$
\norm{M+M'}
 \le
\norm{M}+\norm{M'}.
$
\end{lemma}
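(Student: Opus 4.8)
The plan is to prove this directly from the variational characterization of the spectral norm given in~\eqref{eqn:spec-norm}, namely $\norm{M} = \max_{\norm{u}=1,\norm{v}=1} \absv{u^T M v}$. The key observation is that for any fixed pair of unit vectors $u \in \mathbb{R}^m$, $v \in \mathbb{R}^n$, the quantity $u^T(M+M')v$ is just a real number, so the ordinary triangle inequality on $\mathbb{R}$ applies.

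Concretely, first I would fix arbitrary unit vectors $u,v$ and write $u^T(M+M')v = u^T M v + u^T M' v$ by bilinearity of the map $(u,v) \mapsto u^T(\cdot)v$. Then $\absv{u^T(M+M')v} \le \absv{u^T M v} + \absv{u^T M' v}$. By definition of the spectral norm, $\absv{u^T M v} \le \norm{M}$ and $\absv{u^T M' v} \le \norm{M'}$, since $u,v$ are unit vectors (even though they need not be the vectors achieving the maximum for $M$ or for $M'$ individually — this is the only place where one must be slightly careful, and it causes no trouble because the bound $\absv{u^T M v}\le\norm{M}$ holds for every unit pair). Hence $\absv{u^T(M+M')v} \le \norm{M} + \norm{M'}$ for every unit $u,v$.

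Finally I would take the maximum of the left-hand side over all unit $u,v$; since the right-hand side $\norm{M}+\norm{M'}$ does not depend on $u$ or $v$, this yields $\norm{M+M'} = \max_{\norm{u}=1,\norm{v}=1}\absv{u^T(M+M')v} \le \norm{M}+\norm{M'}$, as claimed. (Alternatively, one could use the equivalent characterization $\norm{M} = \max_{\norm{v}=1}\norm{Mv}$ together with the triangle inequality for the Euclidean norm on $\mathbb{R}^m$, i.e.\ $\norm{(M+M')v} = \norm{Mv + M'v} \le \norm{Mv} + \norm{M'v}$; the two arguments are essentially identical.) There is no real obstacle here — the statement is a standard norm axiom — so the ``hard part'' is merely to record the argument cleanly so it can be invoked later.
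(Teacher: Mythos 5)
Your proof is correct and is exactly the standard argument: fix unit $u,v$, split $u^T(M+M')v$ by bilinearity, apply the scalar triangle inequality, bound each term by the respective spectral norm, then take the maximum over $u,v$. The paper states this lemma without proof (it is a standard norm axiom), so there is nothing to compare against; your write-up is a clean and complete record of the omitted argument.
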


\begin{lemma}\label{lem:spsubmat}
For any $M \in  \mathbb R^{m\times n}$, $S\subseteq[m]$, $S'\subseteq [n]$, 
$
\|M_{S\times S'}\| \leq \|M\|$.
\end{lemma}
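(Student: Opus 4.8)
The plan is to reduce the quadratic form defining $\norm{M_{S\times S'}}$ to a quadratic form of $M$ evaluated on coordinate-restricted vectors, and then invoke the variational formula~\eqref{eqn:spec-norm}. First I would record the elementary identity
\[
u^T M_{S\times S'}\, v \;=\; u_S^T\, M\, v_{S'} \qquad \text{for all } u\in\mathbb R^m,\ v\in\mathbb R^n,
\]
where $u_S$ and $v_{S'}$ are the vectors agreeing with $u$ on $S$ and with $v$ on $S'$ respectively and zero elsewhere, in the notation of the preliminaries. Both sides expand to $\sum_{i\in S,\, j\in S'} u_i M_{ij} v_j$: the left side because $M_{S\times S'}$ vanishes outside $S\times S'$, and the right side because $u_S$ is supported on $S$ and $v_{S'}$ on $S'$.

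Next, fix unit vectors $u,v$. Zeroing out coordinates can only decrease the $\ell_2$ norm, so $\norm{u_S}\le \norm{u}=1$ and $\norm{v_{S'}}\le\norm{v}=1$. If $u_S=0$ or $v_{S'}=0$ then $u^T M_{S\times S'} v = 0 \le \norm{M}$ trivially. Otherwise set $\hat u = u_S/\norm{u_S}$ and $\hat v = v_{S'}/\norm{v_{S'}}$, which are unit vectors, and compute
\[
\bigl| u^T M_{S\times S'}\, v \bigr|
= \norm{u_S}\,\norm{v_{S'}}\,\bigl|\hat u^T M \hat v\bigr|
\le \norm{u_S}\,\norm{v_{S'}}\,\norm{M}
\le \norm{M},
\]
where the first inequality is the definition~\eqref{eqn:spec-norm} of $\norm{M}$ applied to the unit vectors $\hat u,\hat v$. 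Taking the supremum over all unit $u,v$ then yields $\norm{M_{S\times S'}}\le\norm{M}$, which is the claim.

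There is essentially no obstacle here; the only subtlety worth flagging is the degenerate case in which a restricted vector vanishes and hence cannot be normalized, which is dispatched separately above. If one prefers to avoid the case split, one can instead note that $(u,v)\mapsto u_S^T M v_{S'}$ depends only on the pair $(u_S,v_{S'})$, which as $u,v$ range over the unit spheres sweeps out a subset of the product of the closed unit balls of $\mathbb R^m$ and $\mathbb R^n$, and that the modulus of a bilinear form on a product of balls is maximized on the product of the spheres; either route gives the inequality immediately.
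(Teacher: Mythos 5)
Your proof is correct and follows essentially the same route as the paper: rewrite $u^T M_{S\times S'} v$ as $u_S^T M v_{S'}$, normalize the restricted vectors, and invoke the variational characterization of the spectral norm. The only difference is that you explicitly handle the degenerate case $u_S=0$ or $v_{S'}=0$, which the paper's proof silently glosses over.
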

\begin{proof}
\tcr{For any unit vectors $u\in \mathbb R^m$ and $v\in \mathbb R^n$, let $\tilde u = u_S/\|u_S\|$ and $\tilde v = v_{S'}/\|v_{S'}\|$. Then 
\[|u^\intercal M_{S\times S'}v| = |u^\intercal_{S} M v_{S'}|= \|u_S\|\cdot\|v_{S'}\|\cdot|\tilde u^\intercal M\tilde v|\le |\tilde u^\intercal M\tilde v|\le \|M\|,\]
where the second last inequality used $\|u_S\|\le \|u\|=1$ and $\|v_{S'}\|\le \|v\|=1$ and the last inequality used that $\tilde u$ and $\tilde v$ are unit vectors. Finally, in the above equation taking maximum over all unit vectors $u,v$ completes the proof.
}
\end{proof}

\begin{lemma} \label{lem:specnormlb}
For any $M \in  \mathbb R^{m\times n}$,
$
\|M\| \geq \frac{\left|\sum_{i,j}M_{i,j}\right|}{\sqrt{mn}}.
$
\end{lemma}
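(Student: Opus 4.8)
The plan is to use the variational characterization of the spectral norm given in \eqref{eqn:spec-norm}, namely $\norm{M} = \max_{u,v:\,\norm{u}=\norm{v}=1}|u^\intercal M v|$, and simply exhibit a convenient pair of unit test vectors. The natural choice is $u = \tfrac{1}{\sqrt{m}}\mathbf{1}_m$ and $v = \tfrac{1}{\sqrt{n}}\mathbf{1}_n$, where $\mathbf{1}_k$ denotes the all-ones vector in $\mathbb{R}^k$. These satisfy $\norm{u} = \norm{v} = 1$ by construction.

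With this choice, I would compute $u^\intercal M v = \tfrac{1}{\sqrt{mn}}\sum_{i,j} M_{i,j}$ directly, since $u^\intercal M v = \sum_{i,j} u_i M_{i,j} v_j$ and every $u_i = 1/\sqrt{m}$, $v_j = 1/\sqrt{n}$. Taking absolute values and invoking the maximization in \eqref{eqn:spec-norm}, I get $\norm{M} \ge |u^\intercal M v| = \tfrac{|\sum_{i,j} M_{i,j}|}{\sqrt{mn}}$, which is exactly the claim. There is essentially no obstacle here; the only thing to be careful about is confirming that the stated characterization of $\norm{M}$ (as opposed to, say, $u^\intercal M v$ without the absolute value) is the one in the paper, which it is, so the sign of $\sum_{i,j} M_{i,j}$ causes no issue.
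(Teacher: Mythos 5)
Your proof is correct and uses exactly the same approach as the paper: choose the normalized all-ones vectors $u = \tfrac{1}{\sqrt m}\mathbf 1_m$ and $v = \tfrac{1}{\sqrt n}\mathbf 1_n$ and apply the variational characterization of $\norm{M}$ from~\eqref{eqn:spec-norm}. Nothing to add.
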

\begin{proof}
Consider $ u = \frac{1}{\sqrt{m}}[1,1,\dots,1]$ and $ v = \frac{1}{\sqrt{n}}[1,1,\dots,1]^T$, which are unit vectors in $\mathbb R^{m}$ and $\mathbb R^{n}$, respectively. Then $|{u^T M v}| =  \frac{|\sum_{i,j}M_{i,j}|}{\sqrt{mn}}\le \norm{M}$ by~\eqref{eqn:spec-norm}.
\end{proof}


\section{Mean- and Median-based Algorithms}
To demonstrate the need for our more sophisticated algorithms in Section~\ref{sec:upper-bounds}, we first analyze canonical robust estimators for univariate settings -- specifically, approaches based on trimming and order statistics (i.e., the median). 

Recall the mean and median estimators for $p$ in~\eqref{eqn:empirical-mean-estimator} and~\eqref{eqn:empirical-median-estimator}.
The following simple lemma quantifies their guarantees in the setting absent corruptions.

\begin{lemma}
Suppose $\corr=0$. There exists a constant $C>0$ such that with probability at least 0.99,
$\absv{\hat p_{\rm mean} (G) - p} \le C \cdot \frac{\sqrt{p(1-p)}}n,  \ \text{and} \ \ \absv{\hat p_{\rm med} (G) - p} \le C \cdot \frac1n$.
\label{lem:without-corruptions}
\end{lemma}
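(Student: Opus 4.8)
The plan is to prove the two bounds separately, both via standard concentration arguments for the uncorrupted graph $G \sim G(n,p)$.

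\medskip
\noindent\textbf{The mean estimator.} The quantity $\binom{n}{2}\hat p_{\rm mean}(G)$ is the number of edges in $G$, which is a sum of $\binom{n}{2}$ independent $\Ber(p)$ random variables, so it has mean $\binom{n}{2}p$ and variance $\binom{n}{2}p(1-p)$. Hence $\hat p_{\rm mean}(G)$ is unbiased with variance $p(1-p)/\binom{n}{2} = \Theta(p(1-p)/n^2)$. By Chebyshev's inequality, with probability at least $0.99$ we get $|\hat p_{\rm mean}(G) - p| \le 10\sqrt{p(1-p)/\binom{n}{2}} = O(\sqrt{p(1-p)}/n)$, which is the claimed bound. (One could instead invoke a Bernstein/Chernoff bound to get a high-probability version, but Chebyshev already suffices for the stated constant-probability guarantee.)

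\medskip
\noindent\textbf{The median estimator.} Here the goal is the dimension-free rate $O(1/n)$, i.e.\ there is no $\sqrt{p(1-p)}$ factor, so Chebyshev on a single degree is too weak and we must exploit that we are taking a \emph{median} over $n$ (nearly independent) degrees. For a fixed node $i$, $\deg(i)$ is a sum of $n-1$ independent $\Ber(p)$ variables, so $\expectation{\deg(i)/(n-1)} = p$ and, by a Chernoff/Bernstein bound, $\pr{|\deg(i)/(n-1) - p| > t}$ is small once $t \gtrsim \sqrt{p(1-p)\log(\cdot)}/n + \log(\cdot)/n$. I would set the target deviation to $t = C/n$ for a suitable constant and argue as follows: let $q^- = p - C/n$ and $q^+ = p + C/n$. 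A node $i$ is ``low'' if $\deg(i) < (n-1)q^-$ and ``high'' if $\deg(i) > (n-1)q^+$. By a binomial tail estimate, $\pr{\deg(i)/(n-1) < q^-} \le 1/2 - c$ and likewise for the high side, for an appropriate constant $c>0$ — the key point being that shifting a $\Bin(n-1,p)$ by $\Theta(1)$ in absolute count (i.e.\ $\Theta(1/n)$ in the normalized scale) changes its CDF at the mean by only a constant less than $1/2$; this is where the $\sqrt{p(1-p)}$-free rate comes from, since the standard deviation of $\deg(i)$ is at least order $1$ in the relevant regime (and when $p$ is so extreme that $np(1-p) = o(1)$, the degrees are essentially all equal to $0$ or $n-1$ and the claim is immediate). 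The median exceeds $(n-1)q^+$ only if at least $n/2$ nodes are high; since the degrees are not independent I would handle the dependence either by a union bound over the $n$ nodes combined with the observation that the number of high nodes has expectation $\le (1/2-c)n$, controlled via a bounded-differences/Azuma inequality on the edge indicators (changing one edge changes the count of high nodes by at most $2$), or by directly noting $\pr{\text{Median} > (n-1)q^+}$ requires a large deviation in the number of high nodes. Symmetrically for the low side. A union bound over the two events gives $|\hat p_{\rm med}(G) - p| \le C/n$ with probability at least $0.99$.

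\medskip
\noindent\textbf{Main obstacle.} The mean bound is routine. The real content is the median bound, and the subtlety is twofold: (i) getting the \emph{dimension-free} $1/n$ rate requires a careful binomial anticoncentration statement (a shift of $O(1)$ in the count moves the median's threshold probability off $1/2$ by only a constant), including correctly treating the degenerate regime where $np(1-p)$ is small; and (ii) the degrees $\deg(1),\dots,\deg(n)$ share edges and hence are dependent, so the step ``few nodes are high with high probability'' needs a concentration tool robust to this dependence (bounded differences over the $\binom{n}{2}$ edge indicators, each of which influences the high-node count by $O(1)$). I expect (i) to be the main place where care is needed to avoid an unwanted $\sqrt{p(1-p)}$ or $\log n$ factor.
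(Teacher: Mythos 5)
Your argument matches the paper exactly --- Chebyshev on the edge count.

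\textbf{Median estimator:} This is where your proposal has a genuine gap, in two linked places.

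First, you assert that for the target accuracy $t = C/n$ one gets $\Pr[\deg(i)/(n-1) < p - C/n] \le 1/2 - c$ for a \emph{constant} $c > 0$. This is too optimistic. Shifting the threshold by $C/n$ in the normalized scale is a shift of $\Theta(1)$ in the raw degree count, which corresponds to $\Theta\bigl(1/\sqrt{np(1-p)}\bigr) = O(1/\sqrt{n})$ standard deviations. The binomial CDF therefore moves away from $1/2$ by only $\Theta(1/\sqrt{n})$, not a constant. So the expected number of low (or high) nodes is $n/2 - \Theta(\sqrt n)$, not $(1/2-c)n$. (The paper proves precisely this, using unimodality of the binomial to get $\Pr[\Bin(n',p)\le pn'-121]\le 1/2 - 15/\sqrt{n'+1}$.)

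Second, and consequently, your proposed tool for the dependence --- bounded differences/Azuma over the $\binom n2$ edge indicators, each affecting the count of ``high'' nodes by at most $2$ --- is quantitatively insufficient. McDiarmid with $m=\binom n2$ variables and Lipschitz constants $2$ gives a tail bound $\exp(-t^2/(2m))$, which is nontrivial only for $t = \Omega(n)$. But by the first point you only have a gap of $\Theta(\sqrt n)$ between the expectation and the median threshold $n/2$. So the argument does not close. The paper instead computes the pairwise covariance directly, exploiting that $Y_i$ and $Y_j$ share only the single edge $(i,j)$: it shows $\Cov(Y_1,Y_2) = p(1-p)\Pr[Y_{12}=t]^2 = O(1/n)$ (via Stirling), hence $\Var\bigl(\sum_i Y_i\bigr) = O(n)$, and then a one-sided Chebyshev (Cantelli) over a deviation of $15\sqrt n$ finishes. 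This explicit covariance calculation --- not a generic bounded-differences bound --- is the tool that is actually strong enough. Note also that you flagged (i) (anticoncentration) as the main difficulty and (ii) (dependence) as more routine; in fact both are subtle, and the generic concentration inequality you reached for in (ii) does not suffice.
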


The analysis of these estimators is not difficult, but we include them for completeness in Section~\ref{sec:proof-without-corruptions}.
Analysis of the median estimator is slightly more involved due to correlations between nodes.


While both estimators are optimal up to constant factors (for constant $p$) without corruptions, their performance decays rapidly in the presence of an adversary, scaling at least linearly in the corruption fraction $\corr$. In particular, consider an adversary that picks $\corr n$ nodes at random and either adds all the edges with at least one endpoint in $B$ or removes all of them. In Section~\ref{sec:mean-median-with-corruptions} we prove the following lower bound on the performance of the mean and median estimators for such an adversary.
Observe that if even one node is corrupted (i.e., $\corr\ge 1/n$), the error in Lemma~\ref{lem:with-corruptions} dominates the error without corruptions in Lemma~\ref{lem:without-corruptions}. 

\begin{lemma}
There exists an adversary $\adver$ such that for $\hat p\in \{\hat p_{\rm mean}(\adver(G)), \hat p_{\rm med}(\adver(G))\}$ with probability at least 0.5, we have $\absv{\hat p - p} \ge \corr/2.$
\label{lem:with-corruptions}
\end{lemma}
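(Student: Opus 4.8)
The plan is to exhibit a single oblivious adversary that defeats both estimators simultaneously, using the observation (already noted in the text) that the adversary's best move is to either fill in or delete all edges touching the corrupted set $B$. Concretely, I would let the adversary choose $B \subseteq [n]$ of size $\lceil \corr n \rceil$ uniformly at random (it does not even need to see $G$), and then, with probability $1/2$ each, either (i) add every edge with at least one endpoint in $B$, or (ii) remove every such edge. Call the two resulting graphs $\adver_+(G)$ and $\adver_-(G)$. The key point is that the adversary commits to one of these two behaviors before seeing $G$, so it is genuinely oblivious, and I will argue that no matter which of the two the adversary picks, at least one of the two estimators (and in fact both) incurs error $\ge \corr/2$ for a suitable worst-case choice; more carefully, I'd show that for each fixed estimator there is a choice among $\{+,-\}$ that fails, and since both estimators are monotone in the same direction this choice can be made common.

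The analysis then splits into the two estimators. For the mean estimator: the number of edges touching $B$ is roughly $\binom{n}{2} - \binom{(1-\corr)n}{2} \approx \corr(2-\corr)\binom{n}{2} \ge \corr \binom{n}{2}$ of the total, so adding all of them pushes $\hat p_{\rm mean}$ up by at least about $\corr(1-p)$ (on top of $p$), while removing all of them pushes it down by at least about $\corr p$ times a similar factor; in either case, one of the two directions moves the estimate by $\ge \corr/2$ away from $p$ once we account for the concentration of the number of edges inside $\goodnodes \times \goodnodes$ and the $\approx 2\corr n \cdot n$ edges touching $B$. I would make this precise with a Chernoff/Hoeffding bound: with probability $\ge 0.99$, the edge count inside $\goodnodes$ is within $O(\sqrt{p(1-p)}\,n)$ of its mean, which is a lower-order term compared to $\corr \binom n 2$ when $\corr \ge 1/n$, and so in the relevant direction $|\hat p_{\rm mean}(\adver_\pm(G)) - p| \ge \corr/2$ holds with probability $\ge 0.5$ (the factor $0.5$ absorbing the case split on which direction the adversary took, or alternatively we just fix the bad direction for this estimator).

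For the median estimator: after operation (i), every node in $B$ has degree $n-1$ and, more importantly, every node in $\goodnodes$ is now adjacent to all of $B$, so each good node's degree is at least $|B| \ge \corr n$ plus its original degree restricted to $\goodnodes$, which concentrates around $(1-\corr)(n-1)p$. Hence the median degree is at least about $\corr n + (1-\corr)(n-1)p$, giving $\hat p_{\rm med} \ge p + \corr(1-p) - o(\corr) \ge p + \corr/2$ in the worst case over $p$ (or, symmetrically, operation (ii) forces the median down by $\ge \corr/2$). I'd again invoke a concentration bound for the good-node degrees — here one needs to be slightly careful because degrees of good nodes within $\goodnodes$ are not independent, but a union bound over the $n$ nodes together with a Chernoff bound on each $\Bin((1-\corr)n - 1, p)$-type degree suffices to say all good degrees lie within $O(\sqrt{p(1-p)\log n})$ of their mean with probability $\ge 0.99$, which is again lower-order. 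Combining, with probability $\ge 0.5$ we get $|\hat p_{\rm med}(\adver(G)) - p| \ge \corr/2$ for the bad direction.

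The main obstacle, and the only genuinely delicate point, is making the adversary legitimately \emph{oblivious} while failing \emph{both} estimators at once: a priori, adding edges fools the mean/median upward and removing edges fools them downward, so one might worry the adversary must know $p$ to pick the right direction. This is resolved by the fact that the adversary \emph{is} allowed to know $p$ (both adversary models in the setup grant knowledge of $p$), so it simply picks the direction that moves away from $p$ — e.g. add edges if $p \le 1/2$, remove them if $p \ge 1/2$ — and this single deterministic choice simultaneously pushes both $\hat p_{\rm mean}$ and $\hat p_{\rm med}$ by $\ge \corr/2$ since both estimators respond monotonically in the same way to bulk edge additions/removals. The rest is routine concentration bookkeeping to confirm the lower-order fluctuations do not erode the $\corr/2$ gap; I would present the $p \le 1/2$ case in detail and note the other is symmetric.
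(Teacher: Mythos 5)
You identify the same adversary as the paper (randomly add or remove all edges touching a random $B$ of size $\approx\corr n$), but you then replace the argument with a concentration-based analysis and a final deterministic ``pick the direction that moves away from $p$'' step. This misses the key observation that makes the paper's proof both short and unconditional: for \emph{every} node $i$ and \emph{every} realization of $G$, $\deg_{\adver_+(G)}(i) \ge \deg_{\adver_-(G)}(i) + |B| \ge \deg_{\adver_-(G)}(i) + \corr n$, hence $\hat p(\adver_+(G)) - \hat p(\adver_-(G)) \ge \corr$ deterministically for both the mean and the median. By the triangle inequality, at least one of $\hat p(\adver_+(G))$, $\hat p(\adver_-(G))$ must be at distance $\ge \corr/2$ from $p$, no matter what $p$ is and with no need to locate the mean/median of the uncorrupted graph. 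Since the adversary flips a fair coin between the two, it lands on the bad one with probability $1/2$. No Chernoff bound, no knowledge of $p$, and no case split on $p \lessgtr 1/2$ are needed.

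Your concentration route also has a real gap at small $\corr$. Your final paragraph proposes a deterministic adversary and argues that, say for $p\le 1/2$, adding edges pushes the median estimate up by $\approx \corr(1-p) \ge \corr/2$ while the fluctuation of the uncorrupted median is ``lower order''. But the fluctuation of $\hat p_{\rm med}(G)$ is $\Theta(1/n)$ (Lemma~\ref{lem:without-corruptions}, with constant $\approx 121$), which is \emph{not} lower order relative to $\corr/2$ when $\corr = \Theta(1/n)$ — e.g.\ when only one node is corrupted. The same issue afflicts the mean at the boundary $\corr \approx 1/n$. So your argument would only establish the lemma for $\corr \gtrsim c/n$ with a sizable absolute constant $c$, whereas the lemma is stated for all $\corr$ and the triangle-inequality argument handles this uniformly. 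If you want to keep your ``identify the bad direction'' framing, you would at minimum need to threshold $\corr$ and handle small $\corr$ separately; but it is cleaner to drop that framing entirely and use the pigeonhole/triangle-inequality step, which also resolves your worry about whether the adversary needs to know $p$ (it does not).
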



A common strategy in robust statistics is to prune or trim the most extreme outliers.
Accordingly, in our setting, one may prune the nodes with the most extreme degrees, described in Algorithm~\ref{algo:prune}.
This strategy bypasses the adversary which provides the lower bound in Lemma~\ref{lem:with-corruptions}.


\begin{algorithm}[H]
  \begin{algorithmic}
    \Require{A graph $\adver(G)$, corruption parameter $\corr$, a constant $c>0$}
\State{Remove $c\corr n$ nodes with largest and smallest degrees from $\adver(G)$}
\State{Apply the mean/median estimator from~\eqref{eqn:empirical-mean-estimator}/\eqref{eqn:empirical-median-estimator} to the resulting graph on $(1-2c\gamma)\cdot n$ nodes}
  \end{algorithmic}
  \caption{\label{algo:prune}Prune-then-mean/median algorithm}
\end{algorithm}

However, this strategy can only go so far. 
Roughly speaking, pruning improves the mean's robust accuracy from $\Theta(\corr)$ to $\Theta(\corr^2)$, while pruning does not improve the median's robust accuracy.
The upper and lower bounds are described in Theorems~\ref{thm:prune-median-ub} and~\ref{thm:prune-median-lb}, and proved in Sections~\ref{sec:prune-ub} and~\ref{sec:prune-lb}, respectively.

\begin{restatable}{theorem}{prunethenmedianub}
\label{thm:prune-median-ub}
For $c\ge 1$ and $0<\gamma \cdot c<0.25$, the prune-then-mean and prune-then-median estimators described in Algorithm~\ref{algo:prune} prune $2c\gamma n$ nodes in total and with probability $1-n^{-2}$ estimates $p$ to an accuracy $\mathcal O\bigl(c\corr^2+\frac{\log n}{n}\bigr)$ and $\mathcal O\bigl(c\gamma+\sqrt{\frac{\log n}{n}} \bigr)$, respectively.
\end{restatable}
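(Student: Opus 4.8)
The plan is to analyze the two estimators separately, in each case first controlling what the adversary can do to the set of pruned nodes and the residual graph, and then bounding the error of the base estimator on the surviving subgraph. Throughout I will condition on the high-probability regularity event that, for every subset $T \subseteq [n]$ with $|T| \ge n/2$, the empirical edge density of $T$ is within $O(\sqrt{p(1-p)\log n}/|T| + \log n / |T|)$ of $p$ and every node has degree within $O(\sqrt{n p(1-p)\log n} + \log n)$ of $(n-1)p$; these follow from Bernstein/Chernoff bounds and a union bound over the $2^n$ subsets (paying the $\log n$ factors), and they are presumably already packaged in the ``without corruptions'' analysis referenced as Lemma~\ref{lem:without-corruptions}. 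Let $B$ be the corrupted set, $|B| = \gamma n$, and note at most $2c\gamma n$ nodes are pruned, so the surviving set $S$ has $|S| = (1 - 2c\gamma)n \ge n/2$.

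\textbf{Prune-then-mean, the $O(c\gamma^2 + \log n / n)$ bound.} The key observation is a counting/charging argument on degrees. Let $D$ be the set of pruned nodes (size exactly $2c\gamma n$, say $c\gamma n$ from each tail). Consider the uncorrupted nodes that survive, $S \cap F$, and the corrupted nodes that survive, $S \cap B$. For an uncorrupted surviving node its degree in $\adver(G)$ differs from its true degree only through edges to surviving corrupted nodes, so $|S \cap B| \le \gamma n$ many edges; meanwhile each corrupted node that was \emph{not} pruned has degree sandwiched between the largest and smallest pruned degrees, hence (by the regularity event applied to the $c\gamma n$ pruned uncorrupted nodes, of which there must be at least $c\gamma n - \gamma n \ge (c-1)\gamma n \ge 0$ in at least one tail — here is where $c \ge 1$ enters) within $O(\gamma n)$ of $(n-1)p$ up to the regularity slack. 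Summing the edge-count discrepancies: the total number of edges in the surviving graph differs from the ``true'' count $\binom{|S|}{2} p$ by at most $O(\gamma n \cdot \gamma n) = O(\gamma^2 n^2)$ from corrupted-to-surviving edges, plus the $O(\binom{|S|}{2}\cdot(\sqrt{p(1-p)\log n}/|S| + \log n/|S|))$ regularity slack. Dividing by $\binom{|S|}{2} = \Theta(n^2)$ gives error $O(c\gamma^2 + \sqrt{p(1-p)\log n}/n + \log n/n)$; absorbing the middle term (which is $\le \frac12(\gamma^2 + \log n/n)$ type bound, or simply noting $\sqrt{p(1-p)\log n}/n \le \log n / n$ when... actually $\sqrt{p(1-p)}\le 1$ so it is $\le \sqrt{\log n}/n \le \log n/n$) yields the claimed $O(c\gamma^2 + \log n/n)$.

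\textbf{Prune-then-median, the $O(c\gamma + \sqrt{\log n / n})$ bound.} Here the analysis is softer. After removing $2c\gamma n$ nodes, the surviving set has $(1-2c\gamma)n$ nodes, of which at most $\gamma n$ are corrupted; so among the surviving nodes the fraction that are uncorrupted is at least $1 - \gamma/(1-2c\gamma) \ge 1 - O(\gamma)$ (using $c\gamma < 1/4$). The median of the surviving degrees is therefore within the $O(\gamma n)$-order statistic window of the uncorrupted surviving degrees. By the regularity event, all uncorrupted degrees (measured in the original graph) lie in an interval of width $O(\sqrt{n p(1-p)\log n} + \log n) = O(\sqrt{n\log n})$ around $(n-1)p$; but the degrees of surviving uncorrupted nodes are measured in $\adver(G)$, which can shift each by up to $|S\cap B| \le \gamma n$. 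Hence the median degree is within $O(\gamma n + \sqrt{n\log n})$ of $(n-1)p$, and dividing by $n-1$ gives $|\hat p_{\mathrm{med}} - p| = O(\gamma + \sqrt{\log n/n})$. Tracking the constant $c$ through the order-statistic displacement gives the stated $O(c\gamma + \sqrt{\log n/n})$.

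\textbf{Main obstacle.} The delicate point is the prune-then-mean bound: one must argue that the non-pruned corrupted nodes cannot have degrees far from $(n-1)p$, which requires that \emph{each} degree tail of the pruned set contains enough genuinely uncorrupted nodes to ``pin'' the threshold — this is exactly why $c \ge 1$ is assumed, and getting the constant right (so that at least one tail has $\ge (c-1)\gamma n$ uncorrupted nodes after the adversary stuffs $\gamma n$ of its nodes into the tails) is the crux. The rest is Bernstein-type concentration plus bookkeeping; I would also double-check that the union bound over all $\ge n/2$ subsets is only needed for the mean analysis and that a cruder argument suffices for the median, to keep the $\log$ factors as stated.
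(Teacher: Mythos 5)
Your high-level strategy matches the paper's: condition on a degree-concentration event, observe that $c\ge 1$ ensures every corrupted node with an outlier degree is pruned, and then bound the surviving graph's edge count by splitting into good--good edges (controlled by an edge-density regularity condition) and edges touching surviving corrupted nodes (controlled by the degree bound). The median argument and the prune-then-mean decomposition are both organized the same way as in the paper.

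However, the regularity event you condition on is not correct, and the gap matters. You assert that with high probability, \emph{every} subset $T\subseteq[n]$ with $|T|\ge n/2$ has empirical edge density within $O\bigl(\sqrt{p(1-p)\log n}/|T|+\log n/|T|\bigr)=\tilde O(1/n)$ of $p$, and that this ``follows from Bernstein/Chernoff bounds and a union bound over the $2^n$ subsets (paying the $\log n$ factors).'' A union bound over $2^n$ subsets does not cost only $\log n$: taking failure probability $e^{-t}$ with $t\approx n$, Bernstein for a sum of $\Theta(n^2)$ Bernoullis gives deviation $\approx\sqrt{n^2p(1-p)\cdot n}=n^{3/2}\sqrt{p(1-p)}$, i.e.\ a density deviation of order $\sqrt{p(1-p)/n}$ --- a factor of $\sqrt{n}$ worse than what you wrote, and indeed this is the right answer for the worst subset of size $n/2$ (it is what the paper's Theorem~\ref{th:mainconth} gives at $\alpha=1/2$). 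The event you rely on simply does not hold, and the purported derivation would not give it.

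What the paper actually needs --- and what Theorem~\ref{th:mainconth} supplies --- is a regularity bound \emph{parametrized by the fraction $\alpha$ of excluded nodes}: for subsets that leave out at most $\alpha n$ rows/columns, the sum deviation is $\tilde O(\alpha n^{3/2}\sqrt p)$, hence density deviation $\tilde O(\alpha\sqrt{p/n})$. This is much smaller than $\sqrt{p/n}$ when $\alpha=\Theta(c\gamma)$ is small, and it is exactly what turns into the $O(c\gamma^2+\log n/n)$ error after an AM--GM step. The trick (which you would need but did not invoke) is that one does not directly union-bound over the large sets $T$; one instead fixes the total edge count with a single Chernoff bound and then union-bounds over the \emph{small} complements $T^c$ with $|T^c|\le 2\alpha n^2$, so the union bound only costs $\alpha n\log(1/\alpha)$ in the exponent rather than $n$. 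Because your good--good term cites the too-strong $\tilde O(1/n)$ regularity bound, that portion of the argument would need to be replaced with the $\alpha$-parametrized bound (or a direct restriction of the union bound to subsets excluding at most $3c\gamma n$ nodes); once that replacement is made, your remaining bookkeeping for the corrupted-touching edges and for the median are essentially correct and match the paper.
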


\begin{restatable}{theorem}{prunethenmedianlb}
\label{thm:prune-median-lb}
Let $p=0.5$,  $\gamma>100\cdot\sqrt{{\log n}/{n}}$, and $c>0$ be such that $c\gamma<0.25$. There exists an adversary such that with probability at least 0.99, the prune-then-median estimate  that deletes $c\gamma n$ satisfies $\left|\hat p(\adver(G)) - p\right|\ge C'\gamma$, and the prune-then-mean estimate  satisfies $\left|\hat p(\adver(G)) - p\right|\ge C'\gamma^2$.
\end{restatable}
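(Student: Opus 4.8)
\emph{Approach.} The plan is to exhibit explicit oblivious adversaries and, on a high-probability event about $G\sim G(n,1/2)$ (degree concentration and density-regularity of vertex subsets, via Chernoff and union bounds, using $\gamma>100\sqrt{\log n/n}$ to dominate the $O(\sqrt{n\log n})$ fluctuations of degrees), to read off exactly which nodes the prune step discards and the mean/median degree of the surviving induced subgraph.

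\emph{The median bound (and the mean bound for $c<1$).} Take the adversary of Lemma~\ref{lem:with-corruptions}: choose $B$ uniformly at random with $|B|=\gamma n$ and turn on every edge incident to $B$. Then in $\adver(G)$ every corrupted node has degree $n-1$, while every uncorrupted node $u$ has degree $\mathrm{Bin}(|F|-1,\tfrac12)+\gamma n$, which simultaneously lies within $(1-\gamma)n/2+\gamma n\pm O(\sqrt{n\log n})$ with probability $\ge 1-n^{-2}$; since $\gamma n\gg\sqrt{n\log n}$ these two bands are disjoint. Hence the $c\gamma n$ largest-degree nodes are all corrupted, so the prune step removes $c\gamma n$ corrupted nodes from the top and $c\gamma n$ genuinely low-degree uncorrupted nodes from the bottom, and the surviving set $S$ still contains $(1-c)\gamma n$ corrupted nodes. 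In the graph induced on $S$, each surviving corrupted node is joined to all $\approx(1-\gamma-c\gamma)n$ surviving uncorrupted nodes and each surviving uncorrupted node is joined to all $(1-c)\gamma n$ surviving corrupted nodes, while the uncorrupted-to-uncorrupted part still has density $\tfrac12\pm O(\sqrt{\log n/n})$ (the prune step's selection among $F$-nodes depends only on the $F$-to-$F$ edges, which the adversary never altered, so deleting their bottom $c\gamma n$ biases the residual density only at lower order). Substituting these counts into~\eqref{eqn:empirical-mean-estimator} and~\eqref{eqn:empirical-median-estimator} --- for the median using $(1-c)\gamma n<|S|/2$, so the median is attained at an uncorrupted node whose degree is inflated by $(1-c)\gamma n$ --- gives after a short computation $\hat p_{\rm mean}(\adver(G))-\tfrac12=\Theta\bigl((1-c)\gamma\bigr)$ and $\hat p_{\rm med}(\adver(G))-\tfrac12=\Theta\bigl((1-c)\gamma\bigr)$, both at least $C'\gamma\ge C'\gamma^2$ once $c$ is bounded away from $1$.

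\emph{The mean bound for the remaining $c$.} When $c$ is large the above adversary gets all corrupted nodes pruned, so instead use an adversary that \emph{hides} $B$: make the induced subgraph on $B$ be $r\gamma n$-regular and the bipartite graph between $B$ and $F$ biregular, with the $B$-side density $q$ solving the degree-matching equation $q(1-2\gamma)=(1-\gamma)/2-r\gamma$, choosing $r$ bounded away from $\tfrac12$ with $q\in[0,1]$ (e.g.\ $r\in\{0,1\}$ for $\gamma\le\tfrac13$). Now every corrupted node has exactly the typical uncorrupted degree, up to $O(\sqrt{n\log n})\ll\gamma n$, so no corrupted node is pruned and $S\supseteq B$. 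Counting the edges inside $S$ (the $r\gamma^2n^2/2$ edges inside $B$; the $B$-to-$(F\cap S)$ edges, whose count matches its expectation to $O(\sqrt{n\log n})$ because the bipartite graph is independent of which $F$-nodes get pruned; and the essentially unbiased density-$\tfrac12$ edges inside $F\cap S$) and comparing with $\tfrac12\binom{|S|}{2}$, the degree-matching constraint makes the leading terms collapse to $\hat p_{\rm mean}(\adver(G))-\tfrac12=\gamma^2(1-2r)\cdot\frac{1-4c\gamma}{2(1-2\gamma)(1-2c\gamma)^2}$, which is $\Omega(\gamma^2)$ for the chosen $r$ (and $c\gamma$ bounded away from $\tfrac14$).

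\emph{Main obstacle.} The hard part is the coupling between the degree-based prune step and the residual density/median of $S$: one must show that conditioning on which $F$-nodes survive does not manufacture a shift comparable to the $\Theta(\gamma)$ (resp.\ $\Theta(\gamma^2)$) signal. This is controlled because, against an oblivious adversary, the $F$-to-$F$ edges of $G$ are independent of every choice the adversary makes; consequently restricting the $B$-to-$F$ bipartite graph to the surviving $F$-nodes preserves its edge count up to $O(\sqrt{n\log n})$, and deleting the degree-extremes of a density-$\tfrac12$ graph perturbs the density of the remainder by only $O(\sqrt{\log n/n})$ --- exactly the additive lower-order terms already present in the matching upper bound of Theorem~\ref{thm:prune-median-ub}. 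The hypothesis $\gamma>100\sqrt{\log n/n}$ is precisely what forces the signal to dominate these corrections, and a union bound over the $O(n)$ concentration events gives the stated probability $0.99$.
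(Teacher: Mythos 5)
Your proposal splits into two adversaries: the all-edges-on adversary of Lemma~\ref{lem:with-corruptions} for $c<1$, and a ``hidden'' adversary whose corrupted nodes have degree matching the typical uncorrupted degree for $c\ge 1$. This is a genuinely different route from the paper's, which designs a \emph{single} adversary containing two explicit decoy sets ($S_0$ of low degree, $S_1$ of high degree, each of size $c\gamma n$) chosen so that the prune step removes exactly $S_0$ and $S_1$ while the actual corrupted set $B$ survives with an $\Omega(\gamma)$-shifted degree. Unfortunately your decomposition leaves real gaps.

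The most serious is the median bound for $c\ge 1$ (or $c$ close to $1$). With the all-edges-on adversary, when $c\ge 1$ all of $B$ is pruned; the surviving graph is an induced subgraph of $F$ (unaffected by the adversary), so the median degree is $\approx\tfrac12|S|$ and $\hat p_{\rm med}-\tfrac12=O(\sqrt{\log n/n})$, not $\Omega(\gamma)$. For $c<1$, the bias is $\approx\frac{(1-c)\gamma}{2(1-2c\gamma)}$, which already vanishes as $c\to 1$, so ``$c$ bounded away from $1$'' is not a harmless caveat. Your second (hiding) adversary does not rescue the median either: if every surviving node has essentially the same degree $\approx(1-(2c+1)\gamma)n/2 + q\gamma n$, then normalizing by $(1-2c\gamma)n-1$ gives a shift of only $\Theta(\gamma^2)$, not $\Theta(\gamma)$. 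So there is simply no adversary in your proposal delivering $\left|\hat p_{\rm med}-\tfrac12\right|\ge C'\gamma$ for $c\ge 1$ with $c\gamma<0.25$, which the theorem requires. The underlying issue is exactly what the paper's decoy sets are designed to solve: to fool prune-then-median for large $c$, the adversary must place $B$ in the interior of the degree order with a $\Theta(\gamma)$ offset, while planting other, more extreme nodes to absorb the pruning. Making $B$'s degrees exactly match typical (your hiding adversary) throws away the $\Theta(\gamma)$ median signal.

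Two secondary issues: (i) your explicit bias formula for the hiding adversary carries a factor $(1-4c\gamma)$, so the mean bias can be made arbitrarily smaller than $\gamma^2$ as $c\gamma\to 1/4$; the theorem allows $c\gamma$ arbitrarily close to $0.25$ and expects a universal $C'$, whereas the paper's construction gives a bias $\approx\frac{\gamma^2}{10(1-2c\gamma)^2}\ge\gamma^2/10$ uniformly. (ii) The claim that pruning the degree-extremes of a $G(m,1/2)$ graph perturbs the residual density by $O(\sqrt{\log n/n})$ is too weak to be usable; $\sqrt{\log n/n}$ need not be $o(\gamma^2)$ under the hypothesis $\gamma>100\sqrt{\log n/n}$. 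The correct control, via Theorem~\ref{th:mainconth}, is $O\bigl(\gamma\sqrt{\log(1/\gamma)/n}\bigr)$, which \emph{is} lower order; this part is fixable but should be stated with the sharper bound.

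In short, the idea of degree-matching to hide $B$ is sound for a weaker result, but it cannot produce the $\Omega(\gamma)$ median lower bound for $c\ge 1$, and the $(1-4c\gamma)$ dependence spoils uniformity for the mean. The paper sidesteps both by adding sacrificial extreme-degree sets $S_0,S_1$ alongside $B$.
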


To summarize: none of the standard univariate robust estimators we have explored are able to achieve error better than $\Omega(\corr^2)$.
To bypass this barrier, we turn to more intricate techniques in designing our main estimator in Section~\ref{sec:upper-bounds}.



\section{An Algorithm for Robust Estimation}
\label{sec:upper-bounds}

Non-trivial robust estimation in Erd\H{o}s-R\'enyi graphs is possible because even if the set of edges connected to a small set of nodes is changed arbitrarily, the subgraph between the remaining nodes retains a certain structure.
In Section~\ref{sec:regcon}, we formalize this structure as deterministic regularity conditions and show that the subgraph corresponding to the set of uncorrupted nodes satisfy these regularity conditions with high probability.
In the following subsections, we use only the fact that that the subgraph of the uncorrupted nodes satisfy these regularity conditions to derive our robust algorithms for estimating $p$.

In Section~\ref{sec:conseqreg}, we first derive a simple inefficient spectral algorithm for coarse estimation of $p$.
Our efficient algorithm consists of two parts: an efficient version of the spectral algorithm in Section~\ref{sec:conseqreg} that, as its inefficient counterpart, provides a coarse estimate of $p$, followed by a trimming algorithm which achieves near-optimal error rates for estimating $p$.
We describe and analyze the spectral and trimming components of the algorithm in Sections~\ref{sec:upper-spectral} and~\ref{sec:upper-trimming}, respectively.
Finally, we put the pieces together to achieve our final upper bound in Section~\ref{sec:upper-combine}.

\subsection{Regularity Conditions}\label{sec:regcon}

In this section we state a set of three deterministic regularity conditions. We will then show that the set of uncorrupted nodes of a random Erd\H{o}s-R\'enyi graph satisfy these regularity conditions with high probability.
First, we define the following quantities $\Bounda$ and $\Boundb$, which we use in stating the regularity conditions and in the bounds of several lemmas and theorems.
For $p\in [0,1]$ and $n> 0$, let 
\begin{align}
\Boundb(p,n) := c\cdot\max\Bigg(\sqrt{\frac{p(1-p)}{n}} ,\frac{\sqrt{\ln n}}{n}\Bigg).\label{eqn:def-eta}
\end{align}
For $\alpha\in (0,1]$, $p\in [0,1]$ and $n> 0$, let 
\begin{align}
\Bounda(\alpha,p,n) := c_1\cdot\max\Bigg(\alpha\sqrt{\frac{p}{n}\ln\frac{e}{\alpha}} ,\frac{\alpha}{n}\ln\frac{e}{\alpha}, \frac{\sqrt{p\ln n}}{n}\Bigg).\label{eqn:def-kappa}
\end{align}

\noindent In the above definitions $c$ and $c_1$ are some constants that we determine in Theorem~\ref{th:goodspecnorm}.

\medskip 
\noindent  We employ the following regularity conditions.
\begin{definition}
\label{def:regular}
Given $\regpar_1\in [0,1/2)$, $\regpar_2\in[0,1/2)$, and an $[n]\times[n]$ adjacency matrix $A$, a set of nodes $\goodnodes\subseteq[n]$ of the graph corresponding to $A$ satisfy \emph{$(\regpar_1,\regpar_2,p)$-regularity} if
\begin{enumerate}
\item \label{pt:zero-point} $|\goodnodes^c|\le \regpar_1 n$.
    \item \label{pt:first-point} For all $\goodnodesub\subseteq \goodnodes $, 
    \[ \|(A - p)_{\goodnodesub\times \goodnodesub}\| \le n\cdot \Boundb(p,n). \]
   \item   \label{pt:second-point} For all $\goodnodesub , \goodnodesubanother\subseteq \goodnodes $ such that $|\goodnodesub |,|\goodnodesubanother|\in [0, \regpar_2 n]\cup [n-\regpar_2 n, n]$, then
\[
\Bigg|{\sum_{i\in \goodnodesub }\sum_{j\in \goodnodesubanother}(A_{i,j}-p)}\Bigg|\le   n^2\cdot\Bounda(\regpar_2,p,n). 
\]
\end{enumerate}
\end{definition}

Item~\ref{pt:first-point} implies that upon subtracting $p$ from each entry of the adjacency matrix $\AdjM$, the spectral norm of the matrix corresponding to all subgraphs of the subgraph $\goodnodes\times\goodnodes$ is bounded. 
Item~\ref{pt:second-point} implies that upon subtracting $p$ from each entry of the adjacency matrix $\AdjM$, the sum of the entries over any of its submatrices $\goodnodesub\times\goodnodesubanother\subseteq\goodnodes\times \goodnodes$ has a small absolute value,  as long as each of $\goodnodesub$ and $\goodnodesubanother$ either leave out or include at most $\regpar_2 n$ nodes.
We will informally refer to nodes in the set $\goodnodes\subseteq[n]$ that satisfy $(\regpar_1,\regpar_2,p)$-regularity as \emph{good nodes}.


For a subset $S\subseteq[n]$ and adjacency matrix $A$, we will use {$p_S:= \frac{\sum_{i,j \in S} A_{i,j}}{|S|^2}$} to denote (approximately) the empirical fraction of edges present in the subgraph induced by a set $S$.
Note that this differs slightly from expression one might anticipate,
$\binom{|S|}{2}^{-1}\left(\sum_{i < j:\ i,j \in S} A_{i,j}\right)$.
For convenience, our sum double-counts each edge and also includes the $A_{i,i}$ terms (which are always 0 due to the lack of self-loops). 
The double counting is accounted for since the denominator is scaled by a factor of $2$.
The inclusion of the diagonal $0$'s is \emph{not} accounted for, thus leading to $p_S$ being a slight under-estimate of the empirical edge parameter for this subgraph, but not big enough to make a significant difference.

The following lemma lists some simple but useful consequences of the regularity conditions that we use in later proofs.
\begin{lemma}
Suppose $0\le \regpar_1,\regpar_2< 1/2$ and adjacency matrix $\AdjM$ has a node subset  $\goodnodes\subseteq [n] $ that satisfies $(\regpar_1,\regpar_2,p)$-regularity, then 
\begin{enumerate}
 \item For all $\goodnodesub\subseteq \goodnodes $,
 \begin{align}\label{eq:llemc}
     \|(\AdjM - p_{\goodnodesub})_{\goodnodesub\times \goodnodesub}\|\le 2n\cdot \Boundb(p,n).
\end{align}
 \item For all $\goodnodesub\subseteq \goodnodes $ of size $\ge (1-\regpar_2)n$,
\begin{align}\label{eq:llemb}
|p_{\goodnodesub}-p|\le {4 \Bounda(\regpar_2,p,n)}.
\end{align}
\end{enumerate}
\end{lemma}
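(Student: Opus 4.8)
The plan is to derive both consequences directly from the regularity conditions of Definition~\ref{def:regular}, using the triangle inequality for spectral norms (Lemma~\ref{lem:triangleineq}) and the spectral lower bound via the all-ones vectors (Lemma~\ref{lem:specnormlb}).

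\textbf{Proof of part 1 (equation~\eqref{eq:llemc}).} First I would write $(\AdjM - p_{\goodnodesub})_{\goodnodesub\times\goodnodesub} = (\AdjM - p)_{\goodnodesub\times\goodnodesub} + (p - p_{\goodnodesub})_{\goodnodesub\times\goodnodesub}$ and apply Lemma~\ref{lem:triangleineq}. The first term is bounded by $n\cdot\Boundb(p,n)$ directly from Item~\ref{pt:first-point} of $(\regpar_1,\regpar_2,p)$-regularity. For the second term, $(p-p_{\goodnodesub})_{\goodnodesub\times\goodnodesub}$ is a constant matrix on an $|\goodnodesub|\times|\goodnodesub|$ block, so its spectral norm is exactly $|p - p_{\goodnodesub}|\cdot|\goodnodesub|$. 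Thus it suffices to show $|p-p_{\goodnodesub}|\cdot|\goodnodesub|\le n\cdot\Boundb(p,n)$. Now by definition $p_{\goodnodesub} - p = \frac{1}{|\goodnodesub|^2}\sum_{i,j\in\goodnodesub}(A_{i,j}-p)$, and by Lemma~\ref{lem:specnormlb} applied to the $|\goodnodesub|\times|\goodnodesub|$ matrix $(A-p)_{\goodnodesub\times\goodnodesub}$ (viewed as a matrix on index set $\goodnodesub$), we get $\bigl|\sum_{i,j\in\goodnodesub}(A_{i,j}-p)\bigr|\le |\goodnodesub|\cdot\|(A-p)_{\goodnodesub\times\goodnodesub}\|\le |\goodnodesub|\cdot n\cdot\Boundb(p,n)$. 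Hence $|p-p_{\goodnodesub}|\cdot|\goodnodesub| = \frac{1}{|\goodnodesub|}\bigl|\sum_{i,j\in\goodnodesub}(A_{i,j}-p)\bigr|\le n\cdot\Boundb(p,n)$, and combining gives the bound $2n\cdot\Boundb(p,n)$.

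\textbf{Proof of part 2 (equation~\eqref{eq:llemb}).} Here I would use Item~\ref{pt:second-point} of regularity. Since $\goodnodesub\subseteq\goodnodes$ has size $\ge(1-\regpar_2)n$, it lies in the allowed range $[n-\regpar_2 n, n]$, so taking $\goodnodesub = \goodnodesubanother$ in Item~\ref{pt:second-point} yields $\bigl|\sum_{i,j\in\goodnodesub}(A_{i,j}-p)\bigr|\le n^2\cdot\Bounda(\regpar_2,p,n)$. Dividing by $|\goodnodesub|^2$ gives $|p_{\goodnodesub}-p|\le \frac{n^2}{|\goodnodesub|^2}\cdot\Bounda(\regpar_2,p,n)$. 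Since $|\goodnodesub|\ge(1-\regpar_2)n > n/2$ (using $\regpar_2 < 1/2$), we have $n^2/|\goodnodesub|^2 < 4$, giving $|p_{\goodnodesub}-p|\le 4\Bounda(\regpar_2,p,n)$.

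\textbf{Main obstacle.} These are both short derivations; the only point requiring minor care is making sure the index-set bookkeeping in the application of Lemma~\ref{lem:specnormlb} is legitimate — that is, treating $(A-p)_{\goodnodesub\times\goodnodesub}$ as an $|\goodnodesub|\times|\goodnodesub|$ matrix so that the ``$\sqrt{mn}$'' in Lemma~\ref{lem:specnormlb} becomes $|\goodnodesub|$ rather than $n$ (the zero entries outside $\goodnodesub\times\goodnodesub$ do not contribute to the sum, so this is fine), and correctly identifying the spectral norm of a constant block matrix as $|p-p_{\goodnodesub}|\cdot|\goodnodesub|$ via the all-ones vectors. I do not anticipate any real difficulty beyond keeping the constant-factor accounting clean.
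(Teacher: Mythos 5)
Your proof is correct and follows essentially the same route as the paper: for part 1, the same decomposition of $(\AdjM - p_{\goodnodesub})_{\goodnodesub\times\goodnodesub}$ via the triangle inequality, combined with Lemma~\ref{lem:specnormlb} to show that $|p-p_{\goodnodesub}|\cdot|\goodnodesub|\le \|(\AdjM - p)_{\goodnodesub\times\goodnodesub}\|$, and then regularity condition~\ref{pt:first-point}; for part 2, the same application of regularity condition~\ref{pt:second-point} with $\goodnodesub=\goodnodesubanother$ followed by $|\goodnodesub|\ge n/2$. The only cosmetic difference is that you bound the two triangle-inequality terms by $n\cdot\Boundb(p,n)$ each, while the paper bounds their sum by $2\|(\AdjM - p)_{\goodnodesub\times\goodnodesub}\|$ and applies regularity once — an identical argument.
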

\begin{proof}
We first prove Equation~\eqref{eq:llemc}. From the triangle inequality
\[
 \|(\AdjM - p_{\goodnodesub})_{\goodnodesub\times \goodnodesub}\| \le  \|(\AdjM - p)_{\goodnodesub\times \goodnodesub}\|+|p - p_{\goodnodesub}|\cdot {\goodnodesub}.
\]
From Lemma~\ref{lem:specnormlb} we have
\[
\|(\AdjM - p)_{\goodnodesub\times \goodnodesub}\|\ge |\goodnodesub|\cdot|p_{\goodnodesub}-p| .
\]
Combining the above two equations with regularity proves Equation~\eqref{eq:llemc},
\[
\|(\AdjM - p_{\goodnodesub})_{\goodnodesub\times \goodnodesub}\| \le 2\|(\AdjM - p)_{\goodnodesub\times \goodnodesub}\|\le 2n\cdot \Boundb(p,n),
\]
where the last inequality follows from regularity condition~\ref{pt:first-point}.

Next, Equation~\eqref{eq:llemb} is obtained by using $\goodnodesub=\goodnodesubanother$ in regularity condition~\ref{pt:second-point} and $|\goodnodesub|\ge n/2$.
\end{proof}

Equation~\eqref{eq:llemc} implies that if the adjacency matrix of any subset of good nodes is centered around its empirical fraction of the edges, then its spectral norm is bounded.
Equation~\eqref{eq:llemb} implies that for any subset of good nodes that excludes at most $\regpar_2n$ nodes, the empirical fraction of edges in the subgraph induced by it estimates $p$ accurately. 

The next theorem shows that the set of uncorrupted nodes of a random Erd\H{o}s-R\'enyi graph satisfy these regularity conditions with high probability.
\begin{theorem}\label{th:goodspecnorm} 
For any $\corr \in [0,1/2)$, $n>0$ and $p>0$, let $A$ be a $\corr$-corrupted adjacency matrix of a sample from $G(n,p)$. 
There exist universal constants $c$ and $c_1$ in Equations~\eqref{eqn:def-eta} and~\eqref{eqn:def-kappa}, respectively,  such that with probability at least $1-4n^{-2}$ the set of uncorrupted nodes $\goodnodes$ satisfy $(\regpar_1,\regpar_2,p)$-regularity for all $\regpar_1\in [\gamma,1/2]$ and $\regpar_2\in [0,1/2]$. 
\end{theorem}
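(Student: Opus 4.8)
The plan is to establish each of the three regularity conditions separately, as high-probability statements about the subgraph $\tilde A_{\goodnodes \times \goodnodes}$ of the \emph{original} graph $G$, and then observe that these carry over to $A$ since $A_{\goodnodes \times \goodnodes} = \tilde A_{\goodnodes \times \goodnodes}$ and the conditions only reference entries indexed by $\goodnodes$. Condition~\ref{pt:zero-point} is immediate: $|\goodnodes^c| = |\bad| \le \gamma n \le \regpar_1 n$ for every $\regpar_1 \ge \gamma$. For the remaining two, the key point is that $\goodnodes$ is chosen adaptively by the adversary \emph{after} seeing $G$, so we cannot simply apply a concentration bound to one fixed subgraph; instead we must prove the bounds simultaneously for \emph{all} subsets $\goodnodesub, \goodnodesubanother \subseteq [n]$ of the appropriate sizes and then restrict to those lying inside $\goodnodes$. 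This is the standard ``worst-case subset'' strategy and is where the $\sqrt{\ln n}$ and $\ln(e/\regpar_2)$ factors in $\Boundb$ and $\Bounda$ come from (union bounds over $\binom{n}{k}$ choices, using $\ln\binom{n}{k} \le k \ln(en/k)$).

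For condition~\ref{pt:first-point} I would prove the stronger statement $\|(\tilde A - p)_{[n] \times [n]}\| \le n\cdot\Boundb(p,n)$ for the full matrix, since by Lemma~\ref{lem:spsubmat} the spectral norm of any submatrix is dominated by that of the whole matrix, so the bound for all $\goodnodesub \times \goodnodesub$ follows for free (no union bound needed here). The matrix $\tilde A - p$ (off-diagonal) has independent, mean-zero, bounded entries with variance $p(1-p)$; its spectral norm is controlled by a matrix Bernstein / Bandeira--van Handel type bound (or the classical result on spectral norms of random symmetric matrices with independent entries), giving $\|\tilde A - p\| = O(\sqrt{np(1-p)} + \sqrt{\ln n})$ with probability $\ge 1 - n^{-2}$; the diagonal correction (the true diagonal is $0$, not $p$) contributes only $p \le 1$ to the norm and is absorbed. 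Dividing through, this is exactly $n \cdot c\max(\sqrt{p(1-p)/n}, \sqrt{\ln n}/n)$ for a suitable constant $c$, which fixes the constant $c$ promised in the theorem statement. A mild subtlety: for $p$ very close to $1$ one should center at $\min(p, 1-p)$-scale or note $p(1-p)$ is the right variance proxy regardless; and for the degenerate small-$p$ regime the $\sqrt{\ln n}/n$ term dominates, which is why it appears in~\eqref{eqn:def-eta}.

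For condition~\ref{pt:second-point} the relevant quantity is $\sum_{i \in \goodnodesub}\sum_{j\in\goodnodesubanother}(\tilde A_{i,j} - p)$, a sum of independent centered Bernoulli$(p)$-minus-$p$ terms (again modulo the diagonal). For \emph{fixed} sets of sizes $s = |\goodnodesub|, t = |\goodnodesubanother|$ with $st$ entries, Bernstein's inequality gives a deviation bound of order $\sqrt{st\, p \ln(1/\delta)} + \ln(1/\delta)$; taking $\delta^{-1}$ to beat the union bound over all $\binom{n}{s}\binom{n}{t} \le \exp(s\ln(en/s) + t\ln(en/t))$ choices, and using that $s, t \le \regpar_2 n$ or $\ge (1-\regpar_2)n$ so that the relevant complexity is $O(\regpar_2 n \ln(e/\regpar_2))$, yields a bound of order $\regpar_2 n \sqrt{np\ln(e/\regpar_2)} + \regpar_2 n \ln(e/\regpar_2)$ on the sum. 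Wait --- I should be careful: the $[n-\regpar_2 n, n]$ case (sets that \emph{omit} at most $\regpar_2 n$ nodes) is handled by writing $\goodnodesub = [n] \setminus T$ with $|T| \le \regpar_2 n$ and expanding the double sum, reducing it to the small-set case plus the full-matrix sum $\sum_{i,j}(\tilde A_{i,j}-p)$, the latter being $O(n\sqrt{p\ln n} + n\ln n)$-small by a single Bernstein bound (this is the source of the third term $\sqrt{p\ln n}/n$ in~\eqref{eqn:def-kappa}). Normalizing by $n^2$ gives precisely $\Bounda(\regpar_2,p,n)$ up to the constant $c_1$. I would then union-bound over all $\regpar_2 n \in \{1, 2, \ldots, \lfloor n/2\rfloor\}$ (only integer cutoffs matter), losing another factor of $n$ that is absorbed into the logarithm, and intersect the three events, each of probability $\ge 1 - n^{-2}$ or so, to get total failure probability $\le 4n^{-2}$.

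\textbf{Main obstacle.} The genuinely delicate part is getting the logarithmic factors in $\Bounda$ to match across the full range $\regpar_2 \in [0, 1/2]$: one needs the union-bound entropy $k\ln(en/k)$ to be correctly traded against the Bernstein tail so that the $\ln(e/\regpar_2)$ (rather than a cruder $\ln n$) appears, and simultaneously to handle the two regimes (few nodes included vs.\ few nodes excluded) and the sub-Gaussian-vs.-Poisson crossover in Bernstein's inequality (the $\sqrt{stp}$ term vs.\ the $\ln(1/\delta)$ term) that produces the max of three quantities. Everything else is a routine application of matrix concentration and Chernoff--Bernstein bounds, but bookkeeping the constants $c, c_1$ consistently with Theorem~\ref{thm:main}'s claimed bound requires care.
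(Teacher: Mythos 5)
Your proposal matches the paper's proof in all essentials: condition~\ref{pt:zero-point} is trivial; condition~\ref{pt:first-point} follows from a spectral-norm bound on the full centered matrix $\tilde A - p$ via the Bandeira--van Handel result (Remark 3.13 of \cite{BandeiraVH16}), with the submatrix monotonicity of Lemma~\ref{lem:spsubmat} and a diagonal correction of size $p\le 1$; and condition~\ref{pt:second-point} is exactly the content of Theorem~\ref{th:mainconth}, proved by a Chernoff--Bernstein bound on a fixed subset, a union bound over $\binom{n}{s}\binom{n}{t}$ subsets trading entropy $O(\regpar_2 n\ln(e/\regpar_2))$ against the tail, a complementation trick to pass from the ``omit few nodes'' case to the ``include few nodes'' case using the full-matrix sum bound, and a final union bound over integer cutoffs. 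The only cosmetic difference is that the paper's Lemma~\ref{lem:smallset} applies the Chernoff argument to an arbitrary subset $T\subset[n]\times[n]$ of size $\le 2\regpar_2 n^2$ rather than to product sets $T_1\times T_2$, which streamlines the ``both large'' case slightly compared to your inclusion--exclusion expansion, but the resulting bounds and failure probability $4n^{-2}$ are the same.
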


\begin{proof}
In a $\corr$-corrupted graph the set of uncorrupted nodes $\goodnodes$ has size $\ge (1-\corr)n$, which proves regularity condition~\ref{pt:zero-point}.

We use the following bound on the spectral norm of a centered version of $\tilde A$, which follows from Remark 3.13 of~\cite{BandeiraVH16}. 
\begin{lemma}\label{lem:goodspecnormnew}
Let $\tilde A$ be the adjacency matrix of a sample from $G(n,p)$ and $I$ be the $n\times n$ identity matrix. There exist a universal constant $c$ such that with probability at least $1-n^{-2}$,
$\norm{\tilde A - p +p I} \le c\sqrt{np(1-p)+\ln n}$.  
\end{lemma}
To establish regularity condition~\ref{pt:first-point}, note that $A$ and $\tilde A$ agree on $(i,j)\in \goodnodes \times \goodnodes $, and therefore by Lemma~\ref{lem:spsubmat} and Lemma~\ref{lem:goodspecnormnew}, $\|(A - p)_{\goodnodesub \times \goodnodesub }\|\!=\! \|(\tilde A - p)_{\goodnodesub \times \goodnodesub }\| \le \|(\tilde A - p)\|\le \|(\tilde A - p+pI)\|+p\|I\|\le c\sqrt{np(1-p)+\ln n}+1$.

The following theorem implies regularity condition~\ref{pt:second-point}. The proof uses a Chernoff and union bound style argument, and is provided in Section~\ref{sec:app-mainconth}.
\begin{theorem}\label{th:mainconth} 
Let $\tilde A$ be the adjacency matrix of a sample from $G(n,p)$. 
With probability at least $1-3 n^{-2}$, simultaneously for all $\alpha\in[0,\frac{1}{2}]$, we have
\[
\max_{|S|,|S'|\in \setsize_\alpha}\Bigg|{\sum_{i\in S,\,j\in S'}(\tilde A_{i,j}-p)}\Bigg|\le 6\max\Big\{ 16\alpha n\sqrt{{p n}\ln\frac{e}{\alpha }},60\alpha n\ln\frac{e}{\alpha },5n\sqrt{p\ln (en)}\Big\},
\]
where we define $\setsize_{\alpha}:= [0, \alpha n]\cup [n-\alpha n, n]$.
\end{theorem}
\end{proof}

\subsection{An Inefficient Coarse Estimator}\label{sec:conseqreg}
\label{sec:Reg-graphs}

In this section we propose a simple inefficient algorithm to recover a coarse estimate of $p$, which has an optimal dependence on all parameters other than $\regpar_1$.


The following theorem serves as the foundation of our coarse estimator.
It shows that if, for any subset $S\subseteq[n]$ of size $\ge n/2$ nodes, the spectral norm of its submatrix centered with respect to $p_S$ is small, then $p_S$ is a reasonable estimate of $p$.
\begin{theorem}\label{th:newth}
Suppose $0\le \regpar_1,\regpar_2< 1/2$, and let $\AdjM$ be an adjacency matrix containing a $(\regpar_1,\regpar_2,p)$-regular subgraph.
Then for all $S\subseteq [n]$ such that $|S|\ge n/2$, we have 
\[
 |p_{S}-p|\le \frac{\|(\AdjM - p_{S})_{S\times S} \|+n\cdot \Boundb(p,n)}{(1/2-\regpar_1)n }.
\]
\end{theorem}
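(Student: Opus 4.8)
The plan is to relate the empirical density $p_S$ of an arbitrary large set $S$ to the true parameter $p$ by intersecting $S$ with the good set $\goodnodes$ and exploiting the regularity conditions there. Write $T := S \cap \goodnodes$. Since $|S| \ge n/2$ and $|\goodnodes^c| \le \regpar_1 n$, we have $|T| \ge (1/2 - \regpar_1)n$, so $T$ is a ``large'' subset of the good nodes. The strategy is to produce a lower bound on $\|(\AdjM - p_S)_{S\times S}\|$ (or rather, to bound $|p_S - p|$ from above) using two ingredients: first, the spectral norm over $S\times S$ dominates the spectral norm over the sub-block $T\times T$ (Lemma~\ref{lem:spsubmat}); second, on that sub-block $\AdjM$ agrees with the uncorrupted matrix, so regularity condition~\ref{pt:first-point} controls $\|(\AdjM - p)_{T\times T}\|$.

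First I would apply Lemma~\ref{lem:specnormlb} to the $|T|\times|T|$ matrix $(\AdjM - p_S)_{T\times T}$ to get
\[
\|(\AdjM - p_S)_{T\times T}\| \ge \frac{\bigl|\sum_{i,j\in T}(A_{i,j} - p_S)\bigr|}{|T|} = |T|\cdot\bigl|\bar p_T - p_S\bigr|,
\]
where $\bar p_T := \frac{1}{|T|^2}\sum_{i,j\in T} A_{i,j}$ is the empirical density on $T$ (matching the paper's $p_T$ convention). Then by the submatrix inequality $\|(\AdjM - p_S)_{T\times T}\| \le \|(\AdjM - p_S)_{S\times S}\|$, so $|\bar p_T - p_S| \le \|(\AdjM - p_S)_{S\times S}\|/|T|$. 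Next, since $T\subseteq\goodnodes$ and $|T|\ge(1/2-\regpar_1)n \ge n/2 \ge (1-\regpar_2)n$ need not hold for general $\regpar_2$ --- so I would instead bound $|\bar p_T - p|$ directly via Lemma~\ref{lem:specnormlb} applied to $(\AdjM - p)_{T\times T}$ together with regularity condition~\ref{pt:first-point}: $|T|\cdot|\bar p_T - p| \le \|(\AdjM - p)_{T\times T}\| \le n\cdot\Boundb(p,n)$, so $|\bar p_T - p| \le n\Boundb(p,n)/|T|$. Combining by the triangle inequality, $|p_S - p| \le (\|(\AdjM - p_S)_{S\times S}\| + n\Boundb(p,n))/|T|$, and finally replacing $|T|\ge(1/2-\regpar_1)n$ gives exactly the claimed bound.

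The main subtlety I anticipate is the bookkeeping around the $p_S$-versus-$p$ centering and the fact that $p_S$ is defined with the diagonal-zeros convention: one must be careful that $\sum_{i,j\in T}(A_{i,j} - p_S) = |T|^2(\bar p_T - p_S)$ literally, and that Lemma~\ref{lem:specnormlb} is being applied to the correctly-shaped matrix (an $n\times n$ matrix that is zero off $T\times T$, so $m = n$ there, or equivalently the $|T|\times|T|$ genuine submatrix --- either works since the nonzero entries are the same and the $\sqrt{mn}$ factor should be taken as $|T|$ by restricting to the support). I would state it via the genuine $|T|\times|T|$ submatrix to keep the $\sqrt{mn} = |T|$ clean. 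Everything else is a two-line triangle-inequality argument; no step is genuinely hard, and the only thing to get right is which regularity item to invoke (condition~\ref{pt:first-point}, not~\ref{pt:second-point}, since $S$ is arbitrary of size $\ge n/2$ and we have no control on how few nodes it omits relative to $\regpar_2$).
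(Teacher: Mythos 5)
Your proof is correct and follows essentially the same route as the paper: intersect $S$ with the good set $\goodnodes$, bound the $T\times T$ block of $(\AdjM-p_S)$ and $(\AdjM-p)$ using Lemma~\ref{lem:spsubmat} and regularity condition~\ref{pt:first-point}, and conclude via $|T|\ge(1/2-\regpar_1)n$. The only cosmetic difference is that the paper applies the matrix triangle inequality directly to the decomposition $(\AdjM-p_S)_{T\times T}=(\AdjM-p)_{T\times T}+(p-p_S)J_{T\times T}$ (using $\|J_{T\times T}\|=|T|$), whereas you pass through the intermediate density $\bar p_T$ by invoking Lemma~\ref{lem:specnormlb} twice and then a scalar triangle inequality; both arrive at the identical bound $|T|\,|p_S-p|\le\|(\AdjM-p_S)_{T\times T}\|+\|(\AdjM-p)_{T\times T}\|$.
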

\begin{proof}
Let $\goodnodes$ be the $(\regpar_1,\regpar_2,p)$-regular subgraph of $\AdjM$.
From the triangle inequality,
\[
\|(\AdjM - p_{S})_{(S\cap \goodnodes)\times (S\cap \goodnodes)} \| \ge |(p- p_{S})\cdot(S\cap \goodnodes)|-\|(\AdjM - p)_{(S\cap \goodnodes)\times (S\cap \goodnodes)} \|.
\]
Then by Lemma~\ref{lem:spsubmat},
\[
|(p- p_{S})\cdot(S\cap \goodnodes)|\le \|(\AdjM - p_{S})_{(S\cap \goodnodes)\times (S\cap \goodnodes)} \| +\|(\AdjM - p)_{(S\cap \goodnodes)\times (S\cap \goodnodes)} \|\le  \|(\AdjM - p_{S})_{S\times S} \| +\|(\AdjM - p)_{ \goodnodes\times \goodnodes} \|.
\]
Finally, noting that $|S\cap \goodnodes|\ge |S|-|\goodnodes^c|\ge |S|-\regpar_1 n\ge n/2-\regpar_1 n$ proves the theorem.
\end{proof}

With this in hand, it suffices to locate a subset of nodes $S$ such that $\|(\AdjM - p_{S})_{S\times S} \|$ is small. We provide the accuracy guarantee of our inefficient algorithm in the following theorem.

\begin{theorem}\label{th:ineff}
Suppose $0\le \regpar_1,\regpar_2< 1/2$, and let $\AdjM$ be an adjacency matrix containing a $(\regpar_1,\regpar_2,p)$-regular subgraph.
Let
\[\hat S = {\arg\min}_{S\subseteq [n]:|S|\ge n/2}\ \|(\AdjM - p_{S})_{S\times S} \|. \]
Then  $\|(\AdjM - p_{\hat S})_{\hat S\times \hat S} \|\le 2n\cdot \Boundb(p,n)$ and $|p_{\hat S}-p|\le \frac{3}{(1/2-\regpar_1) }\cdot \Boundb(p,n)$.
\end{theorem}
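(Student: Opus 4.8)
The plan is to combine Theorem~\ref{th:newth} with a single observation: the minimizer $\hat S$ cannot have a larger value of $\|(\AdjM - p_{S})_{S\times S}\|$ than the particular candidate set $\goodnodes$ itself, since $\goodnodes$ is a competitor in the minimization (it has size $\ge (1-\regpar_1)n \ge n/2$ as $\regpar_1 < 1/2$). So first I would instantiate Equation~\eqref{eq:llemc} of the consequences-of-regularity lemma with $\goodnodesub = \goodnodes$ to get $\|(\AdjM - p_{\goodnodes})_{\goodnodes\times \goodnodes}\| \le 2n\cdot\Boundb(p,n)$. By the definition of $\hat S$ as the arg-min over all subsets of size $\ge n/2$, this immediately yields
\[
\|(\AdjM - p_{\hat S})_{\hat S\times \hat S}\| \le \|(\AdjM - p_{\goodnodes})_{\goodnodes\times \goodnodes}\| \le 2n\cdot\Boundb(p,n),
\]
which is the first claimed bound.

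For the second bound, I would simply feed $S = \hat S$ into Theorem~\ref{th:newth}, which applies since $|\hat S|\ge n/2$. That gives
\[
|p_{\hat S}-p|\le \frac{\|(\AdjM - p_{\hat S})_{\hat S\times \hat S}\| + n\cdot\Boundb(p,n)}{(1/2-\regpar_1)n}.
\]
Plugging in the already-established bound $\|(\AdjM - p_{\hat S})_{\hat S\times \hat S}\| \le 2n\cdot\Boundb(p,n)$, the numerator is at most $3n\cdot\Boundb(p,n)$, so $|p_{\hat S}-p|\le \frac{3}{1/2-\regpar_1}\cdot\Boundb(p,n)$, as desired.

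There is essentially no obstacle here; the theorem is a two-line corollary once the right competitor set is identified. The only thing to be careful about is the size constraint: I should note explicitly that $\goodnodes$ satisfies $|\goodnodes| = |\goodnodesub|$ with $|\goodnodesub|\ge(1-\regpar_1)n$, which exceeds $n/2$ precisely because $\regpar_1<1/2$, so $\goodnodes$ is a feasible point of the minimization and the arg-min comparison is legitimate. Everything else is just substituting one displayed inequality into another.
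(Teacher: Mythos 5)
Your proposal is correct and matches the paper's proof essentially line for line: instantiate Equation~\eqref{eq:llemc} at $\goodnodesub=\goodnodes$, use that $\goodnodes$ is a feasible competitor in the minimization to bound $\|(\AdjM - p_{\hat S})_{\hat S\times\hat S}\|$, then plug $S=\hat S$ into Theorem~\ref{th:newth}. The only addition over the paper's exposition is your explicit check that $|\goodnodes|\ge(1-\regpar_1)n\ge n/2$, which the paper leaves implicit.
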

\begin{proof}
Let $\goodnodes$ be the $(\regpar_1,\regpar_2,p)$-regular subgraph of $\AdjM$.
From the definition of $\hat S$,
\[\|(\AdjM - p_{\hat S})_{\hat S\times \hat S} \|\le \|(\AdjM - p_{\goodnodes})_{\goodnodes\times \goodnodes} \|\le2n\cdot \Boundb(p,n), 
\]
where the last inequality uses Equation~\eqref{eq:llemc}. The proof follows from Theorem~\ref{th:newth}.
\end{proof}

Theorem~\ref{th:ineff} implies the following simple algorithm to estimate $p$: compute $\hat S$ by iterating over all subsets of $[n]$, and then output $p_{\hat S}$.
Combining with Theorem~\ref{th:goodspecnorm}, this proves Theorem~\ref{thm:large-gamma}. 
The clear downside of this approach is that it is not computationally efficient, with a running time that depends exponentially on $n$.
Also, as we will later establish, while this algorithm gives near-optimal rates for all constant $\corr$ bounded away from $1/2$ by a constant, it may be sub-optimal for smaller $\corr$.
In the following sections, we address both of these issues: we provide a computationally efficient algorithm which provides near-optimal rates for $\corr < 1/60$.

\subsection{An Efficient Coarse Spectral Algorithm}
\label{sec:upper-spectral}
In this section, we propose an efficient spectral method (Algorithm~\ref{algo:part}) which finds a subset $S^*\subseteq[n]$ such that both the set $(S^*)^c$ 
and the spectral norm $\|(A- p_{S^*})_{S^*\times S^*} \|$ are small.
Note that the latter guarantee is comparable to the inefficient algorithm from Section~\ref{sec:conseqreg}.
Then Theorem~\ref{th:newth} implies that $p_{S^*}$ is an accurate estimate of $p$.
We note that this is still a \textit{coarse} estimate of $p$, which has a sub-optimal dependence on $\regpar_1$.\footnote{The guarantees are comparable to Theorem~\ref{th:ineff}, up to constant factors.} 
In the following section, we will post-process the set $S^*$ returned by Algorithm~\ref{algo:part} to provide our near-optimal bounds. 

\begin{theorem}\label{thm:specnorm}
Suppose $\regpar_1\in [\frac{1}{{n}}, \frac{1}{60}]$, $\regpar_2\in[0,1/2]$ and let $\AdjM$ be an adjacency matrix containing an $(\regpar_1,\regpar_2,p)$-regular subgraph.
With  probability at least $1-{1/n^2}$,\footnote{The probability of success of Algorithm~\ref{algo:part} is $\Pr[\text{Bin}(\lfloor9\regpar_1 n\rfloor,0.15)\ge \lfloor\regpar_1 n\rfloor]\ge 1-\exp(-\Omega(\regpar_1 n))$. Note that for all values of $\regpar_1 n$ the success probability is $>1/2$. When $\regpar_1 n= \Omega(\log n)$ then it gives the probability of success at least $1-1/n^2$.  When $\regpar_1 n= \mathcal O(\log n)$, to get the probability of success $\ge 1-1/n^2$ one can run Algorithm~\ref{algo:part} $\mathcal O(\log n)$ times and choose an $S^*$ for which $\|(A - p_{S^*})_{S^*\times S^*}\|$ is the minimum among all runs.} Algorithm~\ref{algo:part} returns a subset $S^*$  with {$|S^*|\ge (1-9\regpar_1)n$} such that $\|(A- p_{S^*})_{S^*\times S^*} \|\le 20n\cdot \Boundb(p,n)$. Furthermore, these conditions on $S^*$ imply $|p_{S^*}-p|\le 45\cdot \Boundb(p,n)$.
\end{theorem}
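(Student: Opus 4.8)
I will analyze Algorithm~\ref{algo:part} as an iterative ``peeling'' procedure: maintain a working set $S$, initialized to $[n]$; in each round compute the top unit eigenvector $v$ of the centered submatrix $M_S := (\AdjM - p_S)_{S\times S}$, sample a node $i\in S$ with probability $v_i^2$, delete it, and iterate for $\lfloor 9\regpar_1 n\rfloor$ rounds, returning the iterate $S^*$ that minimizes $\|(\AdjM - p_{S^*})_{S^*\times S^*}\|$ (equivalently, halting as soon as this norm drops to $20\,n\Boundb(p,n)$). The engine is the localization statement (Lemma~\ref{lem:delprob} of the introduction): whenever $\|M_S\| > 20\,n\Boundb(p,n)$ and $|S^c|\le 9\regpar_1 n$, the top eigenvector places at least a constant fraction (say $0.15$) of its squared mass on $S\cap\bad$, so the node deleted in that round is corrupted with probability at least $0.15$.

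\textbf{Proving the localization step.} Write $\lambda = \|M_S\| = |v^\top M_S v|$, split $v = a + b$ with $a = v_{S\cap\goodnodes}$, $b = v_{S\cap\bad}$, and set $\beta = \|b\|^2$. Expand $v^\top M_S v = a^\top M_S a + 2\,a^\top M_S b + b^\top M_S b$. The first term equals $a^\top(\AdjM - p_S)_{(S\cap\goodnodes)\times(S\cap\goodnodes)}\,a$ and is at most $\|(\AdjM - p_S)_{(S\cap\goodnodes)\times(S\cap\goodnodes)}\|$; this is the delicate quantity, since $M_S$ is centered at the \emph{corrupted} density $p_S$ rather than $p$. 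I would bound it by $\|(\AdjM - p)_{(S\cap\goodnodes)\times(S\cap\goodnodes)}\| + |p_S - p|\cdot|S\cap\goodnodes|$, use regularity condition~\ref{pt:first-point} (as $S\cap\goodnodes\subseteq\goodnodes$) for the first piece, and control $|p_S - p|$ by combining Equation~\eqref{eq:llemb} applied to $S\cap\goodnodes$ with the fact that corrupted nodes touch only an $O(\regpar_1)$ fraction of the entries of $\AdjM$ (so $|p_S - p_{S\cap\goodnodes}|$ is small), possibly also arguing that $v$ is far from the direction $\mathbf 1_S$ when $\lambda$ is large so that the rank-one miscentering contributes little to the quadratic form. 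The remaining two terms each involve a coordinate in $S\cap\bad$, so by Lemma~\ref{lem:spsubmat} they are bounded by $\lambda\bigl(2\sqrt{\beta(1-\beta)} + \beta\bigr)\le\lambda(2\sqrt\beta + \beta)$. Combining yields $\lambda\,(1 - 2\sqrt\beta - \beta)\le\|(\AdjM - p_S)_{(S\cap\goodnodes)\times(S\cap\goodnodes)}\|$; since the right-hand side is small relative to the threshold $20\,n\Boundb(p,n)$ while $\lambda$ exceeds it, $2\sqrt\beta + \beta$ must be bounded away from $0$, i.e.\ $\beta\ge 0.15$ once the constants are fixed. The constants $9$, $20$, $0.15$ and the hypothesis $\regpar_1\le 1/60$ are exactly what must be tuned so that this chain and the ones below close.

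\textbf{Assembling the bound.} Conditioning on the history up to any round: as long as $\|M_S\| > 20\,n\Boundb(p,n)$, the deleted node lies in $S\cap\bad$ with probability at least $0.15$, so the number of corrupted nodes removed over the $\lfloor 9\regpar_1 n\rfloor$ rounds stochastically dominates $\Bin(\lfloor 9\regpar_1 n\rfloor, 0.15)$, whose mean $1.35\,\regpar_1 n$ exceeds $\regpar_1 n$. A Chernoff bound gives that with probability $1 - e^{-\Omega(\regpar_1 n)}$, before the budget is exhausted either $\|(\AdjM - p_S)_{S\times S}\|$ has dropped to $20\,n\Boundb(p,n)$ or every corrupted node has been removed --- and in the latter case $S^*\subseteq\goodnodes$, so Equation~\eqref{eq:llemc} gives $\|(\AdjM - p_{S^*})_{S^*\times S^*}\|\le 2n\Boundb(p,n)$ anyway. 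Either way $\|(\AdjM - p_{S^*})_{S^*\times S^*}\|\le 20\,n\Boundb(p,n)$, and $|S^*|\ge(1-9\regpar_1)n$ since at most $\lfloor 9\regpar_1 n\rfloor$ nodes were deleted. When $\regpar_1 n = \Omega(\log n)$ the failure probability is already $\le n^{-2}$; when $\regpar_1 n = O(\log n)$ one repeats the procedure $O(\log n)$ times and keeps the iterate of smallest spectral norm, boosting to $1-n^{-2}$. Finally, $\regpar_1\le 1/60$ forces $|S^*|\ge\tfrac{51}{60}n > n/2$, so Theorem~\ref{th:newth} applied to $S^*$ gives $|p_{S^*} - p|\le\frac{20\,n\Boundb(p,n) + n\Boundb(p,n)}{(1/2-\regpar_1)n} = \frac{21}{1/2-\regpar_1}\Boundb(p,n)\le 45\,\Boundb(p,n)$, which is the second claim.

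\textbf{Main obstacle.} Everything after the localization step is routine (stochastic domination plus a Chernoff bound, then an invocation of Theorem~\ref{th:newth}). The crux is the localization lemma, and within it the subtle point is bounding $\|(\AdjM - p_S)_{(S\cap\goodnodes)\times(S\cap\goodnodes)}\|$ in the presence of the miscentering: because $p_S$ is a corrupted quantity that can differ from $p$ by as much as $\Theta(\regpar_1)$, one cannot invoke regularity condition~\ref{pt:first-point} directly, and the argument must show that this miscentering together with the corrupted block of $\AdjM - p$ contributes only $O(n\Boundb(p,n))$ to the quadratic form $v^\top M_S v$ --- this is precisely where the constant $1/60$ and the threshold $20\,n\Boundb(p,n)$ get pinned down.
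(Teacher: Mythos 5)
Your high-level plan matches the paper's proof: prove a localization lemma (that when $\|(A-p_S)_{S\times S}\|$ exceeds the threshold the top eigenvector puts mass $\ge 0.15$ on $S\cap\goodnodes^c$), run a Chernoff argument over the $9\regpar_1 n$ rounds, and finish with Theorem~\ref{th:newth}. The closing arithmetic ($\frac{21}{1/2-\regpar_1}\le 45$ for $\regpar_1\le 1/60$) and the boosting remark for small $\regpar_1 n$ are exactly right.

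The gap is in the localization step, and you have correctly flagged it as the crux but not closed it. Your route bounds $\|(\AdjM - p_S)_{(S\cap\goodnodes)\times(S\cap\goodnodes)}\|$ by $\|(\AdjM-p)_{(S\cap\goodnodes)\times(S\cap\goodnodes)}\| + |p_S-p|\cdot|S\cap\goodnodes|$ and then tries to show $|p_S-p|$ is small. This cannot work as an absolute bound: the corrupted nodes touch an $O(\regpar_1)$ fraction of $S\times S$, and since entries are in $\{0,1\}$ the adversary can shift $p_S$ away from $p_{S\cap\goodnodes}$ by $\Theta(\regpar_1)$, so $|p_S-p|\cdot|S\cap\goodnodes|$ can be $\Theta(\regpar_1 n)$, which dwarfs $n\Boundb(p,n)\approx\sqrt{np(1-p)}$. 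Your escape hatch (``argue that $v$ is far from $\mathbf 1_S$'') is the right instinct, but it is not developed, and that is exactly where the content lies. The paper's Lemma~\ref{lem:whatisinthename} does it cleanly: using the identity $\sum_{i,j\in S}(A_{i,j}-p_S)=0$, it bounds the dangerous quantity $|p_{S\cap\goodnodes}-p_S|\cdot|S\cap\goodnodes|$ not absolutely but in terms of $\|(A-p_S)_{S\times S}\|$ itself (with a factor $\sqrt{|S\cap\goodnodes^c|/|S\cap\goodnodes|}\le 1/\sqrt{50}$), so the miscentering is automatically $O(\lambda)$, never $\Theta(\regpar_1 n)$.

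A secondary, smaller issue: your quadratic-form expansion $v^\top M_S v = a^\top M_S a + 2a^\top M_S b + b^\top M_S b$ yields $\beta\ge\frac{(1-\rho)^2}{4+(1-\rho)^2}$, a factor $\approx 4$ weaker than the paper's Theorem~\ref{th:lacon}, which works directly with the eigenvector identity $M_{S\times[n]}v=\lambda v_S$ and gives $\|v_{S^c}\|^2\ge\frac{(1-\rho)^2}{1+(1-\rho)^2}$. With $\rho=\tfrac{1}{10}+\tfrac{3}{\sqrt{50}}\approx 0.524$ the paper obtains $0.185>0.15$, while your version gives $\approx 0.054$, which is not enough to push $\regpar_1 n$ corrupted deletions through a budget of $9\regpar_1 n$ rounds, so the constants $9$, $20$, $1/60$ would need to be reworked. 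This is fixable, but worth noting that the eigenvector-equation argument is both simpler and sharper than the Rayleigh-quotient split.
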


\begin{algorithm}[H]
  \begin{algorithmic}
    \Require{number of nodes $n$, parameter $\regpar_1\in [1/n, 1/60]$, adjacency matrix $A$}
    \State{$S\gets [n]$, Candidates $\gets\{\}$}
    \State{$\text{Candidates} \gets \text{Candidates} \cup \{S\}$} 
    \For{$t = 1$ to $9\regpar_1 n$}
        \State{Compute a top normalized eigenvector $v$ of the matrix $(A - p_S)_{S\times S}$}
        \State{Draw $i_t$ from the distribution where $i \in S$ is selected with probability $v_i^2$}
        \State{$S \gets S \setminus \{i_t\}$}
        \State{$\text{Candidates} \gets \text{Candidates} \cup \{S\}$}    \EndFor
    \State{$S^*\gets \arg\min_{S\in \text{Candidates}} \|(A - p_S)_{S\times S}\|$}\\
     \Return $S^*$ 
  \end{algorithmic}
  \caption{\label{algo:part}Spectral algorithm for estimating $p$}
\end{algorithm}

In the remainder of this section we will prove that Algorithm~\ref{algo:part} indeed outputs a subset $S^*$ with the guarantee in Theorem~\ref{thm:specnorm}. 
The key technical argument is that if the spectral norm of $(A - p_S)_{S\times S}$ is large, the normalized top eigenvector $v$ of $(A - p_S)_{S\times S}$ places constant weight on the subset $S \cap \goodnodes^c$.
Thus, if at a given iteration Algorithm~\ref{algo:part} possesses an unsatisfactory set $S$, it will remove a node from $S \cap \goodnodes^c$ with a constant probability. We formalize this argument in Lemma~\ref{lem:delprob} and Theorem~\ref{th:lacon}.


\begin{lemma}
\label{lem:delprob}
Suppose $\regpar_1\in [\frac{1}{{n}}, \frac{1}{60}]$, $\regpar_2\in[0,1/2]$ and let $\AdjM$ be an adjacency matrix containing an $(\regpar_1,\regpar_2,p)$-regular subgraph.
Let $S \subseteq [n]$ be of size $|S| \geq (1 - 9\regpar_1)n$, and $v$ be the normalized top eigenvector of $(A-p_S)_{S\times S}$.
If $\|(A-p_S)_{S\times S}\|\ge 20n\cdot \Boundb(p,n)  $ then $\|v_{S\cap \goodnodes^c}\|^2\ge 0.15$.
\end{lemma}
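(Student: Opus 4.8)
The goal is to show that if the top eigenvector $v$ of $(A-p_S)_{S\times S}$ has a large eigenvalue, then $v$ places at least $0.15$ of its squared mass on the corrupted coordinates $S\cap\goodnodes^c$. Let me think about the structure.

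We have $\|(A-p_S)_{S\times S}\| = |\lambda| = |v^T (A-p_S)_{S\times S} v|$ where $v$ is a unit vector supported on $S$. Write $v = v_G + v_B$ where $v_G = v_{S\cap\goodnodes}$ and $v_B = v_{S\cap\goodnodes^c}$. So $\|v_G\|^2 + \|v_B\|^2 = 1$.

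Split the quadratic form: $v^T (A-p_S)_{S\times S} v = v_G^T(A-p_S)_{(S\cap\goodnodes)\times(S\cap\goodnodes)}v_G + 2v_G^T(A-p_S) v_B + v_B^T(A-p_S)v_B$ (with appropriate submatrix restrictions).

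The hard part will be bounding the three terms. The hope is:
- The "good-good" term is at most $\|(A-p_S)_{(S\cap\goodnodes)\times(S\cap\goodnodes)}\| \cdot \|v_G\|^2 \le 2n\Boundb(p,n)\cdot\|v_G\|^2$ by Equation~\eqref{eq:llemc} (since $S\cap\goodnodes\subseteq\goodnodes$).
- The cross term is at most $2\|(A-p_S)_{(S\cap\goodnodes)\times(S\cap\goodnodes^c)}\|\cdot\|v_G\|\cdot\|v_B\|$. This needs a bound. One column-norm style bound: for the submatrix with columns in $S\cap\goodnodes^c$ (at most $9\regpar_1 n$ columns), restricted to good rows, the spectral norm is at most... hmm, entries are bounded by $1$ in magnitude so the spectral norm is at most $\sqrt{|S\cap\goodnodes|\cdot|S\cap\goodnodes^c|}\le\sqrt{n\cdot 9\regpar_1 n}=3\sqrt{\regpar_1}n$. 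That's too big unless $\regpar_1$ is tiny, and $\Boundb$ can be as small as $\sqrt{\ln n}/n$.

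So that crude bound is insufficient. The better route: since $v$ is the *top eigenvector*, $(A-p_S)_{S\times S}v = \lambda v$. Restrict this eigenvector equation to coordinates in $S\cap\goodnodes$: then $(A-p_S)_{(S\cap\goodnodes)\times S}v = \lambda v_G$. Decompose the left side into the good-good part acting on $v_G$ plus the good-bad part acting on $v_B$:
$$\lambda v_G = (A-p_S)_{(S\cap\goodnodes)\times(S\cap\goodnodes)}v_G + (A-p_S)_{(S\cap\goodnodes)\times(S\cap\goodnodes^c)}v_B.$$
Taking norms: $|\lambda|\|v_G\| \le 2n\Boundb(p,n)\|v_G\| + \|(A-p_S)_{(S\cap\goodnodes)\times(S\cap\goodnodes^c)}v_B\|$. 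The first term used Equation~\eqref{eq:llemc}. For the second, I'd bound $\|(A-p_S)_{(S\cap\goodnodes)\times(S\cap\goodnodes^c)}v_B\|$. The matrix here has entries in $[-1,1]$ (well, $[-p_S, 1-p_S]$), columns indexed by at most $9\regpar_1 n$ corrupted nodes. We can bound its operator norm applied to $v_B$: actually since $A_{(S\cap\goodnodes)\times(S\cap\goodnodes^c)}$ agrees with $\tilde A$ on these entries (good rows!), these are genuine Erd\H{o}s-R\'enyi entries, and regularity condition~\ref{pt:first-point} or the spectral bound on $\tilde A - p$ applies to the submatrix $\goodnodes \times (S\cap\goodnodes^c)$... but wait, that submatrix is *not* of the form $\goodnodesub\times\goodnodesub$ since $S\cap\goodnodes^c\not\subseteq\goodnodes$. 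Hmm. I'd instead use that $A_{(S\cap\goodnodes)\times(S\cap\goodnodes^c)} = \tilde A_{(S\cap\goodnodes)\times(S\cap\goodnodes^c)}$ and apply Lemma~\ref{lem:goodspecnormnew} to get $\|(\tilde A - p)_{(S\cap\goodnodes)\times(S\cap\goodnodes^c)}\|\le\|\tilde A - p + pI\| + p\|I\|\le c\sqrt{np(1-p)+\ln n}+1 \le n\Boundb(p,n)$ (up to the choice of $c$), plus the $|p-p_S|$ correction on a submatrix with $\le 9\regpar_1 n$ columns, which contributes $|p-p_S|\sqrt{|S\cap\goodnodes|\cdot 9\regpar_1 n}$. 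Using $|p-p_S|\le 45\Boundb(p,n)$ from Theorem~\ref{thm:specnorm}'s last claim (or re-derive via Theorem~\ref{th:newth} using the hypothesized large spectral norm — careful with circularity; better to use a self-contained bound on $|p-p_S|$ that follows from $|S|\ge(1-9\regpar_1)n$ directly, or just carry $|p-p_S|$ symbolically).

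Putting it together, and writing $\sigma:=\|(A-p_S)_{S\times S}\|\ge 20n\Boundb(p,n)$ and $\tau:=\|(A-p_S)_{(S\cap\goodnodes)\times(S\cap\goodnodes^c)}\|$:
$$\sigma\|v_G\| \le 2n\Boundb(p,n)\|v_G\| + \tau\|v_B\|.$$
If I can show $\tau$ is not too large — say $\tau \le c'n\Boundb(p,n)$ for a modest constant, OR handle the $\|v_B\|$ dependence crudely — then rearranging gives $(\sigma - 2n\Boundb)\|v_G\|\le\tau\|v_B\|\le\tau$, so $\|v_G\|\le \tau/(\sigma - 2n\Boundb) \le \tau/(18n\Boundb)$. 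The real subtlety is that $\tau$ could itself be as large as $\Theta(\sqrt{\regpar_1}\,n)$ in the worst case over adversary choices of the corrupted columns, which would *not* give $\|v_G\|$ small. I suspect the actual argument is more clever: it should NOT bound $\tau$ directly but rather exploit that $\|(A-p_S)_{(S\cap\goodnodes)\times(S\cap\goodnodes^c)}v_B\| \le \|(A-p_S)_{S\times(S\cap\goodnodes^c)}v_B\|$ and that the *full* vector $(A-p_S)_{S\times S}v = \lambda v$ has norm $|\lambda|$, so the bad-block contribution is controlled. Let me restructure: from $\lambda v = (A-p_S)_{S\times S}v$, project onto good coords to get the displayed equation, and bound $\|(A-p_S)_{(S\cap\goodnodes)\times(S\cap\goodnodes^c)}v_B\| \le \|(A-p_S)_{S\times S}v_B\|$ — but this last quantity is at most $\|(A-p_S)_{S\times S}\|\cdot\|v_B\| = \sigma\|v_B\|$, giving $\sigma\|v_G\|\le 2n\Boundb\|v_G\| + \sigma\|v_B\|$, i.e. $(\sigma-2n\Boundb)\|v_G\|\le\sigma\|v_B\|$, hence $\|v_G\| \le \frac{\sigma}{\sigma - 2n\Boundb}\|v_B\| \le \frac{20}{18}\|v_B\| = \frac{10}{9}\|v_B\|$. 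Combined with $\|v_G\|^2+\|v_B\|^2=1$, this yields $\|v_B\|^2 \ge \frac{1}{1+(10/9)^2} = \frac{81}{181} > 0.15$.

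\textbf{Summary of steps.} (1) Write $v = v_G + v_B$, split by good/bad coordinates within $S$. (2) Use the eigenvector equation $(A-p_S)_{S\times S}v = \lambda v$ and restrict to good rows: $\lambda v_G = (A-p_S)_{(S\cap\goodnodes)\times(S\cap\goodnodes)}v_G + (A-p_S)_{(S\cap\goodnodes)\times(S\cap\goodnodes^c)}v_B$. (3) Bound the first term by $\|(A-p_S)_{(S\cap\goodnodes)\times(S\cap\goodnodes)}\|\|v_G\| \le 2n\Boundb(p,n)\|v_G\|$ via Equation~\eqref{eq:llemc} (note $S\cap\goodnodes\subseteq\goodnodes$, so the lemma applies with $\goodnodesub = S\cap\goodnodes$). (4) Bound the second term by $\|(A-p_S)_{S\times(S\cap\goodnodes^c)}v_B\| \le \|(A-p_S)_{S\times S}\|\|v_B\| = \sigma\|v_B\|$ — here one must check that extending the row set from $S\cap\goodnodes$ to $S$ only increases the norm of the image vector, and that $(A-p_S)_{S\times(S\cap\goodnodes^c)}v_B = (A-p_S)_{S\times S}v_B$ since $v_B$ is supported on $S\cap\goodnodes^c$. (5) Combine: $(\sigma - 2n\Boundb)\|v_G\| \le \sigma\|v_B\|$, use $\sigma\ge 20n\Boundb$ to get $\|v_G\|\le\frac{10}{9}\|v_B\|$, and conclude with $\|v_G\|^2+\|v_B\|^2 = 1$ that $\|v_B\|^2\ge\frac{81}{181}\ge 0.15$.

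\textbf{Main obstacle.} The delicate point is step (4): making sure the "bad block" contribution to $\lambda v_G$ is genuinely bounded by $\sigma\|v_B\|$ rather than by something involving $\sqrt{\regpar_1}\,n$. The inequality $\|M_{R\times C}x\| \le \|M_{R'\times C}x\|$ for $R\subseteq R'$ needs $x$ supported on $C$ and is just the statement that dropping rows can only shrink a vector's norm; then $\|M_{S\times(S\cap\goodnodes^c)}x\| = \|M_{S\times S}x\| \le \|M_{S\times S}\|\|x\|$ since $x = v_B$ vanishes off $S\cap\goodnodes^c$. Once this is set up correctly the rest is bookkeeping with the constants $20$, $2$, and verifying $81/181 > 0.15$. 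I should also double check the sign handling ($|\lambda| = \sigma$ vs $\lambda$ possibly negative): working with $\|\lambda v_G\| = \sigma\|v_G\|$ sidesteps this entirely.
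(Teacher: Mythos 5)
Your overall architecture is the right one — in fact, restricting the eigenvector equation $(A-p_S)_{S\times S}v = \lambda v$ to the good coordinates and bounding the two resulting blocks is precisely what the paper packages as Theorem~\ref{th:lacon}, so your steps (2), (4), (5) are a correct and faithful unrolling of the paper's final step. Your step (4), bounding the cross contribution by $\sigma\|v_B\|$ using that $v_B$ is supported on $S\cap\goodnodes^c$, is also exactly the paper's inequality~\eqref{step:dos}. So the route is the same, not a different one.

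There is, however, a genuine gap in step (3). You claim
\[
\|(A-p_S)_{(S\cap\goodnodes)\times(S\cap\goodnodes)}\|\le 2n\cdot\Boundb(p,n)
\]
``via Equation~\eqref{eq:llemc}.'' But Equation~\eqref{eq:llemc} with $\goodnodesub = S\cap\goodnodes$ bounds $\|(A-p_{S\cap\goodnodes})_{(S\cap\goodnodes)\times(S\cap\goodnodes)}\|$, i.e.\ the submatrix centered at $p_{S\cap\goodnodes}$, not at $p_S$. The two centerings differ by the rank-one shift $|S\cap\goodnodes|\cdot|p_S - p_{S\cap\goodnodes}|$ in spectral norm, and this is \emph{not} a lower-order correction: nothing a priori forces $p_{S\cap\goodnodes}$ to be close to $p_S$ (the bad nodes contribute to $p_S$ and can pull it far away), so this term can be as large as a constant fraction of $\sigma := \|(A-p_S)_{S\times S}\|$ itself. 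Indeed, the paper's Lemma~\ref{lem:whatisinthename} is devoted precisely to controlling $|S\cap\goodnodes|\cdot|p_{S\cap\goodnodes}-p_S|$, and only manages $\le \tfrac{3}{\sqrt{50}}\sigma \approx 0.42\sigma$ — an order of magnitude larger than the $2n\Boundb \le \sigma/10$ you wrote. You did flag a centering discrepancy earlier in your exploration, but only in the context of the cross term (where your final step (4) trick makes it moot); you then dropped the concern at exactly the place it bites, the good--good block.

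Once the centering correction is carried through via Lemma~\ref{lem:whatisinthename}, your eigenvector-equation chain still closes: you get $\|(A-p_S)_{(S\cap\goodnodes)\times(S\cap\goodnodes)}\| \le (\tfrac{1}{10} + \tfrac{3}{\sqrt{50}})\sigma$, hence $(1-\tfrac{1}{10}-\tfrac{3}{\sqrt{50}})\|v_G\| \le \|v_B\|$, giving $\|v_B\|^2 \gtrsim 0.18$, which still beats the $0.15$ in the statement. But your constant $81/181 \approx 0.45$ is an artifact of the missing term, and as written the proof does not go through without supplying a bound on $|p_S - p_{S\cap\goodnodes}|$, which in turn requires an argument of the Lemma~\ref{lem:whatisinthename} variety that you neither cite nor reproduce.
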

\begin{proof}
We first require the following lemma, which lower bounds the spectral norm of a matrix $(A-p_S)_{S\times S}$ primarily in terms of the empirical estimates of $p$ corresponding to the submatrices induced by $S$ and $S \cap \goodnodes$.
The proof appears in Section~\ref{sec:whatisinthename}.

\begin{lemma}\label{lem:whatisinthename}
Given any symmetric matrix $A$, and subsets $S,\goodnodes\subseteq[n]$
\[
\|(A-p_S)_{S\times S}\|\ge  \frac{|p_{S\cap \goodnodes }-p_S|\cdot|S\cap \goodnodes |}{3}\cdot \min\left\{\sqrt{\frac{|S\cap \goodnodes |}{|S\cap \goodnodes^c|}}, \frac{|S\cap \goodnodes |}{|S\cap \goodnodes^c|}\right\}.
\]
\end{lemma}

For $\regpar_1\le 1/60$ and $|S|\ge (1-9\regpar_1)n$, we can deduce that $|S\cap \goodnodes|\ge n(1-10\regpar_1)\ge 5n/6$ and $|S\cap \goodnodes^c|\le |\goodnodes^c|\le\regpar_1 n\le n/60$. 
Therefore, ${|S\cap \goodnodes^c|}/{|S\cap \goodnodes|}\le 1/50.$
By Lemma~\ref{lem:whatisinthename},
\[
|S\cap \goodnodes|\cdot|p_{S\cap \goodnodes}-p_S|\le 3\|(A-p_S)_{S\times S}\|\max\Big\{\sqrt{\frac{|S\cap \goodnodes^c|}{|S\cap \goodnodes|}}, \frac{|S\cap \goodnodes^c|}{|S\cap \goodnodes|}\Big\}\le \frac{3}{\sqrt{50}}\cdot \|(A-p_S)_{S\times S}\|.
\]
Applying Equation~\eqref{eq:llemc} with $\goodnodes'= S\cap \goodnodes$, we have
\[
\|(A-p_{S\cap\goodnodes})_{(S\cap \goodnodes)\times(S\cap \goodnodes)}\|\le  2n\cdot \Boundb(p,n).
\]
This implies $\|(A-p_{S\cap\goodnodes})_{(S\cap \goodnodes)\times(S\cap \goodnodes)}\|\le 0.1 \|(A-p_S)_{S\times S}\|$. Next, by the triangle inequality, 
\begin{align*}
\|(A-p_S)_{(S\cap \goodnodes)\times(S\cap \goodnodes)}\|
&\le  \|(A-p_{S\cap\goodnodes})_{(S\cap \goodnodes)\times(S\cap \goodnodes)}\|+|S\cap \goodnodes|\cdot|p_{S\cap \goodnodes}-p_S|\\
&\le \bigg(\frac{1}{10}+\frac{3}{\sqrt{50}}\bigg)\|(A-p_S)_{S\times S}\|.
\end{align*}

To interpret the derivation above: we have reasoned that if the spectral norm of $(A-p_S)_{S\times S}$ is large, the contribution due to $S \cap \goodnodes$ (i.e., the submatrix induced by the intersection with the good nodes) is {relatively small.} 
{This suggests that any top eigenvector must place a constant mass on $S \cap \goodnodes^c$.
Indeed, the following theorem formalizes this reasoning, showing that the normalized top eigenvector contains significant weight in this complementary subset of indices.
The proof appears in Section~\ref{sec:lacon}.}

\begin{theorem}\label{th:lacon}
Let $M$ be a non-zero $n\times n$ real symmetric matrix such that for some set $S\subseteq [n]$ and $0\leq \rho\leq1$ we have $\|M_{S\times S}\|\le \rho\|M\|$. 
Let $v$ be any normalized top eigenvector of $M$.
Then
$\|v_{S^c}\|^2\ge {\frac{(1-\rho)^2}{1+(1-\rho)^2}}$.
\end{theorem}

Applying Theorem~\ref{th:lacon} with $\rho=\frac{1}{10}+\frac{3}{\sqrt{50}}$ implies that $\|v_{S\setminus (S\cap \goodnodes)}\|^2=\|v_{S\cap \goodnodes^c}\|^2\ge \frac{(1-\rho)^2}{1+(1-\rho)^2}> 0.15$.
\end{proof}

We conclude this section with the proof of Theorem~\ref{thm:specnorm}. 

\begin{proof}[Proof of Theorem~\ref{thm:specnorm}]
It suffices to show that at least one of the sets $S$ encountered by Algorithm~\ref{algo:part} satisfies  the condition  $\|(A - p_S)_{S\times S}\|\le 20n\cdot \Boundb(p,n)$.
From Lemma~\ref{lem:delprob} it follows that until the algorithm finds such a subset $S$, the probability of deleting a good node in each deletion step is at least $0.15$. 
Since there are $9 \regpar_1 n$ steps, a standard Chernoff-style argument implies that either a subset $S$ (including nodes from both $\goodnodes^c$ and $\goodnodes$) satisfying the conditions of the theorem will be created, or with probability at least $\Pr[\text{Bin}(\lfloor9\regpar_1 n\rfloor,0.15)\ge \lfloor\regpar_1 n\rfloor]\ge 1-\exp(-\Omega(\regpar_1 n))$, all nodes from $\goodnodes^c$ will be deleted and thus $S \subseteq \goodnodes$.
In the latter case we apply Equation~\eqref{eq:llemc}, which implies that $\|(A - p_S)_{S\times S}\|\le 20n\cdot \Boundb(p,n) $ and the theorem.
\end{proof}


\begin{remark} 
\label{rmk:running-time}
{Algorithm~\ref{algo:part} runs for $9\regpar_1  n$ rounds, and in each round the algorithm finds the top eigenvector of an $n\times n$ matrix. 
This may be expensive to compute when the spectral gap is small.
However, Lemma~\ref{lem:comp} shows it suffices to find any unit vector $v\in \mathbb R^n$ such that $|v^\intercal (A - p_S)_{S\times S} v|\ge 0.99 ||(A - p_S)_{S\times S}||$. 
Note that such a unit vector can be found in $\tilde O(n^2)$ time~\cite{MuscoM15}. 
Therefore, one can implement Algorithm~\ref{algo:part} to run in  
$\tilde O(\regpar_1 n^3)$ time.}
\end{remark}

\subsection{A Fine Trimming Algorithm}
\label{sec:upper-trimming}
In this section, we provide a trimming method (Algorithm~\ref{algo:trim}), which refines the result of Algorithm~\ref{algo:part}, improving its guarantee (quantified in Theorem~\ref{thm:specnorm}) by {up to} a factor of $\regpar_1$.

\begin{theorem}
\label{thm:main-bound}
Let $\regpar_1\in [\frac{1}{{n}}, \frac{1}{60}]$, and $\AdjM$ be an adjacency matrix containing an $(\regpar_1,13\regpar_1,p)$-regular subgraph. Suppose we have some $S^*$ such that {$|S^*|\ge (1-9\regpar_1)n$ and} $\|(A - p_{S^*})_{S^*\times S^*}\|\le 20n\cdot \Boundb(p,n)$,
Algorithm~\ref{algo:trim} outputs {$p_{S^f}$} such that for some universal constants $c_2,c_3>0$,
\[
\big|p_{S^f}-p\big|\le   c_2\regpar_1 \Boundb(p,n)+ c_3\Bounda(13\regpar_1,p,n).
\]
\end{theorem}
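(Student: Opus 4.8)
The plan is to use the hypothesized set $S^*$ to identify the ``good core'' and then trim away the nodes whose degree within $S^*$ deviates too far from $p_{S^*}$. Concretely, Algorithm~\ref{algo:trim} should sort the nodes of $S^*$ by the value of their degree restricted to the subgraph induced by $S^*$ (i.e.\ $d_i := \sum_{j \in S^*} A_{i,j}$), remove the $\Theta(\regpar_1 n)$ nodes whose $d_i$ is furthest from $|S^*| \cdot p_{S^*}$, call the resulting set $S^f$, and output $p_{S^f}$. The first step I would carry out is to translate the spectral bound $\|(A - p_{S^*})_{S^* \times S^*}\| \le 20 n \cdot \Boundb(p,n)$ into a statement that \emph{most} nodes in $S^*$ have their restricted degree close to $|S^*| p_{S^*}$: applying the spectral norm to the indicator vector of any subset $T \subseteq S^*$ and the all-ones vector (as in Lemma~\ref{lem:specnormlb}), one gets $|\sum_{i \in T, j\in S^*}(A_{i,j} - p_{S^*})| \le \|(A-p_{S^*})_{S^*\times S^*}\|\sqrt{|T| |S^*|}$, and hence the number of nodes whose restricted degree is off from $|S^*| p_{S^*}$ by more than some threshold $\tau$ is at most $O(n^2 \Boundb(p,n)^2 / \tau^2)$. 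This is essentially a Markov-type consequence of the $\ell_2$-type spectral control — I expect this to be packaged in the excerpt as Lemma~\ref{lem:trimnodes}, which I will invoke.

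Next I would argue correctness of the trimming. Since $|(S^*)^c| \le 9\regpar_1 n$, the set $B \cap S^*$ of corrupted nodes surviving into $S^*$ has size at most $\regpar_1 n$. I would choose the trimming count to be a constant times $\regpar_1 n$, large enough that after removal the set $S^f$ still has size $\ge (1 - 13\regpar_1)n$ (so that it is large enough to invoke regularity condition~\ref{pt:second-point} and Equation~\eqref{eq:llemb} with $\regpar_2 = 13\regpar_1$), yet such that we are guaranteed to have removed all but a controlled number of ``bad-degree'' good nodes. The key point is a counting/pigeonhole step: among the nodes we \emph{fail} to remove, all corrupted ones must have restricted degree within the ``acceptable band,'' because we removed the $\Theta(\regpar_1 n)$ most extreme nodes and there are at most $\regpar_1 n$ corrupted nodes plus only $O(n^2 \Boundb^2/\tau^2)$ good nodes outside the band — so for a suitable $\tau = \Theta(\Boundb(p,n))$, every surviving node (good or bad) has restricted degree within $\tau$ of $|S^*| p_{S^*}$, hence within $O(\regpar_1 n \cdot \tau)$ of the total edge count contributed.

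Then I would do the final accounting. Write $S_g := S^f \cap \goodnodes$ and $S_b := S^f \cap \goodnodes^c$, with $|S_b| \le \regpar_1 n$. The total edge count $\sum_{i,j \in S^f} A_{i,j}$ splits into the part within $S_g$ and the part involving $S_b$. For the within-$S_g$ part, since $S_g \subseteq \goodnodes$ has size $\ge (1 - 13\regpar_1)n$, Equation~\eqref{eq:llemb} with $\regpar_2 = 13\regpar_1$ gives $|p_{S_g} - p| \le 4\Bounda(13\regpar_1, p, n)$. For the cross terms involving $S_b$: each node in $S_b$ survived trimming, so its restricted degree is within the acceptable band, contributing at most $|S_b| \cdot (|S^*| p_{S^*} + \tau) = O(\regpar_1 n \cdot |S^*| p_{S^*}) + O(\regpar_1 n \tau)$ edges, while $p_{S^*}$ is itself within $O(\Boundb(p,n))$ of $p$ by Theorem~\ref{th:newth} (or the stated consequence in Theorem~\ref{thm:specnorm}). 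Normalizing by $|S^f|^2 = \Theta(n^2)$, the corrupted nodes contribute a relative error of $O(\regpar_1 \Boundb(p,n)) + O(\regpar_1 p)$-ish terms that, after being careful about how $\Boundb$ absorbs the $\sqrt{p(1-p)/n}$ and $\sqrt{\ln n}/n$ scales, collapse into $c_2 \regpar_1 \Boundb(p,n)$; the good part contributes $c_3 \Bounda(13\regpar_1, p, n)$; and a triangle inequality combining $|p_{S^f} - p_{S_g}|$ (controlled by the cross terms and the relative sizes) with $|p_{S_g} - p|$ finishes the bound.

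\textbf{Main obstacle.} I expect the crux to be the middle step: showing that trimming the $\Theta(\regpar_1 n)$ degree-outliers actually removes \emph{all} the harmful corrupted nodes, i.e.\ that after trimming every surviving corrupted node has near-$p$ restricted degree. This requires a delicate choice of the trimming constant relative to both $\regpar_1 n$ (the bound on corrupted nodes), $9\regpar_1 n$ (the slack in $|(S^*)^c|$), and the $O(n^2 \Boundb^2 / \tau^2)$ bound on good outliers, and it is where the particular numerology ($13\regpar_1$, the constants $c_2, c_3$) gets pinned down; the risk is that the count of good nodes with mildly-off degrees is not small enough to guarantee the adversarial nodes are squeezed out, so the threshold $\tau$ and the trimming size have to be balanced carefully. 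The rest is triangle inequalities and bookkeeping with the definitions of $\Boundb$ and $\Bounda$.
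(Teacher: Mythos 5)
There is a genuine gap in the proof of the key lemma you invoke for the trimming step, and it is precisely the issue you flagged as the ``main obstacle.''

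Your plan bounds the number of good nodes with restricted degree more than $\tau$ away from $|S^*|p_{S^*}$ by a Markov-type consequence of the spectral bound alone: testing $(A-p_{S^*})_{S^*\times S^*}$ against $\mathbf{1}_T$ and $\mathbf{1}_{S^*}$ gives
$|T|\,\tau\,|S^*| \le 20n\cdot\Boundb(p,n)\sqrt{|T|\,|S^*|}$,
hence $|T|\lesssim n\,\Boundb^2/\tau^2$. For this to force all but $O(\regpar_1 n)$ good nodes inside the band — so that trimming $3\regpar_1 n$ outliers sweeps up every degree outlier — you are forced to take $\tau = \Theta(\Boundb/\sqrt{\regpar_1})$, not $\Theta(\Boundb)$. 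Feeding $\tau=\Theta(\Boundb/\sqrt{\regpar_1})$ into your final accounting, the $\le\regpar_1 n$ surviving corrupted nodes contribute a cross-term error of order $\regpar_1\tau=\sqrt{\regpar_1}\,\Boundb$, which for small $\regpar_1$ is strictly larger than the target $c_2\regpar_1\Boundb + c_3\Bounda(13\regpar_1,p,n)$ (note $\Bounda(13\regpar_1,p,n)$ scales like $\regpar_1\sqrt{(p/n)\ln(1/\regpar_1)}$, and $\sqrt{\regpar_1}\gg\regpar_1\sqrt{\ln(1/\regpar_1)}$). So the spectral norm alone does not give the needed band width.

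The paper's Lemma~\ref{lem:trimnodes} is \emph{not} the Markov bound you are attributing to it. It is proved by contradiction and leans crucially on regularity condition~\ref{pt:second-point} (the $\Bounda$ condition), which gives the much tighter bound $|\sum_{i\in\goodnodesub,\,j\in\goodnodesubanother}(A_{i,j}-p)|\le n^2\Bounda(\regpar_2,p,n)$ for submatrices indexed by sets of size $\le\regpar_2 n$. Concretely: suppose a surviving node has restricted degree beyond $\frac{2\Bounda(13\regpar_1,p,n)}{\regpar_1}+210\Boundb$ of $p_{S^*}$. Then all $3\regpar_1 n$ deleted nodes exceeded this threshold, and by pigeonhole at least $\regpar_1 n$ of those deleted nodes are \emph{good} and all deviate on the same side. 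Summing their restricted degrees over $j\in S^*$ and splitting $S^*$ into $S^*\cap\goodnodes$ and $S^*\cap\goodnodes^c$, the good-good piece is controlled by regularity condition~\ref{pt:second-point} (giving $n^2\Bounda(13\regpar_1,p,n)$), the cross piece by Lemma~\ref{lem:somecor} (a spectral/small-submatrix bound), and the $p$-vs-$p_{S^*}$ shift by the coarse estimate $|p-p_{S^*}|\le 45\Boundb$. This produces a contradiction without ever invoking a Markov count of outliers. The net effect is that the band width is $\Theta(\Bounda(13\regpar_1,p,n)/\regpar_1 + \Boundb)$ — which has a $\sqrt{\ln(1/\regpar_1)}$ factor where your Markov bound would force a $1/\sqrt{\regpar_1}$ factor — and this is exactly what makes the $c_2\regpar_1\Boundb + c_3\Bounda$ rate achievable in the final triangle-inequality accounting. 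Your high-level decomposition of $p_{S^f}-p$ into good-good / cross / bad-bad terms, with Equation~\eqref{eq:llemb} handling the good-good piece, matches the paper; the missing ingredient is that the trimming lemma must be driven by regularity condition~\ref{pt:second-point}, not by the spectral norm alone.
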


The algorithm is easy to describe. 
For a subset $S^*\subseteq[n]$ and a node $i\in S^*$, we define $p_{S^*}^{(i)} := \frac{\sum_{j\in {S^*}}A_{i,j}}{|{S^*}|}$ to be the normalized degree of node $i$ in the subgraph induced by $S^*$. 
We remove the $3\regpar_1 n$ nodes whose normalized degree on subgraph $S^*$ deviate furthest from the average parameter $p_{S^*}$

\begin{algorithm}[H]
  \begin{algorithmic}
    \Require{number of nodes $n$, parameter $\regpar_1\in [1/{n}, 1/60]$, adjacency matrix $A$, subset $S^* \subseteq [n]$}
    \State{Define the score for each node $i\in S^*$ to be $|p_{S^*}-p_{S^*}^{(i)}|$}
    \State{Remove the $3\regpar_1 n$ nodes in $S^*$ with the highest scores to obtain $S^f$}\\
     \Return {$p_{S^f} $} 
  \end{algorithmic}
  \caption{\label{algo:trim}Trimming Algorithm}
\end{algorithm}


The first lemma shows that the average of entries of all \textit{small submatrices} of $S^*\times S^*$ are close to $p_{S^*}$.
\begin{lemma}\label{lem:somecor}
Assume the conditions of Theorem~\ref{thm:main-bound} hold.
For all $S_1,S_2\subseteq {S^*}$ with $|S_1|,|S_2|\le 3\regpar_1 n$ we have
\begin{align*}
\Bigg|\sum_{i\in S_1,\, j\in S_2}(A_{i,j}-p_{S^*})\Bigg|\le 60\regpar_1n^2\cdot \Boundb(p,n).
\end{align*}
\end{lemma}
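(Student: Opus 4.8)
The plan is to bound $\left|\sum_{i\in S_1, j\in S_2}(A_{i,j}-p_{S^*})\right|$ by relating it to the spectral norm of the centered submatrix $(A-p_{S^*})_{S^*\times S^*}$, whose norm we control via the hypothesis $\|(A-p_{S^*})_{S^*\times S^*}\|\le 20n\cdot\Boundb(p,n)$. The natural tool is Lemma~\ref{lem:specnormlb} in the form of its proof: for any sets $S_1, S_2$, taking $u$ and $v$ to be the (appropriately normalized) indicator vectors of $S_1$ and $S_2$ supported inside $S^*$ gives $\left|\sum_{i\in S_1, j\in S_2}(A_{i,j}-p_{S^*})\right| = \sqrt{|S_1||S_2|}\cdot |u^\intercal (A-p_{S^*})_{S^*\times S^*} v| \le \sqrt{|S_1||S_2|}\cdot \|(A-p_{S^*})_{S^*\times S^*}\|$. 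Combined with the hypothesis this yields a bound of $\sqrt{|S_1||S_2|}\cdot 20n\cdot\Boundb(p,n) \le 3\regpar_1 n \cdot 20 n\cdot\Boundb(p,n) = 60\regpar_1 n^2\cdot\Boundb(p,n)$, using $|S_1|,|S_2|\le 3\regpar_1 n$. This is exactly the claimed bound, so the argument is essentially a one-line application of the variational characterization of the spectral norm together with the size constraints on $S_1,S_2$.

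First I would recall that for any subsets $T\subseteq[n]$, $T'\subseteq[n]$ and any matrix $M$, the quantity $\sum_{i\in T, j\in T'} M_{i,j}$ equals $\mathbf{1}_T^\intercal M \mathbf{1}_{T'}$ where $\mathbf{1}_T$ is the $0/1$ indicator vector of $T$; normalizing gives $\sum_{i\in T, j\in T'}M_{i,j} = \sqrt{|T||T'|}\cdot \hat u^\intercal M \hat v$ with $\hat u = \mathbf{1}_T/\sqrt{|T|}$, $\hat v = \mathbf{1}_{T'}/\sqrt{|T'|}$ unit vectors. Applying this with $M = (A-p_{S^*})_{S^*\times S^*}$, $T = S_1$, $T' = S_2$ (noting $S_1,S_2\subseteq S^*$ so that the restriction to $S^*\times S^*$ does not lose any terms), and invoking the definition of spectral norm in \eqref{eqn:spec-norm}, we get $\left|\sum_{i\in S_1, j\in S_2}(A_{i,j}-p_{S^*})\right|\le \sqrt{|S_1||S_2|}\cdot\|(A-p_{S^*})_{S^*\times S^*}\|$. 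Then I would substitute the hypothesis bound on the spectral norm and the size bounds $|S_1|,|S_2|\le 3\regpar_1 n$ to finish.

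I do not expect any genuine obstacle here: the lemma is a direct corollary of the spectral-norm hypothesis on $S^*$ inherited from Theorem~\ref{thm:specnorm}/the assumptions of Theorem~\ref{thm:main-bound}. The only minor point to be careful about is that the statement quantifies over \emph{all} $S_1, S_2$ of size at most $3\regpar_1 n$ simultaneously, but since the bound $\|(A-p_{S^*})_{S^*\times S^*}\|\le 20n\cdot\Boundb(p,n)$ is a single deterministic fact about $S^*$, the uniformity is automatic — no union bound or regularity condition \ref{pt:second-point} is even needed for this particular lemma (the regularity hypothesis in Theorem~\ref{thm:main-bound} will be used elsewhere in the proof of Theorem~\ref{thm:main-bound}, not here). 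So the whole proof is three or four lines.
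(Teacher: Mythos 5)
Your proof is correct and is essentially the same argument as the paper's: the paper first applies Lemma~\ref{lem:specnormlb} to the $S_1\times S_2$ restriction to get $\left|\sum_{(i,j)\in S_1\times S_2}(A_{i,j}-p_{S^*})\right|\le \sqrt{|S_1||S_2|}\cdot\|(A-p_{S^*})_{S_1\times S_2}\|$, then invokes Lemma~\ref{lem:spsubmat} to pass to $\|(A-p_{S^*})_{S^*\times S^*}\|$, whereas you inline the indicator-vector computation and test $(A-p_{S^*})_{S^*\times S^*}$ directly against the normalized indicators of $S_1,S_2$, obtaining the same inequality in one step. The arithmetic $\sqrt{|S_1||S_2|}\cdot 20n\Boundb(p,n)\le 3\regpar_1 n\cdot 20n\Boundb(p,n)=60\regpar_1 n^2\Boundb(p,n)$ matches, and your observation that the regularity hypothesis is not needed for this particular lemma is accurate.
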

\begin{proof} Since $\|(A - p_{S^*})_{S^*\times S^*}\|\le 20n\cdot \Boundb(p,n) $, and using Lemma~\ref{lem:spsubmat} and Lemma~\ref{lem:specnormlb}, we get
\[
\Bigg|\sum_{(i,j)\in S_1\times S_2}(A_{i,j}-p_{S^*})\Bigg|\!\le\! \sqrt{|S_1|\cdot |S_2|}\cdot \|(A-p_{S^*})_{S_1\times S_2}\| \!\le 3\regpar_1 n\|(A-p_{S^*})_{{S^*}\times {S^*}}\|\!\le \! 60\regpar_1n^2\cdot \Boundb(p,n).
\]\end{proof}

We now show that all the nodes in $S^{f}$ have normalized degree close to $p_{S^*}$. 
\begin{lemma}\label{lem:trimnodes}
Assume the conditions of Theorem~\ref{thm:main-bound} hold, and let $S^f$ be the output of Algorithm~\ref{algo:trim}, then for every node $i\in S^f$,
\[
|p_{S^*}^{(i)}-p_{S^*}|\le \Big(\frac{2\Bounda(13\regpar_1,p,n)}{\regpar_1}+ 210\Boundb(p,n)\Big).
\]
\end{lemma}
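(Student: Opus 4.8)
Write $\goodnodes$ for the $(\regpar_1,13\regpar_1,p)$-regular subgraph guaranteed by the hypothesis, so $|\goodnodes^c|\le \regpar_1 n$, and let $\tau := 2\Bounda(13\regpar_1,p,n)/\regpar_1 + 210\,\Boundb(p,n)$ be the quantity we must bound each node's score by. The plan is: first observe that among the $3\regpar_1 n$ nodes removed by Algorithm~\ref{algo:trim}, at most $|\goodnodes^c|\le \regpar_1 n$ lie in $\goodnodes^c$, so at least $2\regpar_1 n$ removed nodes lie in $\goodnodes$. Since Algorithm~\ref{algo:trim} removes exactly the nodes of largest score $|p_{S^*}-p_{S^*}^{(i)}|$, every $i\in S^f$ has score no larger than the score of any removed node. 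Hence it suffices to prove that \emph{fewer than $2\regpar_1 n$ nodes of $\goodnodes\cap S^*$ have score exceeding $\tau$}: if this holds, then among the at least $2\regpar_1 n$ removed good nodes at least one, say $j_0$, has score $\le \tau$, and then every $i\in S^f$ has score $\le$ score$(j_0)\le \tau$, which is the claim (including for corrupted nodes that happen to remain in $S^f$).

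To bound the number of high-score good nodes I would split by the sign of $p_{S^*}^{(i)}-p_{S^*}$ and argue that fewer than $\regpar_1 n$ good nodes of $S^*$ satisfy $p_{S^*}^{(i)}-p_{S^*}>\tau$, and symmetrically fewer than $\regpar_1 n$ satisfy $p_{S^*}-p_{S^*}^{(i)}>\tau$. Suppose for contradiction at least $\regpar_1 n$ good nodes of $S^*$ have $p_{S^*}^{(i)}-p_{S^*}>\tau$; fix a set $T_0\subseteq \goodnodes\cap S^*$ of $\regpar_1 n$ such nodes (integrality of $\regpar_1 n$ is harmless and can be folded into constants), and set $\Sigma := \sum_{i\in T_0}\sum_{j\in S^*}(A_{i,j}-p_{S^*})$. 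Since $\sum_{j\in S^*}(A_{i,j}-p_{S^*})=|S^*|\,(p_{S^*}^{(i)}-p_{S^*})>|S^*|\,\tau$ for each $i\in T_0$ and $|S^*|\ge(1-9\regpar_1)n\ge n/2$, we get the lower bound $\Sigma>|T_0|\cdot|S^*|\cdot\tau\ge(\regpar_1 n^2/2)\,\tau$.

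For the upper bound on $|\Sigma|$, partition $S^*=(S^*\cap\goodnodes)\cup(S^*\cap\goodnodes^c)$ (a disjoint union). Over $S^*\cap\goodnodes^c$: both $|T_0|\le 3\regpar_1 n$ and $|S^*\cap\goodnodes^c|\le|\goodnodes^c|\le\regpar_1 n\le 3\regpar_1 n$, so Lemma~\ref{lem:somecor} gives $\bigl|\sum_{i\in T_0,\,j\in S^*\cap\goodnodes^c}(A_{i,j}-p_{S^*})\bigr|\le 60\regpar_1 n^2\,\Boundb(p,n)$. Over $S^*\cap\goodnodes$, write $A_{i,j}-p_{S^*}=(A_{i,j}-p)+(p-p_{S^*})$: the $(A_{i,j}-p)$ part sums in absolute value to at most $n^2\,\Bounda(13\regpar_1,p,n)$ by regularity condition~\ref{pt:second-point} of Definition~\ref{def:regular}, which applies because $T_0,\,S^*\cap\goodnodes\subseteq\goodnodes$ with $|T_0|=\regpar_1 n\le 13\regpar_1 n$ and $|S^*\cap\goodnodes|\ge(1-10\regpar_1)n\ge(1-13\regpar_1)n$; the $(p-p_{S^*})$ part contributes at most $|T_0|\cdot|S^*\cap\goodnodes|\cdot|p-p_{S^*}|\le\regpar_1 n\cdot n\cdot 45\,\Boundb(p,n)$, using $|p_{S^*}-p|\le 45\,\Boundb(p,n)$ from Theorem~\ref{thm:specnorm}. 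Altogether $|\Sigma|\le n^2\,\Bounda(13\regpar_1,p,n)+105\regpar_1 n^2\,\Boundb(p,n)$. Comparing with the lower bound gives $(\regpar_1 n^2/2)\,\tau<n^2\,\Bounda(13\regpar_1,p,n)+105\regpar_1 n^2\,\Boundb(p,n)$, i.e.\ $\tau<2\Bounda(13\regpar_1,p,n)/\regpar_1+210\,\Boundb(p,n)=\tau$, a contradiction. So fewer than $\regpar_1 n$ good nodes of $S^*$ have positive deviation exceeding $\tau$; the negative case is identical, yielding fewer than $2\regpar_1 n$ high-score good nodes and completing the proof.

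The main obstacle is the combinatorial bookkeeping of the first step — pinning down that a kept node inherits the (small) score of some \emph{good} removed node, which is what lets us reduce to a counting statement about good nodes only — together with using the correct centering in the estimate of $|\Sigma|$: center the good block around the true parameter $p$ to invoke the regularity condition, but center the corrupted block around $p_{S^*}$ to invoke Lemma~\ref{lem:somecor}, and carry the constant $|p-p_{S^*}|\le 45\,\Boundb(p,n)$ through. Everything else is routine arithmetic and the minor integrality caveat around $\regpar_1 n$.
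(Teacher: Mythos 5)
Your proof is correct and follows essentially the same route as the paper's: both fix a set of about $\regpar_1 n$ good nodes with a uniform sign of large deviation, then compare a lower bound on $\sum_{i,j\in\text{(set)}\times S^*}(A_{i,j}-p_{S^*})$ against the same three-way decomposition into the $\goodnodes$-block (regularity condition~\ref{pt:second-point}), the $(p-p_{S^*})$ correction (via the coarse bound $|p-p_{S^*}|\le 45\Boundb$), and the $\goodnodes^c$-block (Lemma~\ref{lem:somecor}). The only cosmetic difference is that the paper runs the contradiction through $D^\pm\cap\goodnodes$ directly, while you first prove the counting claim that fewer than $2\regpar_1 n$ good nodes can have score exceeding $\tau$ and then invoke the pigeonhole on the removed nodes.
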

\begin{proof}
Suppose to the contrary that after $3\regpar_1 n$ nodes are deleted by Algorithm~\ref{algo:trim}, there is a node $i\in S^f$ such that $|p_{S^*}^{(i)}-p_{S^*}|> \Big(\frac{2\Bounda(13\regpar_1,p,n)}{\regpar_1}+ 210\Boundb(p,n)\Big)$. Therefore, all the nodes deleted by Algorithm~\ref{algo:trim} are such that $|p_{S^*}^{(i)}-p_{S^*}|> \Big(\frac{2\Bounda(13\regpar_1,p,n)}{\regpar_1}+ 210\Boundb(p,n)\Big)$. Let $D^+$ be the set of nodes deleted by Algorithm~\ref{algo:trim} such that $p_{S^*}^{(i)}>p_{S^*}$ for $i\in D^+$ 
and  $D^-$ be the set of deleted nodes $i$ such that $p_{S^*}^{(i)}<p_{S^*}$ for $i\in D^-$. Since $|D^+|+|D^{-}|=3\regpar_1 n$ {and $|(D^+ \cup D^-)\setminus\goodnodes|\le |\goodnodes^c|\le\regpar_1 n$,} we have that $|D^+\cap \goodnodes|\ge \regpar_1 n$ or $|D^-\cap \goodnodes|\ge \regpar_1 n$. Suppose $|D^+\cap \goodnodes|\ge \regpar_1 n$. Let $\goodnodesub=D^+\cap \goodnodes$. Then, using $|\goodnodesub|\ge\regpar_1 n$ and $|S^*|>n/2$, we have
\begin{align*}
\sum_{i\in \goodnodesub, j\in S^*} (A_{i,j}-p_{S^*})= \sum_{i\in \goodnodesub} |S^*| (p_{S^*}^{(i)}-p_{S^*}) &> |\goodnodesub|\cdot|S^*|\cdot\Big(\frac{2\Bounda(13\regpar_1,p,n)}{\regpar_1}+ 210\Boundb(p,n)\Big)\\
&\ge n^2\Bounda(13\regpar_1,p,n)+ 105|\goodnodesub|n\cdot \Boundb(p,n).
\end{align*}
Now, note that 
\begin{align*}
\sum_{i\in \goodnodesub, j\in S^*} (A_{i,j}-p_{S^*}) = \sum_{i\in \goodnodesub, j\in S^*\cap \goodnodes} (A_{i,j}-{p})+ |\goodnodesub|\cdot |S^*\cap \goodnodes| \cdot (p-p_{S^*}) + \sum_{i\in \goodnodesub, j\in S^*\cap \goodnodes^c} (A_{i,j}-p_{S^*})
\end{align*}
By Lemma~\ref{lem:somecor} with $S_1=\goodnodesub$ and $S_2=S^*\cap \goodnodes^c$ the last term in the expression above is at most $60~\regpar_1~{n^2\cdot}\Boundb(p,n)$. 
For the second term note that $|p-p_{S^*}|<  45\cdot \Boundb(p,n)$ and therefore, 
the second term is at most $45|\goodnodesub|n\cdot \Boundb(p,n)$. Finally using regularity condition~\ref{pt:second-point} with $\goodnodesub$ and $\goodnodesubanother=S^*\cap \goodnodes$ and $\regpar_2=13\regpar_1$ bounds the first term by $n^2\cdot\Bounda(13\regpar_1,p,n)$. 
Combining the three bounds and using $|\goodnodesub|\ge \alpha_1 n$, 
\[
\sum_{i\in \goodnodesub, j\in S^*} (A_{i,j}-p_{S^*})\le n^2\Bounda(13\regpar_1,p,n)+ (45|\goodnodesub|+ 60\regpar_1{n})n\cdot \Boundb(p,n)\le n^2\Bounda(13\regpar_1,p,n)+ 105|\goodnodesub|n\cdot \Boundb(p,n),\] 
This shows the contradiction and completes the proof for the case $|D^+\cap \goodnodes|>\regpar_1 n$.
The case when $|D^-\cap \goodnodes|>\regpar_1 n$ has a similar argument and is omitted.
\end{proof}

Combining these lemmas appropriately allows us to conclude our main result on the guarantees of Algorithm~\ref{algo:trim}.
\begin{proof}[Proof of Theorem~\ref{thm:main-bound}] 
We will partition $S^f\times S^f$ into the following groups and bound each term separately. 
\begin{align*}
\sum_{i,j\in S^f}A_{i,j} = \sum_{i,j\in S^f\cap \goodnodes} A_{i,j}+  2\sum_{i\in S^f,j\in S^f\cap \goodnodes^c} A_{i,j}-  \sum_{i,j\in S^f\cap \goodnodes^c} A_{i,j}. 
\end{align*}
Since $p_{S^f}=\sum_{i,j\in S^f}A_{i,j}/{|S^f|^2}$, by the triangle inequality, 
\begin{align*}
\big|{p_{S^f}}-p\big|
&\le
\Bigg|\frac{\sum_{i,j\in S^f\cap \goodnodes}(A_{i,j}-p)}{|S^f|^2}\Bigg|
+2\Bigg|\frac{\sum_{i\in S^f  \,j\in S^f\cap \goodnodes^c}(A_{i,j}-p)}{|S^f|^2}\Bigg|
+\Bigg|\frac{\sum_{i,j\in S^f\cap \goodnodes^c}(A_{i,j}-p)}{|S^f|^2}\Bigg|.
\end{align*}

For the first term, $|S^f\cap \goodnodes|\ge  (1-13\regpar_1)n\ge n/2$. Using Equation~\eqref{eq:llemb} with $\goodnodesub=S^f\cap \goodnodes$ and $\regpar_2=13\regpar_1$,
\[
{\Bigg|\frac{\sum_{i,j\in S^f\cap \goodnodes}(A_{i,j}-p)}{|S^f|^2}\Bigg|\le}\Bigg|\frac{\sum_{i,j\in S^f\cap \goodnodes}(A_{i,j}-p)}{|S^f\cap \goodnodes|^2}\Bigg| = |p_{S^f\cap \goodnodes}-p|\le 4 \Bounda(13\regpar_1,p,n).
\]

Since $|S^f\times (S^f\cap \goodnodes^c)|\le \regpar_1 n^2$, and $|S^f|\ge n/2 $, by the triangle inequality, the second term is bounded by
\begin{align}
2\Bigg|\frac{\sum_{i\in S^f  \,j\in S^f\cap \goodnodes^c}(A_{i,j}-p)}{|S^f|^2}\Bigg|
&\le 2\Bigg|\frac{\sum_{i\in S^f  \,j\in S^f\cap \goodnodes^c}(A_{i,j}-p_{S^*})}{n^2/4}\Bigg| + \frac{2\regpar_1n^2}{n^2/4} |p_{S^*}-p|\nonumber\\
&\le 8\Bigg|\frac{\sum_{i\in S^*  \,j\in S^f\cap \goodnodes^c}(A_{i,j}-p_{S^*})}{n^2}\Bigg|+8\Bigg|\frac{\sum_{i\in S^*\setminus S^f  \,j\in S^f\cap \goodnodes^c}(A_{i,j}-p_{S^*})}{n^2}\Bigg| \nonumber\\
&+8\regpar_1\cdot45\cdot \Boundb(p,n)\nonumber.
\end{align}

Since $|S^*\setminus S^f|\le 3\regpar_1 n$ and $|S^f\cap \goodnodes^c|\le \regpar_1 n$, by taking $S_1=S^*\setminus S^f$ and $S_2 = S^f\cap \goodnodes^c$ in Lemma~\ref{lem:somecor} bounds the second term above by $8(60\regpar_1\cdot \Boundb(p,n))$. For the first term, 
\begin{align*}
\Bigg|\frac{\sum_{i\in S^*  \,j\in S^f\cap \goodnodes^c}(A_{i,j}-p_{S^*})}{n^2}\Bigg|
&\le  \sum_{j\in S^f\cap \goodnodes^c}\Bigg|\frac{\sum_{i\in S^*}(A_{i,j}-p_{S^*})}{n^2}\Bigg| \\
&\le \sum_{j\in S^f\cap \goodnodes^c}\frac{|S^*|}{n^2}\Bigg|\frac{\sum_{i\in S^*}(A_{i,j}-p_{S^*})}{|S^*|}\Bigg|\\
&\le \sum_{j\in S^f\cap \goodnodes^c}\frac{1}{n}|p^{(j)}_{S^*}-p_{S^*}|\\
&\le \regpar_1\cdot\Big(\frac{2\Bounda(13\regpar_1,p,n)}{\regpar_1}+ 210\Boundb(p,n)\Big),
\end{align*}
where we use Lemma~\ref{lem:trimnodes} and $|S^{f}\cap\goodnodes^c|\le\regpar_1 n$. 

For the final term, since  $|(S^f\cap\goodnodes^c)\times (S^f\cap\goodnodes^c)|\le \regpar_1^2n^2$,
\begin{align*}
\Bigg|\frac{\sum_{i,j\in S^f\cap\goodnodes^c}(A_{i,j}-p)}{|S^f|^2}\Bigg|
&\le \Bigg|\frac{\sum_{i,j\in S^f\cap\goodnodes^c}(A_{i,j}-p_{S^*})}{|S^f|^2}\Bigg|+ |p_{S^*}-p|\cdot \frac{|S^f\cap\goodnodes^c|^2}{|S^f|^2},
\end{align*}
which can be bounded again by taking $S_1=S_2=S^f\cap\goodnodes^c$ in Lemma~\ref{lem:somecor}. 
\end{proof}


\subsection{Putting Things Together}
\label{sec:upper-combine}
We now combine our methods from previous sections to prove our main result.
This primarily consists of running Algorithm~\ref{algo:part} followed by Algorithm~\ref{algo:trim}, as described by Algorithm~\ref{algo:complete} and quantified by Theorem~\ref{thm:main-upper-bound-complete}.
For technical reasons, to get the correct scaling of the error with respect to the parameter $p$, we run this procedure on both the graph and its complement, and output the appropriate of the two estimates.
This is described in Algorithm~\ref{algo:erd-ren}, and quantified in Theorem~\ref{thm:upper-complete}.

\begin{theorem}
\label{thm:main-upper-bound-complete}
Suppose $\regpar_1\in [\frac{1}{{n}}, \frac{1}{60}]$  and let $\AdjM$ be an adjacency matrix containing an $(\regpar_1,13\regpar_2,p)$-regular subgraph.
With probability at least $1 - n^{-2}$, Algorithm~\ref{algo:complete} outputs {$p_{S^f}$} such that for some universal constants $c_2,c_3>0$,
\[
\big|p_{S^f}-p\big|\le   c_2\regpar_1 \Boundb(p,n)+ c_3\Bounda(13\regpar_1,p,n).
\]
The running time of this algorithm is $\tilde{O}(\regpar_1 n^3)$.
\end{theorem}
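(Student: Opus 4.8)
The plan is to observe that Algorithm~\ref{algo:complete} is nothing but the composition of the two routines already analyzed: it runs the spectral procedure of Algorithm~\ref{algo:part} to produce a candidate set $S^*$, and then feeds $S^*$ into the trimming procedure of Algorithm~\ref{algo:trim}, returning that routine's output $p_{S^f}$. Since Theorem~\ref{thm:main-bound} (the guarantee for Algorithm~\ref{algo:trim}) is a purely deterministic statement, conditioned on a regularity hypothesis on $\AdjM$ and on properties of its input set, the only probabilistic ingredient in the whole argument is the success of Algorithm~\ref{algo:part}. So the proof reduces to: (i) instantiating Theorem~\ref{thm:specnorm}; (ii) checking that the set $S^*$ it returns satisfies exactly the input hypotheses of Theorem~\ref{thm:main-bound}; (iii) reading off the error bound; and (iv) accounting for the running time.

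First I would apply Theorem~\ref{thm:specnorm} with the given $\regpar_1\in[\tfrac1n,\tfrac1{60}]$ and $\regpar_2 = 13\regpar_1$; note $13\regpar_1\le 13/60 < 1/2$, so the parameters are in range, and by hypothesis $\AdjM$ contains an $(\regpar_1,13\regpar_1,p)$-regular subgraph. Theorem~\ref{thm:specnorm} then guarantees that, with probability at least $1-1/n^2$, Algorithm~\ref{algo:part} outputs a set $S^*$ with $|S^*|\ge(1-9\regpar_1)n$ and $\|(\AdjM - p_{S^*})_{S^*\times S^*}\|\le 20n\cdot\Boundb(p,n)$. (When $\regpar_1 n = O(\log n)$, achieving the $1-1/n^2$ success probability uses the $O(\log n)$-fold repetition of Algorithm~\ref{algo:part} noted in the footnote of Theorem~\ref{thm:specnorm}, which only affects constants in the running time.) Let $\mathcal E$ denote this success event.

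Next I would invoke Theorem~\ref{thm:main-bound} on $\mathcal E$. Its hypotheses are precisely $\regpar_1\in[\tfrac1n,\tfrac1{60}]$, that $\AdjM$ contains an $(\regpar_1,13\regpar_1,p)$-regular subgraph, and that the input set satisfies $|S^*|\ge(1-9\regpar_1)n$ and $\|(\AdjM - p_{S^*})_{S^*\times S^*}\|\le 20n\cdot\Boundb(p,n)$ — all of which hold on $\mathcal E$. Hence Algorithm~\ref{algo:trim}, run on $S^*$, outputs $p_{S^f}$ with $\big|p_{S^f}-p\big|\le c_2\regpar_1\Boundb(p,n)+c_3\Bounda(13\regpar_1,p,n)$, which is the claimed bound, holding with probability at least $1-n^{-2}$. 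For the running time: by Remark~\ref{rmk:running-time}, Algorithm~\ref{algo:part} (using an approximate top-eigenvector subroutine) runs in $\tilde O(\regpar_1 n^3)$ time, while Algorithm~\ref{algo:trim} only computes $n$ normalized degrees, sorts them, removes the top $3\regpar_1 n$, and computes one edge density — all in $O(n^2\log n)$ time, which is dominated since $\regpar_1 n\ge 1$; the total is $\tilde O(\regpar_1 n^3)$. I do not expect a genuine obstacle here: the substantive work lives in Theorems~\ref{thm:specnorm} and~\ref{thm:main-bound}, and the only points requiring care are ensuring the regularity parameter $13\regpar_1$ demanded by the trimming analysis matches the one assumed in the hypothesis, and noting that a single union bound over the lone failure event of Algorithm~\ref{algo:part} suffices.
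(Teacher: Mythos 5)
Your proposal is correct and takes essentially the same approach as the paper: both proofs simply chain Theorem~\ref{thm:specnorm} (which, with probability $\ge 1-n^{-2}$, produces a set $S^*$ satisfying the hypotheses of Theorem~\ref{thm:main-bound}) into Theorem~\ref{thm:main-bound} (which is deterministic given those hypotheses), and then add the running times of Algorithms~\ref{algo:part} and~\ref{algo:trim}. You spell out the details — the instantiation $\regpar_2 = 13\regpar_1$, the footnote about $O(\log n)$ repetitions when $\regpar_1 n$ is small, and the verification that the trimming step's cost is dominated — somewhat more explicitly than the paper, which states the composition in a single sentence, but the argument is the same.
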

\begin{proof}
The estimation guarantees in Theorem~\ref{thm:main-upper-bound-complete} follows by combining the guarantees of Theorems~\ref{thm:specnorm}, and \ref{thm:main-bound}.
We conclude the proof by analyzing the running time.
As discussed in Remark~\ref{rmk:running-time}, Algorithm~\ref{algo:part} can be implemented in $\tilde {\mathcal O}(\regpar_1n^3)$ time. 
Algorithm~\ref{algo:trim} takes ${\mathcal O}(n^2)$ time.
Hence, Algorithm~\ref{algo:complete} runs in $\tilde {\mathcal O}(\regpar_1 n^3)$ time.
\end{proof}

\begin{algorithm}[H]
  \begin{algorithmic}
    \Require{number of nodes $n$, parameter $\regpar_1\in [1/n, 1/60]$, adjacency matrix $A$}
    \State{$S^* \gets$ run the spectral algorithm (Algorithm~\ref{algo:part}) with inputs $n$, $\regpar_1$, $A$}
    \State{$p_{S^f} \gets$ run the trimming algorithm (Algorithm~\ref{algo:trim}) with inputs $n$, $\regpar_1$,  $A$, $S^*$}
    \State{\Return{$p_{S^f}$}}
  \end{algorithmic}
  \caption{\label{algo:complete} Algorithm for{ estimating $p$}} 
\end{algorithm}

Observe that the $\Bounda(13\regpar_1,p,n)$ error term in Theorem~\ref{thm:main-upper-bound-complete} scales proportional to $\sqrt{p}$, which gives improved error when $p$ is close to $0$.
To enjoy the same improvement for $p$ close to $1$, we can run the algorithm on the complement of the graph.
Theorem~\ref{thm:upper-complete} describes the resulting guarantees, and the procedure appears as Algorithm~\ref{algo:erd-ren}.
{Note that we apply Theorem~\ref{th:goodspecnorm} to convert from adjacency matrices containing regular subgraphs (which we have considered up to this point) back to our original problem.}

\begin{theorem}
\label{thm:upper-complete}
Suppose $\corr\in [\frac{1}{{n}}, \frac{1}{60}]$ and $p \in [0,1]$.
Let $G\sim G(n,p)$, and $A$ be the adjacency matrix of a  rewiring of $G$ by a $\corr$-omniscient adversary. 
With probability at least $1 - 10n^{-2}$, running Algorithm~\ref{algo:erd-ren} will output a $\hat p$ such that
 \begin{align*}
|\hat p - p| \le C\cdot\Paren{ \frac{\sqrt{p(1-p)\log n}}{n}+\frac{\corr\sqrt{p(1-p)\log(1/\gamma)}}{\sqrt{n}}+  \frac\corr n\log n},
 \end{align*}
 for some {universal} constant $C$.
 The running time of this algorithm is $\tilde{O}(\gamma n^3)$. 
 \end{theorem}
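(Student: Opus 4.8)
The plan is to bolt together the machinery already in place: Theorem~\ref{th:goodspecnorm} furnishes the deterministic regularity conditions with high probability, Theorem~\ref{thm:main-upper-bound-complete} converts those conditions into an estimate with error $c_2\corr\,\Boundb(p,n)+c_3\Bounda(13\corr,p,n)$, and the one new ingredient is to run that estimator on the complement graph as well, so that the final bound scales with $\sqrt{p(1-p)}$ rather than with $\sqrt p$. Take $\regpar_1=\corr$ (which lies in $[1/n,1/60]$) and $\regpar_2=13\corr$, and note $13\corr<1/2$. Observe that $\bar A$, the adjacency matrix of the complement graph, is itself the adjacency matrix of a $\corr$-corrupted sample from $G(n,1-p)$ with the same corrupted set $\goodnodes^c$: on $\goodnodes\times\goodnodes$ it equals the complement of the uncorrupted ER graph, which is $G(n,1-p)$ restricted to $\goodnodes$, and the rest of its entries are arbitrary. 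Applying Theorem~\ref{th:goodspecnorm} once to $A$ with parameter $p$ and once to $\bar A$ with parameter $1-p$, a union bound shows that with probability at least $1-8n^{-2}$ the uncorrupted set $\goodnodes$ is simultaneously $(\corr,13\corr,p)$-regular for $A$ and $(\corr,13\corr,1-p)$-regular for $\bar A$; condition on this event (for $p\in\{0,1\}$ these conditions hold deterministically, so the argument below applies a fortiori).

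On this event, Theorem~\ref{thm:main-upper-bound-complete} applied to $A$ gives, with probability at least $1-n^{-2}$ over Algorithm~\ref{algo:part}'s internal randomness, an estimate $\hat p_1$ with $|\hat p_1-p|\le c_2\corr\,\Boundb(p,n)+c_3\Bounda(13\corr,p,n)$; applied to $\bar A$ it gives, with probability at least $1-n^{-2}$, an estimate $\hat q$ with $|\hat q-(1-p)|\le c_2\corr\,\Boundb(1-p,n)+c_3\Bounda(13\corr,1-p,n)$. Since $\Boundb$ depends on its first argument only through the product $p(1-p)$ we have $\Boundb(1-p,n)=\Boundb(p,n)$, so the only asymmetry between the two error bounds is the factor $\sqrt p$ versus $\sqrt{1-p}$ appearing inside $\Bounda$. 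Algorithm~\ref{algo:erd-ren} runs both estimators and returns $\hat p:=\hat p_1$ if $\hat p_1\le 1/2$ and $\hat p:=1-\hat q$ otherwise; the total failure probability, over the regularity event and the two runs, is at most $8n^{-2}+n^{-2}+n^{-2}=10n^{-2}$, as claimed.

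It remains to simplify. Substituting \eqref{eqn:def-eta} and \eqref{eqn:def-kappa} and using $\sqrt{p(1-p)}\le 1/2$ together with $\ln(e/13\corr)\le\ln(1/\corr)\le\ln n$ (the last two since $e<13$ and $\corr\ge 1/n$), one checks that $c_2\corr\,\Boundb(p,n)+c_3\Bounda(13\corr,p,n)$ is at most an absolute constant $\delta_0$ well below $1/2$ throughout the regime $\corr\le 1/60$; hence $|\hat p_1-p|\le\delta_0$. Consequently, if the algorithm outputs $\hat p_1$ then $1-p\ge 1/2-\delta_0$, and if it outputs $1-\hat q$ then $p\ge 1/2-\delta_0$, so in either case the defect $\sqrt p$ (resp.\ $\sqrt{1-p}$) in the relevant $\Bounda$ term is at most $C'\sqrt{p(1-p)}$ with $C'=(1/2-\delta_0)^{-1/2}$. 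Now substitute the definitions once more:
\[
\corr\,\Boundb(p,n)\le c\left(\frac{\corr\sqrt{p(1-p)}}{\sqrt n}+\frac{\corr\sqrt{\ln n}}{n}\right),\qquad \Bounda(13\corr,p,n)\le c_1\left(\frac{13\corr\sqrt{p\ln(1/\corr)}}{\sqrt n}+\frac{13\corr\ln n}{n}+\frac{\sqrt{p\ln n}}{n}\right),
\]
and in the second bound replace $\sqrt p$ by $C'\sqrt{p(1-p)}$ (applied to whichever of $A,\bar A$ was selected), while using $\sqrt{\ln n}\le\ln n$ and $\ln(1/\corr)\ge\ln 60>1$. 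Collecting terms shows the error of $\hat p$ is at most a universal constant times $\frac{\sqrt{p(1-p)\log n}}{n}+\frac{\corr\sqrt{p(1-p)\log(1/\corr)}}{\sqrt n}+\frac{\corr\log n}{n}$, which is the asserted bound. The running time is two invocations of Algorithm~\ref{algo:complete}, hence $\tilde O(\corr n^3)$ by Theorem~\ref{thm:main-upper-bound-complete}.

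The substantive content --- the random-matrix estimates behind the regularity conditions, the spectral-iteration analysis, and the trimming analysis --- has already been carried out in the cited theorems, so the only real obstacles here are (i) the complement/threshold device required to replace $\sqrt p$ by $\sqrt{p(1-p)}$, which works precisely because the coarse estimate $\hat p_1$ is accurate enough ($|\hat p_1-p|\le\delta_0<1/2$) to reveal on which side of $1/2$ the parameter lies, and (ii) the slightly fiddly bookkeeping of checking that each $\max$-branch of $\Boundb$ and $\Bounda$ is dominated by one of the three target terms, in particular trading $\ln(1/\corr)$ against $\ln n$ using $\corr\ge 1/n$.
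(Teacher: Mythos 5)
Your proof is correct and follows essentially the same approach as the paper: condition on the regularity event from Theorem~\ref{th:goodspecnorm}, apply Theorem~\ref{thm:main-upper-bound-complete} to both $A$ and its complement, and use the accuracy of the coarse estimate to justify replacing $\sqrt{p}$ (or $\sqrt{1-p}$) by $O(\sqrt{p(1-p)})$ in the $\Bounda$ term. The paper makes the final case split on the true $p$ (three ranges: $p\le 0.1$, $p\ge 0.9$, and the middle), while you split on the algorithm's observable decision $\hat p_1\lessgtr 1/2$; these are interchangeable once one has the $\delta_0<1/2$ accuracy guarantee, so the substance is identical.
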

\begin{proof}
Theorem~\ref{thm:main-upper-bound-complete} and Theorem~\ref{th:goodspecnorm} imply that with probability $\ge 1-5n^{-2}$, $p^*$ in Algorithm~\ref{algo:erd-ren} satisfies:
\begin{align}\label{eqn:locf}
\big|p^*-p\big|\le   c_2\gamma \Boundb(p,n)+ c_3\Bounda(13\gamma,p,n).   
\end{align}
By symmetry, with probability $\ge 1-5n^{-2}$, $q^*$ in Algorithm~\ref{algo:erd-ren} satisfies:
\begin{align}\label{eqn:locfb}
\big|q^*-(1-p)\big|\le   c_2\gamma \Boundb(1-p,n)+ c_3\Bounda(13\gamma,1-p,n).    
\end{align}
When $p\le 0.1$, equation~\eqref{eqn:locf}  implies $p^*\le 0.5$, and hence $\hat p= p^*$ and $|\hat p -p |=|p^*-p| $. Similarly, when $p\ge 0.9$, \eqref{eqn:locf} implies $p^*> 0.5$, and hence $\hat p= 1- q^*$ and $|\hat p-p|=|(1-q^*)-p|=|(1-p)-q^*|$.
Finally, for $0.1\le p\le 0.9$, we have $|\hat p-p|\le \max\{|p^*-p|,|q^*-(1-p)|\}$. Combining the bound for the three cases completes the proof.
\end{proof}

\begin{algorithm}[H]
  \begin{algorithmic}
    \Require{number of nodes $n$, parameter $\gamma\in [1/n, 1/60]$, adjacency matrix $A$}
    \State{$p^* \gets$ run Algorithm~\ref{algo:complete} with inputs $n$, $\gamma$, $A$}
    \State{$q^* \gets$ run Algorithm~\ref{algo:complete} with inputs $n$, $\gamma$, $(1-I-A)$ ($1$ and $I$ are the $n\times n$ all-ones and identity matrix)}
     \If{$p^* \leq 0.5$}
     \State{$\hat p \gets p^* $}
     \Else{}
    \State{$\hat p \gets 1-q^*$ }
    \EndIf{}
    \State{\Return{$\hat p$}}
  \end{algorithmic}
  \caption{\label{algo:erd-ren} Algorithm for Robust Erd\H{o}s-R\'enyi parameter estimation}
\end{algorithm}

\section{Lower Bounds for Parameter Estimation}
\label{sec:lower-bound}

\def \derg {DG}
In this section, we prove our main lower bound for robust parameter estimation in Erd\H{o}s-R\'enyi random graphs.
This implies that our algorithms are tight up to logarithmic factors.

\begin{theorem}
\label{thm:lower-bound}
Let $p\leq0.5$. Then there exists a $\corr$-oblivious adversary such that no algorithm can distinguish between $G(n,p)$ and $ G\Paren{n,p+0.1\max\Paren{\corr\sqrt{p/n},  \gamma/\ns,  \sqrt{p}/n}}$ with probability at least 0.65.
\end{theorem}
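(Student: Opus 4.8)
Throughout write $p' = p + \delta$ with $\delta = 0.1\max\!\left(\corr\sqrt{p/n},\,\corr/n,\,\sqrt{p}/n\right)$, and note $p\le 1/2$ makes $\sqrt{p}$ and $\sqrt{p(1-p)}$ equivalent up to a constant and guarantees $p'\in[0,1]$. By the standard two-point argument it suffices to produce a pair of $\corr$-oblivious adversaries $\mathcal A_0,\mathcal A_1$ for which $\dtv\!\left(\mathcal A_0(\gnp{n}{p}),\,\mathcal A_1(\gnp{n}{p'})\right)\le 0.3$; then every test has success probability at most $\tfrac12(1+0.3)=0.65$. I will choose the adversaries separately in the three regimes corresponding to the three terms of the maximum.

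\textbf{Reduction to directed graphs.} Following the techniques outline, I first pass to the directed Erd\H{o}s--R\'enyi model $\vec G(n,q)$ (each ordered pair an edge independently with probability $q$), where corrupting node $i$ means rewriting the $i$-th row (its out-neighbourhood). Map a directed graph to an undirected one by keeping the edge $\{i,j\}$ (for $i<j$) exactly when the arc $i\to j$ is present; this pushforward sends $\vec G(n,q)$ to $\gnp{n}{q}$ and sends a directed out-edge adversary to a legitimate undirected node-corruption adversary on the same node set of size $\le \corr n$. By the data-processing inequality it therefore suffices to build the adversary pair in the directed model. Its key feature is that the out-neighbourhoods $V_1,\dots,V_n$ are i.i.d. $\mathrm{Ber}(q)^{n-1}$, hence the out-degrees are i.i.d. $\Bin(n-1,q)$ and \emph{independent} — the task becomes robustly estimating the bias of a Bernoulli from $n$ i.i.d. batches of $n-1$ coins with $\corr n$ corrupted batches, and all couplings can be done batch-by-batch.

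\textbf{The three regimes.} For the term $\delta = 0.1\sqrt{p}/n$ take $\mathcal A_0=\mathcal A_1$ to be the trivial (no-op) adversary; since the edge count is a sufficient statistic, $\dtv(\gnp{n}{p},\gnp{n}{p'}) = \dtv\!\left(\Bin(\binom n2,p),\Bin(\binom n2,p')\right)$, and a routine binomial total-variation estimate bounds this by $O\!\left(\delta\sqrt{\binom n2/(p(1-p))}\right)=O(0.1)$, with the degenerate case of tiny $p$ handled by comparing both near-empty graphs directly. For the terms $\delta = 0.1\corr/n$ and $\delta = 0.1\corr\sqrt{p/n}$ I use corruption: writing $\eta := \dtv\!\left(\Bin(n-1,p),\Bin(n-1,p')\right)$, the elementary bound $\eta\le (n-1)\delta$ together with the refined bound $\eta \lesssim \delta\sqrt{n/(p(1-p))}$ shows $\eta\le 0.1\corr$ in the relevant parameter ranges — so there is a constant factor of slack against the corruption budget. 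Each adversary picks a uniformly random set $B$ of $\lfloor\corr n\rfloor$ nodes and resamples those out-neighbourhoods to have a degree law $\mu_0$ (resp.\ $\mu_1$) chosen so that the induced average degree distribution of the corrupted graph, $\corr\mu_b + (1-\corr)\Bin(n-1,p\text{ or }p')$, is the same measure $\pi$ in both worlds, with the correction split between the two sides so that $\dtv(\mu_0,\Bin(n-1,p))$ and $\dtv(\mu_1,\Bin(n-1,p'))$ are each $O(\eta/\corr)$; such $\mu_b$ exist because the only place where the correction would ask for negative mass is a far tail of the binomial carrying negligible probability.

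\textbf{Main obstacle.} The crux is the two corrupted regimes, and specifically bounding the residual total-variation distance $\dtv(\mathcal A_0(\gnp{n}{p}),\mathcal A_1(\gnp{n}{p'}))$ once the average degree distributions have been matched. An \emph{exact} coupling is impossible: no coupling of $\vec G(n,p)$ and $\vec G(n,p')$ can agree off the edges incident to $\le\corr n$ nodes, because on the complement the out-neighbourhoods have genuinely different marginals, so one is stuck with an approximate Le Cam argument. The corrupted degree-tuple in world $b$ is an exchangeable tuple whose empirical type concentrates on $\pi$; matching these \emph{distributions of types} (not just their means) is what makes the two graphs indistinguishable, and the naive estimate — view world $b$ as ``$n$ i.i.d.\ from $\pi$'' conditioned on the count of $\mu_b$-nodes equalling its mean, then pay $\dtv(\mu_b,\Bin(n-1,p))$ per unit of deviation of $\Bin(n,\corr)$ from its mean — loses a factor $\sqrt{\corr n}$ and is too weak once $np\gg 1$. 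Overcoming this requires exploiting the total-variation continuity of the binomial family under perturbing a single batch's law (equivalently, smoothness of the relevant empirical statistics), and this quantitative step is where the bulk of the technical work sits; with it in hand, the residual distance in each regime is driven below a small absolute constant, completing the bound and hence the theorem.
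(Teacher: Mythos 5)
Your reduction to directed graphs and your treatment of the third (uncorrupted) term match the paper, and your total-variation bound $\dtv(\Bin(n-1,p_1),\Bin(n-1,p_2))\lesssim\gamma$ is the right quantitative input. But your construction of the adversary in the two corrupted regimes is where the proof breaks down, and you explicitly flag it yourself: after fixing a set $B$ of size exactly $\lfloor\gamma n\rfloor$ uniformly at random, the out-degree tuple is only \emph{exchangeable}, not i.i.d., and matching the laws of these exchangeable tuples requires the ``distribution of types'' argument you sketch and do not complete. That step is not a routine quantitative detail to be filled in later — it is where the entire construction must be different.

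The paper's key idea, which you miss, is to make the corrupted set random with \emph{independent inclusion}: each node is placed in $B$ independently with probability $0.15\gamma$. With this choice and an adversary that redraws $\deg(i)$ for $i\in B$ i.i.d.\ from a distribution $P$ (and then a uniformly random out-neighbourhood of that size), the out-degree tuple $(d_1,\dots,d_n)$ becomes exactly $n$ i.i.d.\ draws from the single mixture $(1-0.15\gamma)\Bin(n-1,p)+0.15\gamma\,P$. Now the folklore fact — if $\dtv(D_1,D_2)\le\epsilon$ then there exist $Q_1,Q_2$ with $(1-\epsilon)D_1+\epsilon Q_1=(1-\epsilon)D_2+\epsilon Q_2$ — produces $P_1,P_2$ making the two mixtures \emph{identical}, so the two corrupted graph distributions agree exactly on the sufficient statistic and there is no residual total-variation to bound at all. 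The only loss is the small probability that $|B|$ exceeds $\gamma n$, controlled by Markov (giving the final $0.5+0.15=0.65$). Your attempted fixed-cardinality coupling forces you to fight the very obstacle — non-independent degrees, exchangeable but not product — that the independent-inclusion trick is designed to sidestep, and your proposal leaves that fight unfinished.
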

By symmetry, a similar statement holds for $p > 0.5$, with $p$ replaced by $1-p$. 
Combining these two statements gives the lower bound in Theorem~\ref{thm:lower-bound-main}.

To derive our lower bounds, we will consider a directed variant of the Erd\H{o}s-R\'enyi random graph model.

\begin{definition}[Directed Erd\H{o}s-R\'enyi graphs]
 \label{def:bi-gnp}
 The directed Erd\H{o}s-R\'enyi random graph model on $n$ nodes with parameter $p$, denoted as  $\derg (\ns, p)$, is the distribution over directed graphs on $n$ nodes where each edge is present with probability $p$, independently of the other edges. 
\end{definition}

We will show a reduction from $\gamma$-corrupted parameter estimation in the directed Erd\H{o}s-R\'enyi model to the standard one, indicating that the latter is no easier.
We will then prove a lower bound for this setting, which will imply lower bounds for the standard Erd\H{o}s-R\'enyi model.

\begin{lemma}
\label{lem:undirected_to_directed}
Suppose there exists an algorithm for estimating $p$ in $G(n,p)$ under $\corr$-corruptions.
Then there exists an algorithm for estimating $p$ in $\derg(n, p)$ under $\corr$-corruptions with less or equal error.
\end{lemma}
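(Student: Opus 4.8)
The plan is to exhibit a simple randomized reduction that, given a directed graph $H \sim \derg(n,p)$ (possibly with $\corr n$ corrupted nodes), produces an undirected graph $G$ on the same node set whose distribution is exactly $G(n,p)$ when $H$ is uncorrupted, and in which the set of corrupted nodes is (a subset of) the image of the corrupted nodes of $H$. Then any estimator for the undirected problem, composed with this reduction, yields an estimator for the directed problem with no worse error. The natural map is: keep node $i$ of $H$ and use it to define the undirected edges incident to $i$ among the lower-indexed nodes, i.e. for $i < j$, put the undirected edge $\{i,j\}$ in $G$ if and only if the directed edge $(j,i)$ (say) is present in $H$; equivalently, read off each undirected edge $\{i,j\}$ from exactly one of the two directed slots $(i,j),(j,i)$, chosen by a fixed rule such as ``use the arc from the higher-indexed endpoint to the lower-indexed endpoint.''

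First I would verify the distributional claim: when $H \sim \derg(n,p)$, the arcs used are a fixed set of $\binom{n}{2}$ distinct ordered pairs, each present independently with probability $p$, so the resulting undirected graph has each of its $\binom n2$ edges present independently with probability $p$, i.e. it is distributed exactly as $G(n,p)$. Second, I would track corruptions. In the directed problem, an adversary corrupting a set $B$ (with $|B| \le \corr n$) may arbitrarily alter all arcs with at least one endpoint in $B$. Under the reduction, an undirected edge $\{i,j\}$ of $G$ depends only on a single arc, and that arc has at least one endpoint in $B$ whenever $i \in B$ or $j \in B$; hence every undirected edge whose status could have been changed is incident to $B$. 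Therefore the corrupted undirected graph $G$ is exactly a valid $\corr$-corruption (with the same corrupted set $B$) of the clean undirected graph we would have obtained from the clean directed graph. So running the undirected estimator on $G$ returns $\hat p$ with the same guarantee it has against $\corr$-corrupted $G(n,p)$ instances, which proves the lemma. (If the undirected estimator is only guaranteed against omniscient adversaries, note the induced undirected adversary is no stronger than omniscient; if one wants to preserve the oblivious/omniscient distinction, observe the reduction is monotone in adversary power, so an oblivious directed adversary induces an oblivious undirected one, which matters for the lower-bound direction.)

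The only subtlety — and the one place to be careful rather than cavalier — is the bookkeeping on which directed arcs feed which undirected edges: one must fix the rule \emph{before} seeing the graph and ensure it uses each unordered pair's two arcs consistently (exactly one arc per pair, and that arc is ``owned'' by a well-defined endpoint) so that the incidence-to-$B$ property holds and no undirected edge secretly depends on an arc between two good nodes that the adversary could nonetheless have touched. Once the map is pinned down, everything else is immediate: independence and the Bernoulli$(p)$ marginals give the distributional equivalence, and the endpoint bookkeeping gives the corruption-containment. I do not expect any real obstacle here; the lemma is essentially a definitional observation, and the main work in this section lies in the subsequent lower bound for the directed model (which this lemma merely transports to the undirected model).
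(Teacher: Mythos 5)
Your proposal is correct and follows essentially the same route as the paper: fix a deterministic rule that reads each undirected edge $\{i,j\}$ off exactly one of the two arcs between $i$ and $j$ (the paper takes the arc from lower index to higher, you take the reverse, an immaterial choice), note that the $\binom n2$ arcs used are independent $\mathrm{Ber}(p)$, and observe that the image edge $\{i,j\}$ depends only on an arc whose endpoint set is $\{i,j\}$, so the corrupted set $B$ maps to the same corrupted set in the undirected instance. Your extra remarks on fixing the rule in advance and on preserving the oblivious/omniscient distinction are sensible but not needed for the paper's argument.
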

\begin{proof}
We prove this lemma by converting a $\corr$-corrupted graph from $\derg(n,p)$ to a $\corr$-corrupted graph from $G(n,p)$.
Then one can run the algorithm for the undirected setting to obtain an estimate of $p$, which implies the same error guarantees for the directed instance.

Suppose there exists a random directed graph $\derg\sim \derg(n,p)$ which is $\corr$-corrupted by an adversary.
Assume there exists some lexicographic ordering of the nodes (e.g., they are numbered from $1$ to $n$).
We define a corresponding undirected graph $G$ as follows: let there be an edge between nodes $i$ and $j$ in $G$ if there exists an edge from $i$ to $j$ in $DG$ and $i<j$. Sans corruptions, this converts $\derg(n,p)$ into $G(n,p)$ since the edges are still independent and the probability of each edge existing is $p$. Furthermore, when at most $\corr n$ nodes in the original directed graph are modified, at most $\corr n$ nodes are changed in the corresponding undirected graph. 
\end{proof}

For the remainder of this section, we prove an analogue of Theorem~\ref{thm:lower-bound} for directed Erd\H{o}s-R\'enyi graphs.
A subsequent application of Lemma~\ref{lem:undirected_to_directed} will imply Theorem~\ref{thm:lower-bound}.

We start by designing a $\gamma$-oblivious adversary for the directed graph model to prove our lower bound.
For simplicity of the proof, we assume the adversary picks a set $B$ of size $\Bin(n,0.15\gamma)$ to corrupt, by independently picking each node in $[n]$ with probability $0.15\gamma$.
Note that it is possible that the size of the set of corrupted nodes $B$ may exceed $\gamma n$ with small probability. We will address this issue later.

The adversary will only need to corrupt outgoing edges of the nodes in $B$.
More precisely, they will only determine the presence of directed edges $(i,j)$ where $i\in B$ and $j\in [n]$.
The adversary's strategy to corrupt the neighborhood of node $i \in B$ is as follows.
They first choose node $i$'s new out-degree $\deg{(i)}$ independently from some distribution $P$ over $\{0, \dots, n-1\}$.
Then, they select an independent random subset $S_i$ of nodes $[n]\setminus \{i\}$ of size $\deg{(i)}$. Finally, they introduce the directed edge $(i,j)$ for each $j\in S_i$, and remove the directed edge $(i,j)$ for each $j\notin S_i$.
The distribution of the degree of corrupted nodes, $P$, depends on the  parameter $p$ of the Erd\H{o}s-R\'enyi graph and will be specified later.

Given such an adversary, note that the out-degrees $d_1,d_2,\dots,d_n$ are sufficient statistics for estimating $p$.
Observe that the out-degree of any uncorrupted node is distributed as $\Bin(n,p)$, and the out-degree of any corrupted node has distribution $P$. Since each node is corrupted with probability $0.15\gamma$, the out-degree of each node is an i.i.d.\ sample from the mixture distribution $(1-0.15\gamma)\cdot\Bin(n-1,p)+0.15{\gamma}\cdot P$.

Next, we show that for any $p_1\le 1/2$ and $p_2=p_1+0.1\max\Paren{\corr\sqrt{p/n},0.1\corr/n}$ there exist distributions $P_1$ and $P_2$ such that 
\begin{align}\label{eq:labeq}
 (1-0.15\gamma)\cdot\Bin(n-1,p_1)+0.15\gamma \cdot P_1= (1-0.15\gamma)\cdot\Bin(n-1,p_2)+0.15\gamma\cdot P_2.
\end{align}
This will imply that, with the aforementioned adversary, any estimator that distinguishes between the two cases will be correct with probability at most $1/2$.
At this point, we account for the probability that the adversary selects a set $B$ of size $> \corr n$, which is not allowed according to the corruption model. 
By Markov's inequality, this occurs with probability at most $0.15$.
Therefore, even counting such violations as a success at distinguishing the two cases, it still succeeds with probability at most $0.5 + 0.15 = 0.65$.


To prove the existence of $P_1$ and $P_2$ satisfying \eqref{eq:labeq} we use the following folklore fact:
given any two distributions $D_1$ and $D_2$ and $\epsilon>0$, if $d_{\rm TV}\Paren{D_1,D_2}\le \epsilon$, then there exist distributions $Q_1$ and $Q_2$ such that $(1-\epsilon)D_1+\epsilon Q_1 = (1-\epsilon)D_2+\epsilon Q_2$.

Hence, it suffices to show that
\begin{equation}\label{eq:tveqbou}
d_{\rm TV}\Paren{ \Bin(n-1,p_1), \Bin(n-1,p_2)}\le 0.15\gamma.    
\end{equation}
The total variation distance between two binomials can be bounded as~\cite{Roos03a},~\cite[Eq (2.16)]{AdellJ06}.
\begin{align}\label{eq:tv}
d_{\rm TV}\left(\Bin (n',p),\Bin (n',p+x) \right)
&\le \sqrt{\frac e2} \frac{\tau(x)}{(1-\tau(x))^2},
\end{align}
where 
$\tau(x)= x\cdot \sqrt{\frac{n'+2}{2p(1-p)}}$. 
We also use the trivial upper bound $d_{\rm TV}\left(\Bin (n',p),\Bin (n',p+x) \right)\le n'x$.

For the case when $\corr\sqrt{p/n}\ge 0.1\corr/n$, applying the first bound for $x=0.1\corr\sqrt{p/n}$ and $n'=n-1$ we get
\begin{align*}
\tau(x) = 0.1\corr\sqrt{\frac pn}\sqrt{\frac{n+1}{2p(1-p)}}
 \le 0.1 \cdot 1.1 \corr = 0.11\corr.
\end{align*}
For this case, using~\eqref{eq:tv} gives
\begin{align*}
d_{\rm TV}\Paren{ (\Bin(n-1,p_1), \Bin(n-1,p_2)} \le 0.15\corr. 
\end{align*}
For the other case when $\corr\sqrt{p/n}< 0.1\corr/n$, applying the trivial bound gives 
\begin{align*}
d_{\rm TV}\Paren{ (\Bin(n-1,p_1), \Bin(n-1,p_2)}\le 0.1\corr(n-1)/n< 0.1\gamma. 
\end{align*}
This proves~\eqref{eq:tveqbou} and shows the existence of $P_1$ and $P_2$, which completes the proof of the first two terms in the max in Theorem~\ref{thm:lower-bound}.


Finally, we show that the third term in the max in Theorem~\ref{thm:lower-bound} holds even when there is no corruption.
To show this we first note that in absence of corruption the sufficient statistics for estimating $p$ is the total number of edges in the directed graph, which has a distribution $\Bin((n-1)^2,p)$.
Then to show that for $p\le 0.5$ no algorithm can distinguish between between $\derg(\ns, p)$ and $\derg(\ns, p+0.1\sqrt{p}/n)$ with probability $\ge 0.6$ it suffices to show that
$ d_{\rm TV}(\Bin((n-1)^2,p), \Bin((n-1)^2,p+0.1\sqrt{p}/n))<0.2$, which can be verified using~\eqref{eq:tv} for $x= 0.1\sqrt{p}/n$, $n'=(n-1)^2$, \tcr{and any $p<1/2$}.


\bibliographystyle{alpha}
\bibliography{biblio}

\newcommand{\etalchar}[1]{$^{#1}$}
\begin{thebibliography}{DKK{\etalchar{+}}19b}

\bibitem[ABL14]{AwasthiBL14}
Pranjal Awasthi, Maria~Florina Balcan, and Philip~M. Long.
\newblock The power of localization for efficiently learning linear separators
  with noise.
\newblock In {\em Proceedings of the 46th Annual ACM Symposium on the Theory of
  Computing}, STOC '14, pages 449--458, New York, NY, USA, 2014. ACM.

\bibitem[AJ06]{AdellJ06}
Jos{\'e}~A. Adell and Pedro Jodr{\'a}.
\newblock Exact {K}olmogorov and total variation distances between some
  familiar discrete distributions.
\newblock {\em Journal of Inequalities and Applications}, 2006(1):1--8, 2006.

\bibitem[AKS98]{AlonKS98}
Noga Alon, Michael Krivelevich, and Benny Sudakov.
\newblock Finding a large hidden clique in a random graph.
\newblock {\em Random Structures and Algorithms}, 13(3-4):457--466, 1998.

\bibitem[BCS15]{BorgsCS15}
Christian Borgs, Jennifer Chayes, and Adam Smith.
\newblock Private graphon estimation for sparse graphs.
\newblock In {\em Advances in Neural Information Processing Systems 28}, NIPS
  '15, pages 1369--1377. Curran Associates, Inc., 2015.

\bibitem[BCSZ18]{BorgsCSZ18a}
Christian Borgs, Jennifer Chayes, Adam Smith, and Ilias Zadik.
\newblock Revealing network structure, confidentially: Improved rates for
  node-private graphon estimation.
\newblock In {\em Proceedings of the 59th Annual IEEE Symposium on Foundations
  of Computer Science}, FOCS '18, pages 533--543, Washington, DC, USA, 2018.
  IEEE Computer Society.

\bibitem[BDLS17]{BalakrishnanDLS17}
Sivaraman Balakrishnan, Simon~S. Du, Jerry Li, and Aarti Singh.
\newblock Computationally efficient robust sparse estimation in high
  dimensions.
\newblock In {\em Proceedings of the 30th Annual Conference on Learning
  Theory}, COLT '17, pages 169--212, 2017.

\bibitem[Ber06]{Bernholt06}
Thorsten Bernholt.
\newblock Robust estimators are hard to compute.
\newblock Technical report, Technische Universit\"at Dortmund, 2006.

\bibitem[BHK{\etalchar{+}}19]{BarakHKKMP19}
Boaz Barak, Samuel Hopkins, Jonathan Kelner, Pravesh~K Kothari, Ankur Moitra,
  and Aaron Potechin.
\newblock A nearly tight sum-of-squares lower bound for the planted clique
  problem.
\newblock {\em SIAM Journal on Computing}, 48(2):687--735, 2019.

\bibitem[BMR21]{BanksMR21}
Jess Banks, Sidhanth Mohanty, and Prasad Raghavendra.
\newblock Local statistics, semidefinite programming, and community detection.
\newblock In {\em Proceedings of the 32nd Annual ACM-SIAM Symposium on Discrete
  Algorithms}, SODA '21, pages 1298--1316, Philadelphia, PA, USA, 2021. SIAM.

\bibitem[BVH16]{BandeiraVH16}
Afonso~S. Bandeira and Ramon Van~Handel.
\newblock Sharp nonasymptotic bounds on the norm of random matrices with
  independent entries.
\newblock {\em The Annals of Probability}, 44(4):2479--2506, 2016.

\bibitem[CDG19]{ChengDG19}
Yu~Cheng, Ilias Diakonikolas, and Rong Ge.
\newblock High-dimensional robust mean estimation in nearly-linear time.
\newblock In {\em Proceedings of the 30th Annual ACM-SIAM Symposium on Discrete
  Algorithms}, SODA '19, pages 2755--2771, Philadelphia, PA, USA, 2019. SIAM.

\bibitem[CDGS20]{ChengDGS20}
Yu~Cheng, Ilias Diakonikolas, Rong Ge, and Mahdi Soltanolkotabi.
\newblock High-dimensional robust mean estimation via gradient descent.
\newblock In {\em Proceedings of the 37th International Conference on Machine
  Learning}, ICML '20, pages 1768--1778. JMLR, Inc., 2020.

\bibitem[CDGW19]{ChengDGW19}
Yu~Cheng, Ilias Diakonikolas, Rong Ge, and David Woodruff.
\newblock Faster algorithms for high-dimensional robust covariance estimation.
\newblock In {\em Proceedings of the 32nd Annual Conference on Learning
  Theory}, COLT '19, pages 727--757, 2019.

\bibitem[CDKS18]{ChengDKS18}
Yu~Cheng, Ilias Diakonikolas, Daniel~M. Kane, and Alistair Stewart.
\newblock Robust learning of fixed-structure {B}ayesian networks.
\newblock In {\em Advances in Neural Information Processing Systems 31},
  NeurIPS '18, pages 10304--10316. Curran Associates, Inc., 2018.

\bibitem[CL15]{CaiL15}
T.~Tony Cai and Xiaodong Li.
\newblock Robust and computationally feasible community detection in the
  presence of arbitrary outlier nodes.
\newblock {\em The Annals of Statistics}, 43(3):1027--1059, 2015.

\bibitem[CLM20a]{ChenLM20a}
Sitan Chen, Jerry Li, and Ankur Moitra.
\newblock Efficiently learning structured distributions from untrusted batches.
\newblock In {\em Proceedings of the 52nd Annual ACM Symposium on the Theory of
  Computing}, STOC '20, pages 960--973, New York, NY, USA, 2020. ACM.

\bibitem[CLM20b]{ChenLM20b}
Sitan Chen, Jerry Li, and Ankur Moitra.
\newblock Learning structured distributions from untrusted batches: Faster and
  simpler.
\newblock In {\em Advances in Neural Information Processing Systems 33},
  NeurIPS '20. Curran Associates, Inc., 2020.

\bibitem[CLMW11]{CandesLMW11}
Emmanuel~J Cand{\`e}s, Xiaodong Li, Yi~Ma, and John Wright.
\newblock Robust principal component analysis?
\newblock {\em Journal of the ACM}, 58(3):1--37, 2011.

\bibitem[CSPW11]{ChandrasekaranSPW11}
Venkat Chandrasekaran, Sujay Sanghavi, Pablo~A Parrilo, and Alan~S Willsky.
\newblock Rank-sparsity incoherence for matrix decomposition.
\newblock {\em SIAM Journal on Optimization}, 21(2):572--596, 2011.

\bibitem[CSV17]{CharikarSV17}
Moses Charikar, Jacob Steinhardt, and Gregory Valiant.
\newblock Learning from untrusted data.
\newblock In {\em Proceedings of the 49th Annual ACM Symposium on the Theory of
  Computing}, STOC '17, pages 47--60, New York, NY, USA, 2017. ACM.

\bibitem[DHL19]{DongHL19}
Yihe Dong, Samuel Hopkins, and Jerry Li.
\newblock Quantum entropy scoring for fast robust mean estimation and improved
  outlier detection.
\newblock In {\em Advances in Neural Information Processing Systems 32},
  NeurIPS '19, pages 6065--6075. Curran Associates, Inc., 2019.

\bibitem[DK19]{DiakonikolasK19}
Ilias Diakonikolas and Daniel~M. Kane.
\newblock Recent advances in algorithmic high-dimensional robust statistics.
\newblock {\em arXiv preprint arXiv:1911.05911}, 2019.

\bibitem[DKK{\etalchar{+}}16]{DiakonikolasKKLMS16}
Ilias Diakonikolas, Gautam Kamath, Daniel~M. Kane, Jerry Li, Ankur Moitra, and
  Alistair Stewart.
\newblock Robust estimators in high dimensions without the computational
  intractability.
\newblock In {\em Proceedings of the 57th Annual IEEE Symposium on Foundations
  of Computer Science}, FOCS '16, pages 655--664, Washington, DC, USA, 2016.
  IEEE Computer Society.

\bibitem[DKK{\etalchar{+}}17]{DiakonikolasKKLMS17}
Ilias Diakonikolas, Gautam Kamath, Daniel~M. Kane, Jerry Li, Ankur Moitra, and
  Alistair Stewart.
\newblock Being robust (in high dimensions) can be practical.
\newblock In {\em Proceedings of the 34th International Conference on Machine
  Learning}, ICML '17, pages 999--1008. JMLR, Inc., 2017.

\bibitem[DKK{\etalchar{+}}18]{DiakonikolasKKLMS18}
Ilias Diakonikolas, Gautam Kamath, Daniel~M. Kane, Jerry Li, Ankur Moitra, and
  Alistair Stewart.
\newblock Robustly learning a {G}aussian: Getting optimal error, efficiently.
\newblock In {\em Proceedings of the 29th Annual ACM-SIAM Symposium on Discrete
  Algorithms}, SODA '18, Philadelphia, PA, USA, 2018. SIAM.

\bibitem[DKK{\etalchar{+}}19a]{DiakonikolasKKLMS19}
Ilias Diakonikolas, Gautam Kamath, Daniel~M. Kane, Jerry Li, Ankur Moitra, and
  Alistair Stewart.
\newblock Robust estimators in high-dimensions without the computational
  intractability.
\newblock {\em SIAM Journal on Computing}, 48(2):742--864, 2019.

\bibitem[DKK{\etalchar{+}}19b]{DiakonikolasKKLSS19}
Ilias Diakonikolas, Gautam Kamath, Daniel~M. Kane, Jerry Li, Jacob Steinhardt,
  and Alistair Stewart.
\newblock Sever: A robust meta-algorithm for stochastic optimization.
\newblock In {\em Proceedings of the 36th International Conference on Machine
  Learning}, ICML '19, pages 1596--1606. JMLR, Inc., 2019.

\bibitem[DM15]{DeshpandeM15}
Yash Deshpande and Andrea Montanari.
\newblock Improved sum-of-squares lower bounds for hidden clique and hidden
  submatrix problems.
\newblock In {\em Proceedings of the 28th Annual Conference on Learning
  Theory}, COLT '15, pages 523--562, 2015.

\bibitem[ER59]{ErdosR59}
Paul Erd\H{o}s and Alfr\'ed R\'enyi.
\newblock On random graphs i.
\newblock {\em Publicationes Mathematicate}, 6:290--297, 1959.

\bibitem[FGR{\etalchar{+}}17]{FeldmanGRVX17}
Vitaly Feldman, Elena Grigorescu, Lev Reyzin, Santosh~S Vempala, and Ying Xiao.
\newblock Statistical algorithms and a lower bound for detecting planted
  cliques.
\newblock {\em Journal of the ACM}, 64(2):1--37, 2017.

\bibitem[FK03]{FeigeK03}
Uriel Feige and Robert Krauthgamer.
\newblock The probable value of the lov{\'a}sz--schrijver relaxations for
  maximum independent set.
\newblock {\em SIAM Journal on Computing}, 32(2):345--370, 2003.

\bibitem[Gil59]{Gilbert59}
Edgar~N. Gilbert.
\newblock Random graphs.
\newblock {\em The Annals of Mathematical Statistics}, 30(4):1141--1144, 1959.

\bibitem[HKP{\etalchar{+}}18]{HopkinsKPRS18}
Samuel~B Hopkins, Pravesh Kothari, Aaron~Henry Potechin, Prasad Raghavendra,
  and Tselil Schramm.
\newblock On the integrality gap of degree-4 sum of squares for planted clique.
\newblock {\em ACM Transactions on Algorithms}, 14(3):1--31, 2018.

\bibitem[HKZ11]{HsuKZ11}
Daniel Hsu, Sham~M Kakade, and Tong Zhang.
\newblock Robust matrix decomposition with sparse corruptions.
\newblock {\em IEEE Transactions on Information Theory}, 57(11):7221--7234,
  2011.

\bibitem[HL18]{HopkinsL18}
Samuel~B. Hopkins and Jerry Li.
\newblock Mixture models, robustness, and sum of squares proofs.
\newblock In {\em Proceedings of the 50th Annual ACM Symposium on the Theory of
  Computing}, STOC '18, pages 1021--1034, New York, NY, USA, 2018. ACM.

\bibitem[Hub64]{Huber64}
Peter~J. Huber.
\newblock Robust estimation of a location parameter.
\newblock {\em The Annals of Mathematical Statistics}, 35(1):73--101, 1964.

\bibitem[Jer92]{Jerrum92}
Mark Jerrum.
\newblock Large cliques elude the {M}etropolis process.
\newblock {\em Random Structures and Algorithms}, 3(4):347--359, 1992.

\bibitem[JO20a]{JainO20b}
Ayush Jain and Alon Orlitsky.
\newblock A general method for robust learning from batches.
\newblock In {\em Advances in Neural Information Processing Systems 33},
  NeurIPS '20, pages 21775--21785. Curran Associates, Inc., 2020.

\bibitem[JO20b]{JainO20a}
Ayush Jain and Alon Orlitsky.
\newblock Optimal robust learning of discrete distributions from batches.
\newblock In {\em Proceedings of the 37th International Conference on Machine
  Learning}, ICML '20, pages 4651--4660. JMLR, Inc., 2020.

\bibitem[JO21]{JainO21}
Ayush Jain and Alon Orlitsky.
\newblock Robust density estimation from batches: The best things in life are
  (nearly) free.
\newblock In {\em International Conference on Machine Learning}, pages
  4698--4708. PMLR, 2021.

\bibitem[JOR22]{JainOR22}
Ayush Jain, Alon Orlitsky, and Vaishakh Ravindrakumar.
\newblock Robust estimation algorithms don't need to know the corruption level.
\newblock {\em arXiv preprint arXiv:2202.05453}, 2022.

\bibitem[Kar76]{Karp75}
Richard Karp.
\newblock Probabilistic analysis of some combinatorial search problems. traub,
  jf (ed.): Algorithms and complexity: New directions and recent results, 1976.

\bibitem[KB80]{KaasB80}
Rob Kaas and Jan~M Buhrman.
\newblock Mean, median and mode in binomial distributions.
\newblock {\em Statistica Neerlandica}, 34(1):13--18, 1980.

\bibitem[KKM18]{KlivansKM18}
Adam Klivans, Pravesh~K. Kothari, and Raghu Meka.
\newblock Efficient algorithms for outlier-robust regression.
\newblock In {\em Proceedings of the 31st Annual Conference on Learning
  Theory}, COLT '18, pages 1420--1430, 2018.

\bibitem[KLS09]{KlivansLS09}
Adam~R Klivans, Philip~M Long, and Rocco~A Servedio.
\newblock Learning halfspaces with malicious noise.
\newblock {\em Journal of Machine Learning Research}, 10(12):2715--2740, 2009.

\bibitem[KSS18]{KothariSS18}
Pravesh Kothari, Jacob Steinhardt, and David Steurer.
\newblock Robust moment estimation and improved clustering via sum of squares.
\newblock In {\em Proceedings of the 50th Annual ACM Symposium on the Theory of
  Computing}, STOC '18, pages 1035--1046, New York, NY, USA, 2018. ACM.

\bibitem[Ku{\v{c}}95]{Kucera95}
Lud{\v{e}}k Ku{\v{c}}era.
\newblock Expected complexity of graph partitioning problems.
\newblock {\em Discrete Applied Mathematics}, 57(2-3):193--212, 1995.

\bibitem[LRV16]{LaiRV16}
Kevin~A. Lai, Anup~B. Rao, and Santosh Vempala.
\newblock Agnostic estimation of mean and covariance.
\newblock In {\em Proceedings of the 57th Annual IEEE Symposium on Foundations
  of Computer Science}, FOCS '16, pages 665--674, Washington, DC, USA, 2016.
  IEEE Computer Society.

\bibitem[LSLC20]{LiuSLC20}
Liu Liu, Yanyao Shen, Tianyang Li, and Constantine Caramanis.
\newblock High dimensional robust sparse regression.
\newblock In {\em Proceedings of the 23rd International Conference on
  Artificial Intelligence and Statistics}, AISTATS '20, pages 411--421. JMLR,
  Inc., 2020.

\bibitem[LSS{\etalchar{+}}18]{LindgrenSSDK18}
Erik~M. Lindgren, Vatsal Shah, Yanyao Shen, Alexandros~G. Dimakis, and Adam
  Klivans.
\newblock On robust learning of {I}sing models.
\newblock In {\em NeurIPS Workshop on Relational Representation Learning},
  2018.

\bibitem[MM15]{MuscoM15}
Cameron Musco and Christopher Musco.
\newblock Randomized block {K}rylov methods for stronger and faster approximate
  singular value decomposition.
\newblock In {\em Advances in Neural Information Processing Systems 28},
  NeurIPS '15, pages 1396--1404. Curran Associates, Inc., 2015.

\bibitem[MMV16]{MakarychevMV16}
Konstantin Makarychev, Yury Makarychev, and Aravindan Vijayaraghavan.
\newblock Learning communities in the presence of errors.
\newblock In {\em Proceedings of the 29th Annual Conference on Learning
  Theory}, COLT '16, pages 1258--1291, 2016.

\bibitem[MPW15]{MekaPW15}
Raghu Meka, Aaron Potechin, and Avi Wigderson.
\newblock Sum-of-squares lower bounds for planted clique.
\newblock In {\em Proceedings of the 47th Annual ACM Symposium on the Theory of
  Computing}, STOC '15, pages 87--96, New York, NY, USA, 2015. ACM.

\bibitem[MPW16]{MoitraPW16}
Ankur Moitra, William Perry, and Alexander~S Wein.
\newblock How robust are reconstruction thresholds for community detection?
\newblock In {\em Proceedings of the 48th Annual ACM Symposium on the Theory of
  Computing}, STOC '16, pages 828--841, New York, NY, USA, 2016. ACM.

\bibitem[NWS02]{NewmanWS02}
Mark E.~J. Newman, Duncan~J. Watts, and Steven~H. Strogatz.
\newblock Random graph models of social networks.
\newblock {\em Proceedings of the National Academy of Sciences}, 99(suppl
  1):2566--2572, 2002.

\bibitem[PJL21]{PensiaJL21}
Ankit Pensia, Varun Jog, and Po-Ling Loh.
\newblock Robust regression with covariate filtering: Heavy tails and
  adversarial contamination.
\newblock {\em arXiv preprint arXiv:2009.12976}, 2021.

\bibitem[PSBR20a]{PrasadSBR20b}
Adarsh Prasad, Vishwak Srinivasan, Sivaraman Balakrishnan, and Pradeep
  Ravikumar.
\newblock On learning {I}sing models under {H}uber's contamination model.
\newblock In {\em Advances in Neural Information Processing Systems 33},
  NeurIPS '20. Curran Associates, Inc., 2020.

\bibitem[PSBR20b]{PrasadSBR20a}
Adarsh Prasad, Arun~Sai Suggala, Sivaraman Balakrishnan, and Pradeep Ravikumar.
\newblock Robust estimation via robust gradient estimation.
\newblock {\em Journal of the Royal Statistical Society: Series B (Statistical
  Methodology)}, 82(3):601--627, 2020.

\bibitem[QV18]{QiaoV18}
Mingda Qiao and Gregory Valiant.
\newblock Learning discrete distributions from untrusted batches.
\newblock In {\em Proceedings of the 9th Conference on Innovations in
  Theoretical Computer Science}, ITCS '18, pages 47:1--47:20, Dagstuhl,
  Germany, 2018. Schloss Dagstuhl--Leibniz-Zentrum fuer Informatik.

\bibitem[Roo01]{Roos03a}
Bero Roos.
\newblock Binomial approximation to the poisson binomial distribution: The
  krawtchouk expansion.
\newblock {\em Theory of Probability \& Its Applications}, 45(2):258--272,
  2001.

\bibitem[SCV18]{SteinhardtCV18}
Jacob Steinhardt, Moses Charikar, and Gregory Valiant.
\newblock Resilience: A criterion for learning in the presence of arbitrary
  outliers.
\newblock In {\em Proceedings of the 9th Conference on Innovations in
  Theoretical Computer Science}, ITCS '18, pages 45:1--45:21, Dagstuhl,
  Germany, 2018. Schloss Dagstuhl--Leibniz-Zentrum fuer Informatik.

\bibitem[SU19]{SealfonU19}
Adam Sealfon and Jonathan Ullman.
\newblock Efficiently estimating {E}rdos-{R}enyi graphs with node differential
  privacy.
\newblock In {\em Advances in Neural Information Processing Systems 32},
  NeurIPS '19, pages 3765--3775. Curran Associates, Inc., 2019.

\bibitem[Tuk60]{Tukey60}
John~W. Tukey.
\newblock A survey of sampling from contaminated distributions.
\newblock {\em Contributions to Probability and Statistics: Essays in Honor of
  Harold Hotelling}, pages 448--485, 1960.

\bibitem[ZJS19]{ZhuJS19}
Banghua Zhu, Jiantao Jiao, and Jacob Steinhardt.
\newblock Generalized resilience and robust statistics.
\newblock {\em arXiv preprint arXiv:1909.08755}, 2019.

\end{thebibliography}

\appendix
\section{Concentration Inequalities}
\begin{lemma}[Chernoff bound]
\label{lem:chernoff}
Let $X_1,X_2,...,X_t\sim \text{Ber}\,(p)$ be $t$ independent Bernoulli random variables. Then for any $\lambda>0$
\begin{align}
\Pr\left[\big|\sum^t_{i=1} X_i-tp\big|\ge \lambda \right]\le 2\exp\Paren{-\min\Paren{\frac{\lambda^2}{3tp},\frac{\lambda}{3}}}.
\label{eqn:chernoff}    
\end{align}
\end{lemma}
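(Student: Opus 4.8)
The plan is to establish this standard multiplicative Chernoff bound---here written in additive form---via the exponential moment (Chernoff) method. First I would set $X := \sum_{i=1}^t X_i$ and $\mu := tp = \mathbb{E}[X]$, and apply a union bound to split the two-sided event, so that it suffices to bound each of $\Pr[X \ge \mu + \lambda]$ and $\Pr[X \le \mu - \lambda]$ by $\exp(-\min(\lambda^2/(3tp), \lambda/3))$; the factor $2$ in the statement comes precisely from this step.

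For the upper tail I would fix a parameter $s > 0$ and apply Markov's inequality to $e^{sX}$. Using independence together with the elementary inequality $\mathbb{E}[e^{sX_i}] = 1 - p + pe^s \le \exp(p(e^s-1))$, this gives $\mathbb{E}[e^{sX}] \le \exp(\mu(e^s-1))$ and hence
\[
\Pr[X \ge \mu + \lambda] \le \exp\left(\mu(e^s - 1) - s(\mu + \lambda)\right).
\]
Choosing the exact minimizer $s = \ln(1 + \lambda/\mu)$ yields $\Pr[X \ge \mu + \lambda] \le \exp(-\mu\, h(\lambda/\mu))$, where $h(\delta) := (1+\delta)\ln(1+\delta) - \delta$. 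I would then finish by lower bounding $h$: a short one-variable calculus argument gives $h(\delta) \ge \delta^2/3$ for $0 \le \delta \le 1$ and $h(\delta) \ge \delta/3$ for $\delta \ge 1$. Taking $\delta = \lambda/(tp)$, the first regime ($\lambda \le tp$) produces the bound $\exp(-\lambda^2/(3tp))$ and the second ($\lambda \ge tp$) produces $\exp(-\lambda/3)$, so in all cases the upper tail is at most $\exp(-\min(\lambda^2/(3tp), \lambda/3))$.

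For the lower tail the same argument with $s < 0$ in fact gives the stronger bound $\exp(-\lambda^2/(2tp))$ when $\lambda \le tp$, and when $\lambda > tp = \mu$ the event $\{X \le \mu - \lambda\}$ is impossible (since $X \ge 0$), so its probability is $0$; either way the lower tail is at most $\exp(-\min(\lambda^2/(3tp), \lambda/3))$. Adding the two tail bounds yields the claimed inequality. The only mildly delicate point is the two-regime comparison of $h(\delta)$ against $\delta^2/3$ and $\delta/3$, each of which reduces to verifying that an explicit smooth function of $\delta$ is nonnegative on the relevant interval by inspecting its first and second derivatives; the rest is mechanical, and one could equally well cite a standard reference for the multiplicative Chernoff bound and merely translate notation.
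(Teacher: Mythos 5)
Your proof is correct. The paper states Lemma~\ref{lem:chernoff} as a standard fact and does not actually provide a proof in the compiled text (there is only a suppressed remark reducing to an externally cited Chernoff bound via $Y_i = 1 - X_i$), so there is no in-paper argument to compare against. Your self-contained derivation via the exponential-moment method, with the optimizing choice $s = \ln(1+\lambda/\mu)$ and the two-regime estimates $h(\delta) \ge \delta^2/3$ for $\delta \in [0,1]$ and $h(\delta) \ge \delta/3$ for $\delta \ge 1$, is the standard route, and the calculus does check out: for $g(\delta) = h(\delta) - \delta^2/3$ one has $g(0) = g'(0) = 0$, $g'(\delta) = \ln(1+\delta) - 2\delta/3$ is increasing on $[0,1/2]$ and decreasing on $[1/2,1]$ with $g'(1) = \ln 2 - 2/3 > 0$, hence $g' \ge 0$ on $[0,1]$; and for $f(\delta) = h(\delta) - \delta/3$, $f(1) = 2\ln 2 - 4/3 > 0$ and $f'(\delta) = \ln(1+\delta) - 1/3 > 0$ for $\delta \ge 1$. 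The lower-tail case is also handled correctly, including the observation that the event is vacuous once $\lambda > tp$.
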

\ignore{
\begin{proof}
\tcr{
Let $Y_i =1-X_i$.
Then
\[
\Pr\left[\big|\sum^t_{i=1} X_i-tp\big|\ge \lambda \right] = \Pr\left[\big|\sum^t_{i=1} Y_i-t(1-p)\big|\ge \lambda \right].
\]
From the Chernoff bound the term on the right is upper bounded by $2\exp\Paren{-\min\Paren{\frac{\lambda^2}{3tp},\frac{\lambda}{3}}}$ and term on the left is bounded by $2\exp\Paren{-\min\Paren{\frac{\lambda^2}{3t(1-p)},\frac{\lambda}{3}}}$, combining these two bounds proves the Lemma.} 
\end{proof}
}

\section{Proofs for Mean- and Median-Based Algorithms}
\label{simple-algorithms}
In this section we provide the proofs for algorithms based on mean and medians. Throughout this section we assume that $n$ is at least \tcr{$14400$} for computational simplifications. 

\subsection{Upper Bounds for Mean and Median Estimators without Corruptions}
\label{sec:proof-without-corruptions}

\medskip
\noindent\textbf{Mean estimate.} The total number of edges in $ G\sim G(n,p)$ is a Binomial distribution with parameters ${n\choose 2}$ and $p$. Therefore, its expectation and variance are ${n\choose 2}p$ and ${n\choose 2}p(1-p)$. respectively. Thus, $\expectation{\hat p_{\rm mean}(G)}=p$ and $\Var(p_{\rm mean}(G))= p(1-p)/{n\choose 2}\le 4p(1-p)/n^2$.  
By Chebyshev's inequality, 
\[
\probof{\absv{\hat p_{\rm mean}(G)-p}\ge 20 \cdot \frac{\sqrt{p(1-p)}}{n} }\le 0.01.
\]



\noindent\textbf{Median estimate.} We will show that with probability at least $0.995$, the median degree of $G$ is at least $(n-1)p-C$ for some constant $C$.
The main hurdle in showing this is the fact that the node degrees $\deg(i)$ are not independent, which requires a careful analysis.
For $i\in[n]$, let  $ Y_i \coloneqq \id\Paren{\deg(i) \le p(n-1)-121}$. Then, $\sum_i Y_i$  is the number of nodes with degree at most  ${p(n-1)-121}$. 

We establish the following bounds for \tcr{$n\geq14400$}:
\begin{align}
&\expectation{\sum_i Y_i}\le \frac n2-15\sqrt{n}\label{eqn:mean-bound}\\
&\variance{\sum_i Y_i}\le n
\label{eqn:variance-bound}
\end{align}
With these, we can apply Cantelli's inequality to obtain:
\[
\probof{\sum_i Y_i\ge \frac n2}\le \frac{\variance{\sum Y_i}}{\variance{\sum Y_i}+(15\sqrt{n})^2}<0.005.
\]
This shows that with probability at least 0.995 the median degree is at least $ (n-1)p-121$.
By symmetry, with probability at least 0.995 the median degree is at most $(n-1)p+121$. By the union bound, with probability at least 0.99 the error of the median estimate is at most $121/(n-1)$.

We now prove~\eqref{eqn:mean-bound} and~\eqref{eqn:variance-bound} to complete the proof. 

To prove~\eqref{eqn:mean-bound}, note that $\deg(i)\sim \Bin(n-1,p)$ and $\expectation{Y_i} = \Pr[\Bin(n-1,p)\le p(n-1)-121 ]$.

We show that for any $n'$, $\Pr[\Bin(n',p)\le pn'-121 ]\le \frac 12-\frac {15}{\sqrt{n'+1}}$, then~\eqref{eqn:mean-bound} follows from the linearity of expectation. 
If $\Pr(\Bin(n',p)\le pn'-1 )\le \frac 12-\frac {15}{\sqrt{n'+1}}$ then we are done.
We prove for the case when $\Pr(\Bin(n',p)\le pn'-1 )\ge \frac 12-\frac {15}{\sqrt{n'+1}}$.
By Chebyshev's inequality, 
\[
\probof{\Bin(n',p)\le n'p-\sqrt{n}}\le \frac 1{4}.
\]
Then, for $n'\ge 14400 $,
\begin{align*}
\probof{\Bin(n',p)\in [n'p-\sqrt{n'}, pn'-1)} &= \Pr(\Bin(n',p)\le pn'-1 )- \probof{\Bin(n',p)\le n'p-\sqrt{n}}\\
&\ge \frac 12-\frac {15}{\sqrt{n'+1}}-\frac 1{4}\ge \frac18.
\end{align*}
Since the binomial distribution has a unique mode $\ge pn'-1$, then for any $t\le \sqrt{n'}$,
\[
\probof{\Bin(n',p)\in [n'p-t, pn'-1)} \ge \frac{t-1}{\sqrt {n'}-1}\cdot \frac18\ge  \frac{t-1}{\sqrt {n'+1}}\cdot \frac18.
\]
Since the median of $Bin(n',p)$ is $\ge n'p-1,$~\cite{KaasB80}, hence $\Pr[\Bin(n',p)\le pn'-1 ]\le 1/2$. From it subtracting the above equation for $t-1 = 15 \cdot 8=120$, we get $\Pr[\Bin(n',p)\le pn'-121 ]\le \frac 12-\frac {15}{\sqrt{n'+1}}$.

We now prove~\eqref{eqn:variance-bound}. Since $Y_i$'s are identically distributed indicator random variables,
\begin{align}\label{eqn:var-cov}
\variance{\sum_i Y_i} =n\variance{Y_1}+n(n-1)\covariance(Y_1, Y_2)\le \frac n4+n(n-1)\covariance(Y_1, Y_2).
\end{align}
Let $t=(n-1)p-121$, then  $Y_i= \id\Paren{deg(i) \le t}$. Let $Y_{12}$ be the number of edges from node 1 to $[n]\setminus \{2\}$ and $\id\Paren{E_{1,2}}$ be the indicator that edge between $1$ and $2$ is present. Then $Y_{12}\sim \Bin(n-2, p)$. Elementary computations using the observation that $Y_1 = \id\Paren{Y_{12} \le t-1}+ \id\Paren{Y_{12} = t}\cdot (1-\id\Paren{E_{1,2}})$ show that
\[
\covariance(Y_1, Y_2) = p(1-p)\cdot \probof{Y_{12}=t}^2.
\]
From Stirling's approximation at $t=np$, we have $\probof{Y_{12}=t}\le 1/\sqrt{\pi p(1-p)(n-2)}$, and therefore, 
\[
\covariance(Y_1, Y_2) = p(1-p)\cdot \probof{Y_{12}=t}^2\le \frac1{\pi(n-2)}\le \frac 1{3n}
\]
for $n>120^2$. Plugging this in~\eqref{eqn:var-cov} proves~\eqref{eqn:variance-bound}.

\subsection{Lower Bounds for Mean and Median Estimators under Corruptions}
\label{sec:mean-median-with-corruptions}

We will prove the $\gamma/2$ lower bound for the mean and median estimates. 
Consider the following oblivious adversary $\adver$.
\begin{itemize}
\item 
Pick a random subset $B\subset [n]$ of size $\corr n$.
\item
Let $\adver_1(G)$ be the graph obtained by adding all edges $(u,v)$ that have at least one node in $B$ to the graph $G$, and let $\adver_2(G)$ be the graph obtained by removing all edges that have at least one node in $B$ from the graph $G$. 
\item 
Output $\adver_1(G)$ or $\adver_2(G)$ chosen uniformly at random. 
\end{itemize}

Any node in $\adver_1(G)$ has degree at least $\gamma n$ more than the corresponding node in $\adver_2(G)$. Therefore, $\absv{\hat p_{\rm mean} (\adver_1(G)) - \hat p_{\rm mean} (\adver_2(G))}\ge \gamma$, and  $\absv{\hat p_{\rm med} (\adver_1(G)) - \hat p_{\rm med} (\adver_2(G))}\ge \gamma$. Therefore by the triangle inequality, with probability 0.5, 
$\absv{\hat p_{\rm mean} (\adver(G)) -p}\ge \gamma/2$, and  $\absv{\hat p_{\rm med} (\adver(G)) -p}\ge \gamma/2$.

\subsection{Upper Bounds for Prune-then-Mean/Median Algorithms}
\label{sec:prune-ub}
Recall the prune-then mean/median algorithm in Algorithm~\ref{algo:prune}. We remove $c\gamma$ fraction of nodes with the highest and lowest degrees, and then output the median (or mean) of the remaining subgraphs. We restate the performance bound of the algorithm here.

\prunethenmedianub*

\begin{proof}

Let $G\sim G(n,p)$. By Chernoff bound (Lemma~\ref{lem:chernoff}) and the union bound, with probability $\ge 1-1/n^2$, 
\[
\deg(i)\in \Paren{np - 100\sqrt{n{\log n}},np + 100\sqrt{n{\log n}}}
\]
for all nodes $i\in[n]$ of $G$. We condition on this event.

Suppose an adversary converts $G$ into $\adver(G)$ by corrupting nodes in $B\subset [n]$ with $|B|\le \gamma n$. Note that the degree of a node in $\goodnodes=[n]\setminus B$ cannot change by more than $\gamma n$. Therefore, for all nodes $i\in F$ in $\adver(G)$, 
\begin{align}\label{eqn:prthmdub}
\deg(i)\in\Paren{np - 100\sqrt{n{\log n}}-\corr n,np +100\sqrt{n{\log n}}+\corr n}.    
\end{align}
Therefore, at most $\corr n$ nodes do not satisfy~\eqref{eqn:prthmdub}. 
Since we remove $c\corr n$ nodes with the highest and the lowest degrees for $c\ge 1$ all such nodes are pruned. The degree of any node not pruned decreases by at most $2c\corr n$, and after pruning all degrees are in the following interval
\begin{align}\label{eqn:prunmeanub}
    \Paren{np - 100\sqrt{n{\log n}}-(2c+1)\corr n,np + 100\sqrt{n{\log n}}+\corr n}.
\end{align}
We can rewrite this interval as follows
\begin{align*}
    \Paren{n(1-2c\corr)p - 100\sqrt{n{\log n}}+(2cp-2c-1)\corr n,n(1-2c\corr)p + 100\sqrt{n{\log n}}+(2cp+1) \corr n}.
\end{align*}
The prune-then-median estimator outputs one of these degrees (normalized), and its error is at most
\[
 \Paren{\frac{100\sqrt{n\log n} + (4c+1)\corr n}{(1-2c\corr)n}}=\mathcal O\Paren{\sqrt{\frac{\log n}{n}}+c\gamma}.
\]

We now bound the performance of prune-then-mean estimator.
Let $V'\subseteq[n]$ be the nodes that are not pruned, so $|V'|=(1-2c\gamma)n$. Let $\goodnodes^p:=V'\cap F$ and $B^p:=V'\cap \bad$ be the uncorrupted and corrupted nodes that remain after pruning. We have $|B^p|\le |B|\le \gamma n$ and $ |\goodnodes^p|\ge (1-(2c+1)\corr)n$.


There are three types of edges among the nodes in $V'$: (i) $\mathcal{E}_1$: edges whose both end points are good nodes (in $F^p$), (ii) $\mathcal{E}_2$: edges with at least one end point in $B^p$. The mean estimator outputs
\[
\frac{|\mathcal{E}_1|+|\mathcal{E}_2|}{{|V'|\choose 2}}.
\]
Its error is at most
\begin{align*}
\absv{\frac{|\mathcal{E}_1|+|\mathcal{E}_2|}{{|V'|\choose 2}}-p}
&= \absv{\frac{|\mathcal{E}_1|-{{|F^p|\choose 2}}p}{{{|V'|\choose 2}}}}
+\absv{\frac{|\mathcal{E}_2|-(|V'|-|F^p|)((|V'|+|F^p|-1)/2)p}{{{|V'|\choose 2}}}}\\
&=\absv{\frac{|\mathcal{E}_1|-{{|F^p|\choose 2}}p}{{{|V'|\choose 2}}}}
+\absv{\frac{|\mathcal{E}_2|-|B^p|((|V'|+|F^p|-1)/2)p}{{{|V'|\choose 2}}}}
\end{align*}
We will bound each term individually.
Since the subgraph $F^p\times F^p$ between the uncorrupted nodes remains unaffected from the original graph $G$, then Theorem~\ref{th:mainconth} implies that, with probability $\ge 1-3n^{-2}$, 
\[
\absv{\frac{|\mathcal{E}_1|}{{|F^p|\choose 2}}-p}=\mathcal O\Paren{\max\left\{c\corr\sqrt{\frac{\ln(e/c\corr)}n}, \frac{c\gamma\log n}{n},\frac{1}{n} \right\}}\le \mathcal O\Paren{c\corr^2+\frac{\log n}{n}}.
\]
Therefore, 
\[
\absv{|\mathcal{E}_1|- {|F^p|\choose 2}p } = {|F^p|\choose 2} \cdot O\Paren{  c\corr^2+\frac{\log n}{n}} \le {|V'|\choose 2}\cdot  O\Paren{ c\corr^2+\frac{\log n}{n}} .
\]
This shows that the first error term is at most $O\Paren{c\corr^2+\frac{\log n}{n}}$.

We now consider the second term. 
Note that $|n-(|V'|+|F^p|-1)/2|\le 3 c\corr n$. By the triangle inequality,
\begin{align}
\absv{|\mathcal{E}_2|-\frac12 \cdot |B^p|(|V'|+|F^p|-1)p}
&\le \absv{|\mathcal{E}_2|-|B^p|\cdot np}+ 3\gamma n p\cdot|B^p|.\label{eqn:ub-te}
\end{align} 

Let $\deg(i)$ be the degree of  node $i$ after pruning. By the triangle inequality adding and subtracting $\sum_{i\in B^p}\deg(i)$ to the first term we obtain, 
\begin{align*}
\absv{|\mathcal E_2| - |B^p| \cdot np  } \le \absv{|\mathcal E_2| - \sum_{i\in B^p} \deg(i)}+\sum_{i\in B^p}| \deg(i)-np|.
\end{align*}

Now note that $\absv{\mathcal E_2}$ is the number of edges with at least one endpoint in $B^p$. Therefore $\absv{|\mathcal E_2|- \sum_{i\in B^p}\deg(i)}$ is the number of edges inside $B^p\times B^p$ and is at most $|B^p|^2$. For the second term we use the fact that each node in $B^p$ satisfies~\eqref{eqn:prunmeanub}, and $|B^p|\le \corr n$. This gives
\begin{align*}
\absv{|\mathcal E_2| - |B^p| \cdot np  } \le \absv{|\mathcal E_2| - \sum_{i\in B^p} \deg(i)}+\sum_{i\in B^p}| \deg(i)-np| \le |B^p| \cdot \Paren{100\sqrt{n{\log n}}+(2c+2)\corr n}.
\end{align*}

Plugging this along with the fact that $|B^p|\le \corr n$ in~\eqref{eqn:ub-te}, we obtain

\begin{align*}
\absv{|\mathcal{E}_2|-\frac12 \cdot |B^p|(|V'|+|F^p|-1)p}\le  \gamma n \cdot \left(\Paren{100\sqrt{n{\log n}}+(5c+2)\corr n} \right).
\end{align*} 
Since ${|V'|\choose 2} > (n/2)^2$, the second term can be bounded by
\[
\mathcal{O}\Paren{4\corr\cdot \Paren{\sqrt{\frac{\log n}n}+(5c+2)\corr}} = \mathcal{O}\Paren{c\corr^2+\frac{\log n}{n}},
\]
thus proving the result. 

\end{proof}
\subsection{Lower Bounds for Prune-then-Mean/Median Algorithms}
\label{sec:prune-lb}
We will prove the following result showing the tight dependence of the upper bounds on $\corr$.
\prunethenmedianlb*

Let $G\sim G(n,0.5)$. The oblivious adversary $\adver$ operates as follows.
It partitions $G$ into five random sets $B, S_0, S_1, S_2,$ and $S_3$ with $|B|=\corr n$, $|S_0|=c\corr n, |S_1|=c\corr n, |S_2|= \frac 23(1-(2c+1)\corr)n, |S_3|= \frac 13(1-(2c+1)\corr)n$. 

\begin{itemize}
\item 
Remove all edges with at least one endpoint in $B$.
\item 
Remove all edges between $S_0$ and $B$. 
\item 
Add all edges between $S_1$ and $B$.
\item
Connect each node in $B$ to each node in $S_2$ independently with probability $3/5$.
\item
Connect each node in $B$ to each node in $S_3$ independently with probability $3/10$.
\item
Connect nodes within $B$ to each other with probability $3/5$.
\end{itemize}

By the Chernoff bound (Lemma~\ref{lem:chernoff}) and the union bound, we obtain the following bounds on the node degrees in $\adver(G)$.
\begin{lemma}
In $\adver(G)$, the following hold with probability at least $1-3n^{-3}$
\begin{align*}
\deg(u)&=n\left(\frac12 +\frac{\corr}{10}\right)\pm 4\sqrt{n\log n}\quad & \text{for}~~ u\in  B,\\
\deg(u)&=n\cdot\left(\frac 12 -\frac {\corr}2\right)\pm 4\sqrt{n\log n}\quad & \text{for}~~ u\in S_0,\\
\deg(u)&=n\cdot\left(\frac 12 +\frac {\corr}2\right)\pm 4\sqrt{n\log n}\quad & \text{for}~~ u\in S_1,\\
\deg(u)&=n\cdot\left(\frac 12 +\frac {\corr}{10}\right)\pm 4\sqrt{n\log n}\quad & \text{for}~~ u\in S_2,\\
\deg(u)&=n\cdot\left(\frac 12 -\frac {\corr}{5}\right)\pm 4\sqrt{n\log n}\quad & \text{for}~~ u\in S_3.
\end{align*}
\end{lemma}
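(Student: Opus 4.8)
The plan is to compute the expected degree of a node in each of the five parts directly from the adversary's description, and then establish concentration via Chernoff bounds and a union bound.

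First I would compute $\mathbb{E}[\deg(u)]$. Take $u\in B$: it has no edges to $S_0$, all $|S_1|=c\corr n$ edges to $S_1$, each of its $|B|-1$ internal $B$-edges present with probability $3/5$, each of its $|S_2|$ edges to $S_2$ with probability $3/5$, and each of its $|S_3|$ edges to $S_3$ with probability $3/10$. Substituting $|B|=\corr n$, $|S_2|=\tfrac23(1-(2c+1)\corr)n$, $|S_3|=\tfrac13(1-(2c+1)\corr)n$ and simplifying, the $S_2$ and $S_3$ contributions combine to $\tfrac12(1-(2c+1)\corr)n$ (the step $\tfrac25+\tfrac1{10}=\tfrac12$ is the cancellation to watch), and the $c\corr n$ from $S_1$ cancels against the $-c\corr n$ sitting inside $-\tfrac12(2c+1)\corr n$, leaving $\mathbb{E}[\deg(u)]=n(\tfrac12+\tfrac{\corr}{10})-\tfrac35$. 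For $u\in S_0$, the only surviving edges are its original $G(n,0.5)$ edges to the $n-|B|-1$ other uncorrupted nodes, so $\mathbb{E}[\deg(u)]=\tfrac12(n-\corr n-1)=n(\tfrac12-\tfrac{\corr}{2})-\tfrac12$. A node in $S_1$ additionally has all $\corr n$ edges to $B$, giving $n(\tfrac12+\tfrac\corr2)$; a node in $S_2$ has each of the $\corr n$ edges to $B$ with probability $3/5$, adding $\tfrac35\corr n$, total $n(\tfrac12+\tfrac\corr{10})$ (using $\tfrac35-\tfrac12=\tfrac1{10}$); a node in $S_3$ has them with probability $3/10$, total $n(\tfrac12-\tfrac\corr5)$ (using $\tfrac3{10}-\tfrac12=-\tfrac15$). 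In every case the $O(1)$ offset from excluding the self-loop is absorbed into the $4\sqrt{n\log n}$ slack.

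Next I would note that, for every fixed $u$, $\deg(u)$ equals a deterministic constant (absent edges to $B$ for $u\in S_0$; present edges to $B$ for $u\in S_1$) plus at most three mutually independent blocks of i.i.d.\ Bernoulli indicators: the surviving $G(n,0.5)$ edges among uncorrupted nodes (parameter $1/2$) and, when $u\in B\cup S_2\cup S_3$, the adversary's freshly resampled edges (to $B$ with parameter $3/5$ or $3/10$, and for $u\in B$ its internal $B$-edges with parameter $3/5$). This independence is where I would be careful: the adversary is oblivious, so its coin flips are independent of $G$, and for $u\in B$ the edge sets to $B\setminus\{u\}$, to $S_2$, and to $S_3$ are each i.i.d.\ and jointly independent. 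Applying Lemma~\ref{lem:chernoff} to each block with deviation parameter $\tfrac43\sqrt{n\log n}$ — each block has at most $n$ terms and parameter at most $1$, so $\min(\lambda^2/(3tp),\lambda/3)=\Omega(\log n)$ and the failure probability is at most $n^{-4}$ for $n$ large — and summing the at most three block deviations by the triangle inequality gives $|\deg(u)-\mathbb{E}[\deg(u)]|\le 4\sqrt{n\log n}$ off an event of probability at most $3n^{-4}$. A union bound over all $n$ nodes then yields the five estimates simultaneously with probability at least $1-3n^{-3}$. I do not expect a genuine obstacle: the only delicate points are (i) the expected-degree arithmetic, in particular the cancellations producing the $\corr/10$ and $\corr/5$ coefficients, and (ii) making the independence explicit so Chernoff genuinely applies; the hypothesis $\corr>100\sqrt{\log n/n}$ is not needed for this lemma itself but will be invoked afterwards to ensure these $\Theta(\corr n)$-separated degree levels survive the $\sqrt{n\log n}$ fluctuations.
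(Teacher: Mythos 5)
Your expected-degree arithmetic is correct and your overall plan---compute the five expectations from the adversary's description, concentrate each degree via Chernoff, then union bound over nodes---is exactly what the paper's one-line proof (``by the Chernoff bound and the union bound'') intends. The decomposition into independent Bernoulli blocks and the observation that the oblivious adversary's coins are independent of $G$ are also correct.

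However, there is a genuine numerical gap in your block-by-block concentration step. With $\lambda=\tfrac{4}{3}\sqrt{n\log n}$, $t\le n$, and $p\le 1$, Lemma~\ref{lem:chernoff} gives exponent $\min\!\big(\lambda^2/(3tp),\ \lambda/3\big)$, and the dominant $p=\tfrac12$ block has $tp$ as large as roughly $n/2$, yielding $\lambda^2/(3tp)\approx\tfrac{32}{27}\log n$. So the per-block failure probability is only of order $n^{-32/27}$, not $n^{-4}$ as you claim; even an optimal allocation of the $4\sqrt{n\log n}$ budget across the up-to-three blocks (proportional to $\sqrt{t_ip_i}$) falls short for $u\in B$. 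The clean fix is to apply the Chernoff/Bernstein bound once to $\deg(u)$ as a single sum of independent (heterogeneous) Bernoullis with total mean $\mu=\mathbb{E}[\deg(u)]\le n$: with $\lambda=4\sqrt{n\log n}$ the exponent is at least $\lambda^2/(3\mu)\ge\tfrac{16}{3}\log n>4\log n$, so the per-node failure probability is at most $2n^{-16/3}$, and the union bound over $n$ nodes gives the claimed $3n^{-3}$. (As you note, the precise constant in front of $\sqrt{n\log n}$ is immaterial downstream because $\corr>100\sqrt{\log n/n}$ makes the gaps between degree levels much larger than the fluctuations, so a looser deviation bound would also suffice; but the per-block bound as you wrote it does not give $n^{-4}$.)
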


Since $\gamma>100\sqrt{{\log n}/{n}}$, the nodes in $S_0$ are the $c\corr n$ nodes with the lowest degrees and the nodes in $S_1$ are the $c\corr n$ nodes with the highest degrees, and they are pruned by the algorithm. Now since the sets $S_0$ and $S_1$ were randomly chosen ahead of time, in the pruned graph, once again by the Chernoff bound (Lemma~\ref{lem:chernoff}) and the union bound, the following holds with probability at least $1-3n^{-3}$
\begin{align*}
\deg(u)&=n\left(\frac {1-2c\corr}2 +\frac{\corr}{10}\right)\pm 8\sqrt{n\log n}\quad & \text{for}~~ u\in  B,\\
\deg(u)&=n\cdot\left(\frac {1-2c\corr}2 +\frac {\corr}{10}\right)\pm 8\sqrt{n\log n}\quad & \text{for}~~ u\in S_2,\\
\deg(u)&=n\cdot\left(\frac {1-2c\corr}2 -\frac {\corr}{5}\right)\pm 8\sqrt{n\log n}\quad & \text{for}~~ u\in S_3.
\end{align*}
Since we assume that $c\corr<0.25$, there are more nodes in $S_3$ than in $S_2\cup B$ and every node in $S_2\cup B$ had a higher degree than any node in $S_3$. Therefore a node in $S_3$ is chosen as the median node, thus deviating from the median degree by at least $\corr/5\pm 8\sqrt{{\log n}/{n}}>\corr/10$ for $\corr>100\sqrt{{\log n}/{n}}$. This proves the lower bound for prune-then-median estimate.

{Now for the prune-then-mean estimate, note that each edge that remains after pruning is chosen at random, independent of all other edges.} The total expected number of edges after pruning is \tcr{$\frac 12\cdot\frac{n^2(1-2c\corr)^2}{2}+\frac{n^2\corr^2}{20}$} and the variance is at most $n^2/4$. Therefore, the total error of the prune-then-mean estimate is at least $\corr^2/20\pm O(1/n)$, and since $\corr>100\sqrt{{\log n}/{n}}$, the error is at least $\corr^2/40$.

\section{Proof of Theorem~\ref{th:mainconth}}
\label{sec:app-mainconth}
Throughout this proof, let $\beta = \max\Big\{ 16\alpha n\sqrt{{pn}\ln\frac{e}{\alpha }},60\alpha n\ln\frac{e}{\alpha }, 5n\sqrt{p\ln (en)}\Big\}$. \tcr{First fix $\alpha\in[0,1/2]$.
}

We first consider the entire matrix $\tilde A$, namely $S=S'=[n]$. 
\tcr{Recall that the diagonal entries of $\tilde A$ are zero. Then}, note that $\sum_{(i,j)\in [n]\times [n]} (\tilde A_{i,j}-p) = {2\cdot}\sum_{\tcr{(i,j)\in [n]\times [n]:\ i>j}} (\tilde A_{i,j}-p)-np$. Now since all the entries $\tilde A_{ij}$ are independent for $i>j$, we can apply the Chernoff bound (Equation~\eqref{eqn:chernoff})  with $\lambda = \beta$ over these entries and with probability at least $1-n^{-3}$, 
\begin{align}
    \Bigg|\sum_{\tcr{(i,j)\in [n]\times [n]:\ i>j}} (\tilde A_{i,j}-p)\Bigg|\le \beta.
    \label{eq:event1}
\end{align}
\tcr{Since $np\le n\sqrt{p}\le \beta$, then from the above equation we get $|\sum_{(i,j)\in [n]\times [n]} (\tilde A_{i,j}-p)|\le 3\beta$, with probability at least $1-n^{-3}$. 
Note that for $\alpha < 1/n$ the statement only applies to $S = S' = [n]$, and thus this case is handled. In the remaining proof $\alpha\in[1/n,1/2]$.}

Conditioned on the event $|\sum_{(i,j)\in [n]\times [n]} (\tilde A_{i,j}-p)|\le 3\beta$, note that for all $T\subset [n]\times[n]$,   
\begin{align}\label{eq:follows}
\Bigg|\sum_{(i,j)\in T} (\tilde A_{i,j}-p)\Bigg|>6\beta \Rightarrow \Bigg|\sum_{(i,j)\in {T}^c} (\tilde A_{i,j}-p)\Bigg|>3\beta,
\end{align}
where ${T}^c = [n]\times[n]\setminus T$. 
\newhz{In particular, if $T=S\times S'$ with $|S|\ge n-\alpha n$ and $|S'|\ge n-\alpha n$, then $|{T}^c|<2\alpha \newhz{n^2}$ and if $\min\{|S|,|S'|\}\le\alpha n$, then $|T|\le\alpha \newhz{n^2}$. Therefore, for $T = S \times S'$ with $|S|,\,|S'| \in C_\alpha$, either $|T|$ or $|{T}^c|$ is smaller than $2\alpha \newhz{n^2}$. With this in hand, the theorem will follow from the following lemmas.}

\begin{lemma}
Let $T\subset [n]\times[n]$ be a given subset of size at most $2\alpha n^2$, then 
\[
\Pr\Bigg[ \Bigg|\sum_{(i,j)\in T} (\tilde A_{i,j}-p)\Bigg|\ge 3\beta \Bigg]\le 4\exp\Paren{-20\alpha n\ln{e/\alpha}}. 
\]
\label{lem:smallset}
\end{lemma}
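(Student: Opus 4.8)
\medskip
\noindent\textbf{Proof proposal for Lemma~\ref{lem:smallset}.}
The plan is to exploit the independence structure of $\tilde A$: the entries $\tilde A_{i,j}$ with $i>j$ are mutually independent $\mathrm{Ber}(p)$ variables, $\tilde A_{i,j}=\tilde A_{j,i}$, and the diagonal vanishes. Accordingly I would split $T$ by the order of the two coordinates, writing $T_{>}:=\{(i,j)\in T: i>j\}$, $T_{<}:=\{(i,j)\in T: i<j\}$, and $T_{=}:=\{(i,i)\in T\}$, so that
\[
\sum_{(i,j)\in T}(\tilde A_{i,j}-p)=\sum_{(i,j)\in T_{>}}(\tilde A_{i,j}-p)+\sum_{(i,j)\in T_{<}}(\tilde A_{i,j}-p)-|T_{=}|\,p .
\]
The diagonal term is deterministic and negligible: $|T_{=}|\le n$ and $p\le 1$ give $|T_{=}|\,p\le np\le n\sqrt p\le \tfrac15\bigl(5n\sqrt{p\ln(en)}\bigr)\le \beta/5$, where the last step uses the third term in the definition of $\beta$.

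Next I would apply the Chernoff bound (Lemma~\ref{lem:chernoff}) to each of the two random sums separately. Within $T_{>}$ the variables $\{\tilde A_{i,j}\}_{(i,j)\in T_{>}}$ are independent $\mathrm{Ber}(p)$ (distinct ordered pairs in $T_{>}$ correspond to distinct unordered pairs, hence to distinct free entries of $\tilde A$), and likewise within $T_{<}$; moreover $|T_{>}|,|T_{<}|\le|T|\le 2\alpha n^{2}$. Taking deviation $\lambda=\beta$,
\[
\Pr\!\left[\Bigl|\sum_{(i,j)\in T_{>}}(\tilde A_{i,j}-p)\Bigr|\ge \beta\right]\le 2\exp\!\Paren{-\min\!\Paren{\frac{\beta^{2}}{3|T_{>}|p},\ \frac{\beta}{3}}}\le 2\exp\!\Paren{-\min\!\Paren{\frac{\beta^{2}}{6\alpha n^{2} p},\ \frac{\beta}{3}}},
\]
and identically for $T_{<}$. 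The routine calculation is then to check that both terms in the minimum are at least $20\alpha n\ln(e/\alpha)$: since $\beta\ge 16\alpha n\sqrt{pn\ln(e/\alpha)}$ one gets $\beta^{2}/(6\alpha n^{2}p)\ge (256/6)\,\alpha n\ln(e/\alpha)\ge 20\alpha n\ln(e/\alpha)$, and since $\beta\ge 60\alpha n\ln(e/\alpha)$ one gets $\beta/3\ge 20\alpha n\ln(e/\alpha)$. Hence each of the two events has probability at most $2\exp(-20\alpha n\ln(e/\alpha))$.

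Finally I would union-bound: off both events we have $|\sum_{(i,j)\in T_{>}}(\tilde A_{i,j}-p)|<\beta$ and $|\sum_{(i,j)\in T_{<}}(\tilde A_{i,j}-p)|<\beta$, so the displayed decomposition gives $\bigl|\sum_{(i,j)\in T}(\tilde A_{i,j}-p)\bigr|<2\beta+\beta/5<3\beta$; the union bound then yields the claimed $4\exp(-20\alpha n\ln(e/\alpha))$. I expect the only delicate point to be the constant-chasing between the three terms of $\beta$ and the factor $1/3$ that appears in both exponents of the Chernoff bound — this is exactly why $\beta$ is defined with the constants $16$, $60$, and $5$. Conceptually the work is just isolating the dependent symmetric entries and the deterministic diagonal before invoking a standard scalar tail bound; nothing subtler is needed.
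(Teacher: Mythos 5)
Your argument matches the paper's proof essentially line for line: the same decomposition into $T_{>}$, $T_{<}$, and the diagonal, the same observation that each triangular piece is a sum of independent $\mathrm{Ber}(p)$ entries of size at most $2\alpha n^2$, the same application of Lemma~\ref{lem:chernoff} with $\lambda=\beta$, and the same constant-check against the three terms defining $\beta$. If anything, your accounting of the diagonal term ($np\le\beta/5$, giving the clean bound $2\beta+\beta/5<3\beta$) is slightly more careful than the paper's, which writes $2\beta$ in the displayed probability inequality where $3\beta$ is what actually follows — a harmless slip since only $3\beta$ is needed.
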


We now bound the number of subsets of interest.
\begin{lemma}\label{lem:combbound}
For a given $\alpha\in[1/n,1/2]$, the number of sets $S, S'$ with $|S|, |S'|\in \setsize_\alpha$ is at most $4\exp(4\alpha n\ln(e/\alpha))$.
\end{lemma}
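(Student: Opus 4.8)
The plan is a direct counting argument. First I would bound the number of subsets $S\subseteq[n]$ with $|S|\le \alpha n$ using the standard binomial tail estimate $\sum_{k=0}^{m}\binom{n}{k}\le (en/m)^m$, valid for $1\le m\le n$. Taking $m=\lfloor\alpha n\rfloor$, which is at least $1$ since $\alpha\ge 1/n$, and using that $(en/m)^m$ is nondecreasing in $m$ on $[1,n]$ (its logarithm has derivative $\ln(n/m)\ge 0$), this gives
\[
\sum_{k=0}^{\lfloor\alpha n\rfloor}\binom{n}{k}\le \Paren{\frac{e}{\alpha}}^{\alpha n}=\exp\!\Paren{\alpha n\ln\frac{e}{\alpha}}.
\]

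Next, by passing to complements, the number of subsets $S\subseteq[n]$ with $|S|\ge n-\alpha n$ equals the number with $|S|\le\alpha n$. Hence the number of $S$ with $|S|\in\setsize_\alpha=[0,\alpha n]\cup[n-\alpha n,n]$ is at most $2\exp(\alpha n\ln(e/\alpha))$; this over-counts the size-$n/2$ sets when $\alpha=1/2$, which only helps. Since the pair $(S,S')$ ranges over the product of this family with itself, the total count is at most $4\exp(2\alpha n\ln(e/\alpha))$, which is bounded by $4\exp(4\alpha n\ln(e/\alpha))$ as claimed — the factor $4$ in the exponent in the statement is slack we need not use.

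There is essentially no obstacle here. The only points requiring a moment's care are verifying that $\lfloor\alpha n\rfloor\ge 1$ so that the binomial-sum bound applies without a degenerate form, and observing that the stated constant and exponent leave ample room, so that no sharp estimate of $\sum_{k\le\alpha n}\binom{n}{k}$ is needed.
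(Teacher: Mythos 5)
Your argument is correct and follows essentially the same route as the paper: both bound the number of admissible $S$ by $2\sum_{j\le\alpha n}\binom{n}{j}$, square to count pairs, and then control the binomial sum. You use the cleaner estimate $\sum_{k\le m}\binom{n}{k}\le(en/m)^m$ directly (yielding the tighter exponent $2\alpha n\ln(e/\alpha)$), whereas the paper bounds the sum by $(\alpha n+1)\binom{n}{\lfloor\alpha n\rfloor}$ and absorbs the polynomial prefactor into the exponent, arriving at the stated $4\alpha n\ln(e/\alpha)$; this is a cosmetic difference only.
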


\tcr{For a given $\alpha\in[1/{n},1/2]$ and $T=S\times S'$ such that $|S|, |S'|\in \setsize_\alpha$, since either of $T$ or ${T}^c$ have size $\le 2\alpha n^2$, therefore, combining the two lemmas implies that with probability $\ge1-16\exp\Paren{-16\alpha n\ln{e/\alpha}}\ge 1-n^3$,
\[\min\Bigg\{\Bigg|\sum_{(i,j)\in T} (\tilde A_{i,j}-p)\Bigg|,\Bigg|\sum_{(i,j)\in {T}^c} (\tilde A_{i,j}-p)\Bigg|\Bigg\}\le 3\beta.
\]
Then from Equation~\eqref{eq:follows}, with probability $\ge 1-n^3-n^3$, 
$\Big|\sum_{(i,j)\in T} (\tilde A_{i,j}-p)\Big|\le 6\beta$. This completes the proof for a given value of $\alpha$. To extend it to all $\alpha\in [1/{n},1/2]$ first note that it suffices to prove the theorem for $\alpha\in \{\frac1n,\frac2n,...,\frac{\lfloor 0.5 n\rfloor}n\}$, and then upon taking the union bound over these values of $\alpha$ completes the proof.
} 

We now prove Lemma~\ref{lem:smallset}. Note that 
\begin{align}
\sum_{(i,j)\in T}(\tilde A_{i,j}-p) = \sum_{(i,j)\in T : i>j} (\tilde A_{i,j}-p) +\sum_{(i,j)\in T : i<j}(\tilde A_{i,j}-p) - \sum_{(i,i)\in T} p.
\end{align}

\tcr{Then using the triangle inequality, $\{(i,i)\in T\}\le n$ and $np\le \beta$ to disregard the third term (as done before),}
\[
\Pr\Bigg[ \Bigg|\sum_{(i,j)\in T} (\tilde A_{i,j}-p)\Bigg|\ge 2\beta \Bigg]\le 
\Pr\Bigg[ \Bigg|\sum_{(i,j)\in T : i>j} (\tilde A_{i,j}-p)\Bigg|\ge \beta \Bigg]+
\Pr\Bigg[ \Bigg|\sum_{(i,j)\in T : i<j} (\tilde A_{i,j}-p)\Bigg|\ge \beta \Bigg]. 
\]

\tcr{The two events on the right hand side are for sums of independent mean-centered Bernoulli random variables.} We will now apply the Chernoff bound (Equation~\eqref{eqn:chernoff}). Note that for a fixed $\lambda$ the right hand side of~\eqref{eqn:chernoff} is a non-decreasing function of $t$. Further note that $|\{(i,j)\in T : i>j\}|, |\{(i,j)\in T : i<j\}|\le |T|<2\alpha n^2$. Therefore,



\[
\Pr\Bigg[ \Bigg|\sum_{(i,j)\in T : i>j} (\tilde A_{i,j}-p)\Bigg|\ge \beta \Bigg]\le2\exp\Paren{-\min\Paren{\frac{\beta^2}{6\alpha n^2p},\frac{\beta}{3}}}\le 2\exp\Paren{-20\alpha n\ln\frac{e}{\alpha }}. 
\]
\tcr{Similarly,}
\[
\Pr\Bigg[ \Bigg|\sum_{(i,j)\in T : i<j} (\tilde A_{i,j}-p)\Bigg|\ge \beta \Bigg]\le 2\exp\Paren{-20\alpha n\ln\frac{e}{\alpha }}. 
\]
Combining the two bounds completes the proof of Lemma~\ref{lem:smallset}.

\newhz{
We finally prove Lemma~\ref{lem:combbound}.
The number of such sets can be upper bounded by $4\cdot  \Paren{\sum_{j=0}^{\lfloor \alpha\ns \rfloor} \binom{\ns}{j}}^2$, where
\begin{align*}
\sum_{j=0}^{\lfloor \alpha\ns \rfloor} \binom{\ns}{j} & \le (\alpha\ns  +1)\cdot \binom{\ns}{\lfloor \alpha\ns \rfloor} \le (\alpha \ns+1) \cdot \Paren{\frac{e}{\alpha} }^{\alpha\ns} \le e^{\alpha\ns\ln\Paren{\frac{e}{\alpha}}+\ln(\alpha\ns+1)  } \le e^{\alpha\ns\ln\Paren{\frac{e}{\alpha}}+\alpha\ns  } \le e^{2\alpha\ns \ln (e/\alpha)}. \nonumber
\end{align*}
}

\section{Missing Proofs from Section~\ref{sec:upper-bounds}}
\subsection{Proof of Lemma~\ref{lem:whatisinthename}}
\label{sec:whatisinthename}
First note that
\begin{align*}
 0 =  \sum_{i,j\in S}(A_{i,j}-p_S) = \sum_{i,j\in S\cap \goodnodes }(A_{i,j}-p_S)+ \sum_{i,j\in S\cap \goodnodes^c}(A_{i,j}-p_S)+ 2\sum_{i\in S\cap \goodnodes ,\, j\in S\cap \goodnodes^c}(A_{i,j}-p_S).
\end{align*}
Therefore,
\begin{align*}
\Bigg|\sum_{i,j\in S\cap \goodnodes }(A_{i,j}-p_S)\Bigg|\le  \Bigg|\sum_{i,j\in S\cap \goodnodes^c}(A_{i,j}-p_S)\Bigg|+ 2\Bigg|\sum_{i\in S\cap \goodnodes ,\, j\in S\cap \goodnodes^c}(A_{i,j}-p_S)\Bigg|.
\end{align*}
Hence,
\begin{align}
\frac{|\sum_{i,j\in S\cap \goodnodes }(A_{i,j}-p_S)|}{3}\le \max\Bigg\{\Bigg|\sum_{i,j\in S\cap \goodnodes^c}(A_{i,j}-p_S)\Bigg|,\Bigg|\sum_{i\in S\cap \goodnodes ,\, j\in S\cap \goodnodes^c}(A_{i,j}-p_S)\Bigg|\Bigg\}.\label{eqn:step-before-main}
\end{align}

From Lemma~\ref{lem:spsubmat} , Lemma~\ref{lem:specnormlb} and the above inequality, it follows that
\begin{align}
\|(A-p_S)_{S\times S}\|&\ge \max\Big\{\|(A-p_S)_{(S\cap \goodnodes^c)\times(S\cap \goodnodes^c)}\|,\|(A-p_S)_{(S\cap \goodnodes^c)\times(S\cap \goodnodes )}\|\Big\}\label{eqn:step-one-main} \\
&\ge \max\Big\{\frac{|\sum_{i,j\in S\cap \goodnodes^c }(A_{i,j}-p_S)|}{|S\cap \goodnodes^c|}, \frac{|\sum_{i\in S\cap \goodnodes^c,\,j\in S\cap \goodnodes }(A_{i,j}-p_S)|}{\sqrt{|S\cap \goodnodes |\cdot |S\cap \goodnodes^c|}}\Big\}\label{eqn:step-two-main}\\
&\ge \min\Big\{\frac{|\sum_{i,j\in S\cap \goodnodes }(A_{i,j}-p_S)|}{3|S\cap \goodnodes^c|}, \frac{|\sum_{i,j\in S\cap \goodnodes }(A_{i,j}-p_S)|}{3\sqrt{|S\cap \goodnodes |\cdot |S\cap \goodnodes^c|}}\Big\}\label{eqn:step-three-main}\\
&= \frac{|\sum_{i,j\in S\cap \goodnodes }(A_{i,j}-p_S)|}{3\sqrt{|S\cap \goodnodes |\cdot |S\cap \goodnodes^c|}}\cdot  \min\Big\{\sqrt{\frac{|S\cap \goodnodes |}{|S\cap \goodnodes^c|}}, 1\Big\}\notag\\
&= \frac{|p_{S\cap \goodnodes }-p_S\|S\cap \goodnodes |}{3}\cdot \min\Big\{ \frac{|S\cap \goodnodes |}{|S\cap \goodnodes^c|}, \sqrt{\frac{|S\cap \goodnodes |}{|S\cap \goodnodes^c|}}\Big\},\notag
\end{align}
where~\eqref{eqn:step-one-main} is from Lemma~\ref{lem:spsubmat},~\eqref{eqn:step-two-main} follows from Lemma~\ref{lem:specnormlb},~\eqref{eqn:step-three-main} from~\eqref{eqn:step-before-main}.

\subsection{Proof of Theorem~\ref{th:lacon}}
\label{sec:lacon}
Since eigenvalues of symmetric matrices are real, let $v\in \mathbb R^n$ be the normalized top eigenvector of $M$ with eigenvalue $\lambda\in \mathbb R$ such that $M v = \lambda v $ and $\|M\| = |\lambda|$. Since $Mv=\lambda v$, we have $M_{S\times [n]} \,v\ = \lambda v_S$, 
and  
\begin{align}
M_{S\times [n]} \,v\ = M_{S\times S}  \,v_S+ M_{S\times S^c} \,v_{S^c}\label{step:uno}
\end{align}
By Lemma~\ref{lem:triangleineq} on~\eqref{step:uno}, 
\begin{align}
&\norm{M_{S\times [n]} \,v} \le \norm{M_{S\times S}\,v_S}+ \norm{M_{S\times S^c}\,v_{S^c}}\notag\\
\Rightarrow\quad & { |\lambda| \cdot \norm{v_S} \le \rho |\lambda| \cdot \norm {v_S} + |\lambda| \cdot \norm{v_{S^c}}}\label{step:dos}\\
\Rightarrow\quad & (1-\rho)\norm{v_S} \le \norm{v_{S^c}}\notag\\
\Rightarrow\quad & (1-\rho)^2\norm{v_S}^2 \le \norm{v_{S^c}}^2 \notag
\end{align}
where~\eqref{step:dos} uses the assumption of the lemma.
Finally using $\norm{v_S}^2+ \norm{v_{S^c}}^2=1$ gives the bound. 

\subsection{An Approximate Top Eigenvector Suffices}
In this section, we prove a variant of Theorem~\ref{th:lacon}, which works with an approximate rather than an exact top eigenvector.
As discussed in Remark~\ref{rmk:running-time}, this allows us to use approximate top eigenvector procedures, reducing the runtime.

\begin{lemma}\label{lem:comp}
Let $M$ be a nonzero $n\times n$ real matrix such that for some set $S\subset [n]$ we have $\|M_{S\times S}\|\le 0.53\|M\|$.
Let $v\in \mathbb R^{ n}$ be a unit vector such that $\|M v\|\ge  0.99 \norm{M}$, then $\|v_{S^c}\|^2\ge \frac{1}{8}$.
\end{lemma}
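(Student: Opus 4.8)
The plan is to adapt the proof of Theorem~\ref{th:lacon}: there one uses the exact identity $Mv=\lambda v$, whereas here I would use instead that the near-maximizing vector $v$ is an \emph{approximate} eigenvector for the eigenvalue $\pm\|M\|$. As in Theorem~\ref{th:lacon} and in the application (Algorithm~\ref{algo:part}, where $M=(A-p_S)_{S\times S}$ is symmetric), I would treat $M$ as symmetric and use the hypothesis in the Rayleigh-quotient form of Remark~\ref{rmk:running-time}, $|v^\intercal Mv|\ge 0.99\|M\|$ (this implies $\|Mv\|\ge 0.99\|M\|$ by Cauchy--Schwarz, and is what an approximate-eigenvector routine actually returns). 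Replacing $M$ by $-M$ if needed --- which changes neither $\|M\|$, nor $\|M_{S\times S}\|$, nor the conclusion --- I may assume $v^\intercal Mv\ge 0.99\sigma$, writing $\sigma:=\|M\|$.

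The first step would establish that $Mv\approx\sigma v$: since $\|Mv\|\le\|M\|\,\|v\|=\sigma$, expanding the square gives $\|Mv-\sigma v\|^2=\|Mv\|^2-2\sigma\,v^\intercal Mv+\sigma^2\le \sigma^2-2\sigma(0.99\sigma)+\sigma^2=0.02\,\sigma^2$. The second step would split coordinates: writing $a=\|v_S\|$, $b=\|v_{S^c}\|$ (so $a^2+b^2=1$), on the rows in $S$ one has $(Mv)_S=M_{S\times S}v_S+M_{S\times S^c}v_{S^c}$, so by Lemma~\ref{lem:spsubmat} and $\|M_{S\times S}\|\le 0.53\sigma$ we get $\|(Mv)_S\|\le 0.53\,\sigma a+\sigma b$, while restricting the first step to the rows in $S^c$ gives $\|(Mv)_{S^c}\|\le \sigma b+\sqrt{0.02}\,\sigma$. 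The third step combines these: from $\|Mv\|^2=\|(Mv)_S\|^2+\|(Mv)_{S^c}\|^2$ and $\|Mv\|\ge v^\intercal Mv\ge 0.99\sigma$ one obtains the scalar inequality $(0.99)^2\le (0.53a+b)^2+(b+\sqrt{0.02})^2$ with $a^2+b^2=1$, $a,b\ge0$. Substituting $a=\sqrt{1-b^2}$, the right-hand side equals $0.3009+1.7191\,b^2+1.06\,b\sqrt{1-b^2}+2\sqrt{0.02}\,b$, a nonnegative combination of $1,b^2,b\sqrt{1-b^2},b$, each nondecreasing on $[0,1/\sqrt8]$; hence if $b\le 1/\sqrt8$ it is at most its value at $b=1/\sqrt8$, namely $\approx 0.9663<(0.99)^2$, a contradiction. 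Therefore $\|v_{S^c}\|^2=b^2>1/8$.

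The step I expect to need the most care is the constant bookkeeping across these three steps: the approximate-eigenvector slack contributes the extra term $\sqrt{0.02}\,\sigma$, and the margin by which $0.9663$ stays below $(0.99)^2=0.9801$ is only about $0.014$, so the chain of inequalities must be tracked without lossy rounding. (An equivalent route to the first step is via the spectral decomposition of $M$: $v^\intercal Mv\ge 0.99\sigma$ forces the $\ell_2$-mass of $v$ to concentrate, up to $O(\sqrt{\sigma-v^\intercal Mv})$, on eigenvectors with eigenvalue near $\sigma$, yielding $\|Mv-\sigma v\|^2\le 2\sigma(\sigma-v^\intercal Mv)$.) It is also worth noting that the literal statement --- with hypothesis $\|Mv\|\ge0.99\|M\|$ and no symmetry assumption --- is slightly too strong: for $M=\left(\begin{smallmatrix}0&1\\1&0\end{smallmatrix}\right)$ and $v=(1,0)$ one has $\|M_{S\times S}\|=0$ and $\|Mv\|=\|M\|$ but $\|v_{S^c}\|^2=0$, so the Rayleigh-quotient hypothesis of Remark~\ref{rmk:running-time} (and symmetry of $M$) is what should be assumed.
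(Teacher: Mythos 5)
Your proof is correct, and takes a genuinely different route from the paper's; you have also correctly spotted that the lemma is mis-stated. On the latter: the hypothesis $\|Mv\|\ge 0.99\|M\|$ is too weak --- your $2\times2$ counterexample is valid --- and the paper's proof silently upgrades it to the Rayleigh-quotient bound $|v^\intercal Mv|\ge 0.99\|M\|$ at the step where it passes from $|v^\intercal Mv|\le\cdots$ to $0.99\|M\|\le\cdots$; that Rayleigh-quotient form is also what Remark~\ref{rmk:running-time} and the approximate-eigenvector primitive actually deliver, so your reading is the intended one.

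As for the proof itself: the paper starts from the bilinear-form identity $v^\intercal Mv=v_S^\intercal u_S+v_{S^c}^\intercal u_{S^c}$ with $u=Mv$, bounds $\|u_{S^c}\|$ via Pythagoras as $\sqrt{\|M\|^2-\|u_S\|^2}$, and then splits into two cases by whether $\|u_S\|^2\gtrless\tfrac34\|M\|^2$, invoking the row-block decomposition $u_S=M_{S\times S}v_S+M_{S\times S^c}v_{S^c}$ only in the large-$\|u_S\|$ case. You instead derive the explicit approximate-eigenvector bound $\|Mv-\sigma v\|^2\le 0.02\,\sigma^2$ directly from the Rayleigh-quotient hypothesis, use it to control $\|(Mv)_{S^c}\|$, and close with a single scalar inequality in $b=\|v_{S^c}\|$ --- no case split, at the small cost of the WLOG sign normalization $v^\intercal Mv\ge 0$. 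Your constants check out: the right-hand side $(0.53\sqrt{1-b^2}+b)^2+(b+\sqrt{0.02})^2$ is increasing on $[0,1/\sqrt8]$ (since $b\sqrt{1-b^2}$ is increasing for $b\le 1/\sqrt2$) and evaluates to roughly $0.966<(0.99)^2$ at $b=1/\sqrt8$. Both arguments are sound; yours is arguably cleaner in that it isolates the ``$v$ is a near-eigenvector'' content rather than encoding it implicitly through the case analysis on $\|u_S\|$.
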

\begin{proof}
Let $u = Mv$. 
Note that $M_{S\times [n]} \,v = u_S$ and $M_{S^c\times [n]} \,v=u_{S^c}$, therefore
\begin{align*}
    v^T\, M\, v=  v^T\, (M_{S\times [n]}+M_{S^c\times [n]})\, v=  v^T(u_S+u_{S^c})= 
    v_S^T\, u_S + v_{S^c}^T\, u_{S^c}.
\end{align*}
Then by the triangle inequality,
\begin{align}
    &|v^T\, M\, v| \le \norm{v_S}\cdot \norm{u_S} + \norm{v_{S^c}}\cdot \norm{u_{S^c}}\nonumber\\
     \Rightarrow\quad & 0.99 \norm{M}\le  \norm{v_S}\cdot \norm{u_S} + \norm{v_{S^c}}\cdot \norm{u_{S^c}}\notag\\
     \Rightarrow\quad & 0.99 \norm{M}\le \sqrt{1- \norm{v_{S^c}}^2}\cdot \norm{u_S} + \norm{v_{S^c}}\cdot \sqrt{{\norm M}^2-\norm{u_{S}}^2}.\notag
\end{align}
In the last line, we used the fact that $\norm{u}\le \norm M \cdot\norm v =\norm M$ and ${\norm u}^2=\norm{u_S}^2+\norm{u_{S^c}}^2$.
Rearranging this expression, it is easy to show that in the case  $\norm{u_S}^2\le \frac{3\norm{M}^2}{4}$, the inequality is violated if $||v_{S^c}\|^2\le \frac{1}{8}$. Therefore, $\norm{u_S}^2\le \frac{3\norm{M}^2}{4}$ implies  $||v_{S^c}\|^2\ge \frac{1}{8}$.

To prove the lemma, we must handle the remaining case: we show that if  $\norm{u_S}^2\ge \frac{3\norm{M}^2}{4}$, then $\|v_{S^c}\|^2\ge \frac{1}{8}$.

Note that
\begin{align}
M_{S\times [n]} \,v\ =  M_{S\times S}  \,v_S+ M_{S\times S^c} \,v_{S^c}.\notag
\end{align}

Then
\begin{align}
 &\norm{M_{S\times [n]} \,v} \le \norm{M_{S\times S}\,v_S}+ \norm{M_{S\times S^c}\,v_{S^c}}\notag\\
\Rightarrow\quad & {  \norm{u_S} \le 0.53  \norm{M} \cdot \norm {v_S} +  \norm{M}\cdot \norm{v_{S^c}}}\notag\\
\Rightarrow\quad & {  \norm{u_S}^2 \le 2 (0.53^2)  \norm{M}^2 \cdot \norm {v_S}^2 +  2\norm{M}^2\cdot \norm{v_{S^c}}}^2\notag\\
\Rightarrow\quad & {  \norm{u_S}^2 \le 0.5618 \norm{M}^2(1-\norm{v_{S^c}}^2) +2 \norm{M}^2 \cdot \norm{v_{S^c}}}^2\notag\\
\Rightarrow\quad & {  \norm{u_S}^2 \le 0.5618 \norm{M}^2 +1.4382 \norm{M}^2 \cdot \norm{v_{S^c}}}^2.\notag
\end{align}
When $\norm{u_S}^2\ge 3\norm{M}^2/4$, the above equation implies $\norm{v_{S^c}}^2\ge 1/8$, which completes the proof of the lemma.
\end{proof}

\end{document}